\theoremstyle{plain}
\newtheorem{theorem}{Theorem}[section]
\newtheorem{proposition}[theorem]{Proposition}
\newtheorem{lemma}[theorem]{Lemma}
\newtheorem{corollary}[theorem]{Corollary}
\theoremstyle{definition}
\newtheorem{definition}[theorem]{Definition}
\newtheorem{remark}[theorem]{Remark}
\newtheorem{example}[theorem]{Example}
\newtheorem{assumption}[theorem]{Assumption}
\theoremstyle{remark}
\numberwithin{equation}{section}
\theoremstyle{definition}
\theoremstyle{remark}
\newcommand{\one}{\mathbbm{1}}
\DeclareMathOperator*{\Var}{Var}
\DeclareMathOperator*{\esssup}{ess\,sup}
\DeclareMathOperator*{\essinf}{ess\,inf}
\renewcommand{\phi}{\varphi}
\renewcommand{\epsilon}{\varepsilon}
\newcommand{\cE}{\mathcal{E}}
\newcommand{\cF}{\mathcal{F}}
\newcommand{\diff}{\mathrm{d}}
\newcommand{\dd}{\,\mathrm{d}}
\newcommand{\loc}{\mathrm{loc}}
\newcommand{\1}{\mathbf{1}}
\newcommand{\R}{\mathbb{R}}
\newcommand{\N}{\mathbb{N}}
\newcommand{\FF}{\mathbb{F}}
\newcommand{\cM}{\mathcal{M}}
\newcommand{\cU}{\mathcal{U}}
\newcommand{\TB}{\overline{\Theta}}
\newcommand{\vt}{\vartheta}
\newcommand{\sint}{\stackrel{\mbox{\tiny$\mskip-6mu\bullet\mskip-6mu$}}{}}
\newcommand{\Pas}{P\text{-a.s.}}
\newcommand*{\ol}[1]{\bar{#1}}
\newcommand{\eq}[1]{\mathrel{=}_{#1}}
\theoremstyle{definition} 
\newcommand{\thistheoremname}{}
\newtheorem*{genericdefinition}{\thistheoremname}
\title{Existence and uniqueness of quadratic and linear mean--variance equilibria in general semimartingale markets}
\author{Christoph Czichowsky\thanks{London School of Economics and Political Science, Department of Mathematics, Columbia House, Houghton Street, London WC2A 2AE, UK, email: \texttt{c.czichowsky@lse.ac.uk}. Parts of this research were completed while this author was Visiting Professor at ETH Z\"urich. He thanks Martin Schweizer and the university for their hospitality.}
	\and
	Martin Herdegen\thanks{University of Warwick, Department of Statistics, Coventry, CV4 7AL, UK, email:
		\texttt{m.herdegen@warwick.ac.uk}.}	
	\and 	 David Martins \thanks{ETH Zürich, Department of Mathematics, Rämistrasse 101, 8092 Zürich, Switzerland, email:
		\texttt{davidptmartins@gmail.com}}.}
\date{August 5, 2024}
\begin{document}
\maketitle 
%\tableofcontents
\begin{abstract}
We revisit the classical topic of quadratic and linear mean--variance equilibria with both financial and real assets. The novelty of our results is that they are the first allowing for equilibrium prices driven by general semimartingales and hold in discrete as well as continuous time. For agents with quadratic utility functions, we provide necessary and sufficient conditions for the existence and uniqueness of equilibria. We complement our analysis by providing explicit examples showing non-uniqueness or non-existence of equilibria. We then study the more difficult case of linear mean--variance preferences. We first show that under mild assumptions, a linear mean--variance equilibrium corresponds to a quadratic equilibrium (for different preference parameters). We then use this link to study a fixed-point problem that establishes existence (and uniqueness in a suitable class) of linear mean--variance equilibria. Our results rely on fine properties of dynamic mean--variance hedging in general semimartingale markets.
\end{abstract}

\section{Introduction}
The capital asset pricing model (CAPM) of Treynor \cite{treynor:62}, Sharpe \cite{sharpe:64}, Lintner \cite{lintner:65a, lintner:65b} and Mossin \cite{mossin:66} is one of the first general equilibrium models for financial markets. Despite its limitations, it is still one of the cornerstones of modern financial theory and widely used in investment practice; see \cite{levy:11} for a recent overview. 

Notwithstanding the enormous influence of the CAPM, a rigorous study of existence and uniqueness of CAPM equilibria was initiated only around 1990 by Nielsen \cite{nielsen:88, nielsen:90b, nielsen:90a} and Allingham \cite{allingham:91}, with more recent important contributions by Berk \cite{berk:97}, Dana \cite{dana:99}, Hens et al.~\cite{hens:al:02}, Wenzelburger \cite{wenzelburger:10} and Koch-Medina/Wenzelburger \cite{koch-medina:wenzelburger:18}. With the notable exception of \cite{berk:97}, these works focus on preferences described not by expected utility but rather by mean--variance functionals, i.e., functionals of the form $U(\mu,\sigma)$, where $U$ is quasiconcave, increasing in the mean $\mu$ and decreasing in the volatility $\sigma$. This is because without distributional assumptions on the returns, the only utility functions that are compatible with the CAPM (more precisely, the two-fund separation theorem) are quadratic utility functions; see the discussion in Berk \cite[after Corollary 3.2]{berk:97}. For expected quadratic utility, existence and uniqueness of CAPM equilibria in one period (under suitable assumptions) seem to have been regarded as folklore knowledge from early on.

While generalisations of the CAPM to multi-period and continuous time models have been considered from the late 1970s onwards (see e.g.~Stapleton/Subrahmanyam \cite{stapleton:78} and Breeden \cite{breeden:79}), a rigorous study of the existence and uniqueness of CAPM equilibria in multi-period and continuous time models is still missing in the literature --- all the papers cited in the previous paragraph study one-period models. 
The aim of this paper is to close this gap in the literature and to address the important question of existence and uniqueness of CAPM equilibria in multi-period and continuous time models. By exploiting the connection of mean--variance portfolio selection with the classical problem of mean--variance hedging in mathematical finance, see Schweizer \cite{Schweizer:10} for an overview, we provide existence and uniqueness results for CAPM equilibria. Our results apply in both discrete and continuous time and are the first that provide the existence of equilibrium markets for information flows driven by general semimartingales. We also provide examples that illustrate when CAPM equilibria are not unique or fail to exist. Because ``mean--variance efficient portfolios approximately maximise expected utility for a wide range of risk-averse (concave) utility functions'' (see Markowitz \cite{markowitz:2010,markowitz:2014}), our results may also provide approximations to equilibrium prices for other utility functions. 

In most of the extant literature on the existence and uniqueness of CAPM equilibria, it is assumed that the aggregate random endowment is tradable so that the financial market is complete. The completeness assumption implies the existence of a so-called representative agent, which simplifies the task of showing existence and uniqueness of an equilibrium: Every agent chooses to hedge their idiosyncratic risk and own a fraction of the market portfolio. This structure breaks down in the incomplete case as shown by Koch-Medina/Wenzelburger \cite{koch-medina:wenzelburger:18}. They find that in an incomplete market, each agent still hedges their individual endowment as best as possible, even though this cannot be done perfectly. But unlike in the complete case, asset prices are now determined by the so-called extended market portfolio, i.e., the aggregate endowment of all agents, given by the terminal value of the market portfolio together with the unhedgeable parts of the endowments.

Our work extends the work of \cite{koch-medina:wenzelburger:18} to multi-period and continuous time. We consider general semimartingale markets and assume that the agents receive endowments at the terminal time $T$ that are partly unhedgeable.

In the case of of quadratic utility, our proof of existence and uniqueness of an equilibrium is based on the construction of a nonstandard type of representative agent, i.e., a fictional agent that aggregates the preferences and endowments of the other agents. We show that the market clears if and only if the representative agent does not trade, and this observation yields a pricing measure for the equilibrium market. To the best of our knowledge, this is the first result that allows for equilibrium prices driven by general semimartingales in an incomplete setting.

In the case of linear mean--variance preferences, we  combine the above result with a fixed-point approach, showing first that under mild assumptions on the model primitives, a linear mean--variance equilibrium corresponds to a quadratic equilibrium for different risk parameters. Due to the linearity assumption in the mean--variance preferences and with the help of the general theory on mean--variance hedging developed in the early 2000s, see \v Cern\' y/Kallsen \cite{cerny:kallsen:07}, the fixed-point problem can be solved explicitly. The case of general mean--variance preferences is covered in a forthcoming paper.

A challenge in moving from the one-period setup of \cite{koch-medina:wenzelburger:18} to multi-period and continuous time is that for the latter, the space of attainable payoffs is no longer spanned by the terminal dividends but depends on the equilibrium prices. In addition, one needs to impose integrability conditions on the admissible trading strategies. These conditions can preclude the existence of an equilibrium. Indeed, we exhibit an example where the only candidate equilibrium is such that the buy-and-hold strategy for the risky asset is not admissible in an $L^2$-sense. If the asset has positive net supply, by linearity one of the agents must  use an inadmissible strategy; therefore, this cannot be an equilibrium market. We give sufficient conditions to ensure that the required integrability conditions are satisfied, so that this issue is prevented and an equilibrium exists.

The remainder of this article is organised as follows. In Section \ref{sec:model:prelim:results}, we introduce the market model and introduce the notion of an equilibrium. In Section \ref{sec:quadratic}, we study the case of agents with quadratic utility preferences and prove the main existence and uniqueness result in Theorem \ref{thm:equil}.
In Section \ref{sec:lin:mv}, we then consider the case of agents with linear mean--variance preferences and prove existence of equilibria under mild assumptions on the model primitives in Theorem \ref{thm:linear:case:exp:mv:equil}. Key results on mean--variance hedging that are used throughout the main body of the paper can be found in Appendix \ref{app:mean:var:hedging:prelim}. All proofs and some auxiliary results are delegated to Appendix~\ref{sec:proofs}.

\section{Model and preliminary results}\label{sec:model:prelim:results}
\subsection{Financial market}\label{subsec:fin:mark:primitives}
We work on a filtered probability space $(\Omega, \cF, \FF= (\cF_t)_{0 \leq t \leq T}, P)$ with a fixed finite time horizon~$T\in(0,\infty)$. We assume that the filtration $\FF$ satisfies the usual conditions of right-continuity and completeness. Moreover, we assume that $\cF_0$ is $P$-trivial and $\cF_T = \cF$. We denote the space of locally square-integrable martingales starting from zero by $\cM^{2}_{0,\loc}$.

We consider a financial market consisting of $1 + d = 1 + d_1 + d_2$ assets. The first asset, with price process $S^0$, serves as numéraire, and we assume that \mbox{$(S^0_t)_{0 \leq t \leq T} \equiv 1$}.\footnote{The dynamics of the numéraire asset cannot be determined in equilibrium as we do not consider intertemporal consumption.} In addition, we consider $d_1$ \textit{financial assets} with price processes $S^{(1)} = (S^1_t, \ldots, S^{d_1}_t)_{0 \leq t \leq T}$ and $d_2$ \textit{productive assets} (sometimes also referred to as \emph{real assets}) with price processes $S^{(2)} = (S^{d_1+ 1}_t, \ldots, S^{d_1+d_2}_t)_{0 \leq t \leq T}$. The risky assets are collectively expressed as $S := (S^{(1)}, S^{(2)})$. In the following, we likewise use the notation $x = (x^{(1)}, x^{(2)})$ for each $x \in \R^{d_1+d_2}$ with $x^{(i)} \in \R^{d_i}$.

We assume that the price processes $S^{(1)}$ and $S^{(2)}$ are not given a priori but rather determined in equilibrium between $K$ agents trading in the market. 

We assume that the initial value and volatility structure of the financial assets are predetermined and known by the market participants, i.e., for $j \in \{1, \ldots, d_1\}$, we have
\begin{equation}
	S^{j}_t = S^{j}_0  + M^j_t + A^j_t, \quad 0 \leq t \leq T, \label{eq:def:prim:fin:assets:decomp}
\end{equation}
where $S^{j}_0 \in \R$ and the local martingale part $M^j\in \cM^{2}_{0,\loc}$ are given a priori. The predictable finite-variation process 
(which is null at time $0$) is to be determined in equilibrium. 
The financial assets may be regarded as securities constructed by the market participants to enable the trading of short-term risks, determined implicitly by the dynamics of $M^{(1)}$, at appropriate prices set by the market, which are reflected in the dynamics of $A^{(1)}$. 

We assume that each productive asset $j \in \{d_1 +1, \ldots, d_1 + d_2\}$ with price process $S^j$ entitles the owner to a random terminal dividend $D^j \in L^2$ at time $T$ so that $S^j$ satisfies the terminal condition
\begin{equation}
S^{j}_T = D^j;   \label{eq:def:prim:prod:assets:divid}
\end{equation}
the rest of the price process $(S^j)_{0 \leq t < T}$ is to be determined by the market in equilibrium. 

Finally, we assume that each asset $S^j$ is a \emph{local $L^2$-semimartingale} for $j \in \{1, \ldots, d\}$. This means that there exists a localising sequence of stopping times $(\tau_n)_{n \in \N}$ such that each stopped process $S^{j,\tau_n}=(S^j_{\tau_n\wedge t})_{0\leq t\leq T}$ is an $L^2$-semimartingale, in the sense that
\begin{equation}
\label{def:local L2 semimart}
\sup\big\{E[(S^{j,\tau_n}_\sigma)^2]: \sigma \textrm{ stopping time}\big\} < \infty;
\end{equation}
see Delbaen/Schachermayer \cite{delbaen:schachermayer:96} and \v Cern\' y/Kallsen \cite{cerny:kallsen:07} for details. We refer to this property by calling $(1, S)$ a \emph{local $L^2$-market}. Note that by \cite[Lemma A.2]{cerny:kallsen:07}, a stochastic process is a local $L^2$-semimartingale if and only if it is a special semimartingale whose local martingale part is locally square-integrable. In view of \eqref{eq:def:prim:fin:assets:decomp}, this is only a condition on the productive assets as it is automatically satisfied by the financial assets.

\subsection{Admissible strategies}
In order to describe trading in the market $(1, S)$, we need to specify which strategies are considered \emph{admissible}. To this end, we follow a two-step approach as in \v Cern\' y/Kallsen \cite{cerny:kallsen:07}, to which we also refer for further details.

\begin{definition}
Let $(1, S)$ be a local $L^2$-market.  A \emph{simple integrand for $S$} is a process of the form $\vt=\sum_{i=1}^{m-1}\xi_i \one_{\rrbracket \sigma_i,\sigma_{i+1}\rrbracket }$, where $m \in \N$, \mbox{$0\leq\sigma_1\leq\cdots\leq\sigma_m$} are $[0, T]$-valued stopping times, and each $\xi_i$ is a bounded $\cF_{\sigma_i}$-measurable random vector in $\R^d$, such that each stopped process $S^{j, \sigma_m} = (S^j_{\sigma_m \wedge t})_{0 \leq t \leq T}$ is an $L^2$-semimartingale for $j=1, \ldots, d$. We denote by $\Theta_{\rm simple}(S)$ the linear space of all simple integrands for $S$. We also let $L(S)$ be the set of predictable $S$-integrable processes on $[0,T]$; see Jacod/Shiryaev \cite[III.6.17]{jacod:shiryaev:03}.
\end{definition}
\begin{definition}\label{def:adm:strategy:l:2:cerny:kallsen}
Let $(1, S)$ be a local $L^2$-market. Then $\vt\in L(S)$ is called \emph{\mbox{$L^2$-admissible} for $S$} if $\vt \sint S_T \in L^2$ and there exists a sequence $(\vt^n)_{n\in\N}$ in $\Theta_{\rm simple}(S)$ such that
\begin{itemize}
\item[\textbf{1)}]  $\vt^n\sint S_T\overset{L^2}{\longrightarrow} \vt\sint S_T$,
\item[\textbf{2)}]  $\vt^n\sint S_\tau\overset{P}{\longrightarrow} \vt\sint S_\tau$ for all $[0,T]$-valued stopping times $\tau$.
\end{itemize}
Here, $\vt \sint S = (\vt \sint S_t)_{0 \leq t \leq T}$ denotes the stochastic integral $\vt \sint S_t = \int_0^t \vt_r dS_r$ for $t \in [0,T]$. The set of all $L^2$-admissible trading strategies is denoted by $\overline{\Theta}(S)$.
\end{definition}

\begin{remark}
 Our definition of $L^2$-admissible strategies slightly differs from the definition given in \cite{cerny:kallsen:07}, because we stipulate 2) for all stopping times $\tau$ and not only for deterministic times $t\in [0,T]$. However, under \cite[Assumption 2.1]{cerny:kallsen:07}, i.e., if there exists an equivalent local martingale measure (ELMM) $Q$ for $S$ with $\frac{dQ}{dP}\in L^2(P)$, both definitions coincide. 
The reason for this change is that it allows us to use dynamic programming arguments even if there does not exist an ELMM $Q$ for $S$ with $\frac{dQ}{dP}\in L^2(P)$, as in Czichowsky/Schweizer \cite{czichowsky:schweizer:13}.
\end{remark}

We denote by $(e^j_t)_{0 \leq t \leq T} \equiv (0, \ldots, 0, 1, 0, \ldots, 0) \in \R^{d}$ the buy-and-hold strategy of the $j$-th risky asset, where $1$ is in the $j$-th position. In general, this strategy will not be $L^2$-admissible for $ S$ because we only assume $S^j$ to be a local $L^2$-semimartingale. However, if $S^j$ is an $L^2$-semimartingale (i.e., $\sup\{E[(S^j_\sigma)^2]: \sigma \textrm{ stopping time}\} < \infty$ for each $j \in \{1, \ldots, d_1+d_2\}$), then automatically $e^j \in \Theta_{\rm simple}(S) \subseteq \overline \Theta(S)$ by \cite[Corollary 2.9]{cerny:kallsen:07}.

Since different strategies may lead to the same stochastic integral, we naturally identify strategies $\vt \in \TB(S)$ via the following equivalence relation.

\begin{definition}
Let $(1, S)$ be a local $L^2$-market. Then $\vt,  \vt' \in \overline \Theta(S)$ are called \emph{$S$-equivalent} if $\vt \sint S$ and $\vt' \sint S $ are indistinguishable. In this case, we  write $\vt \eq{S} \vt'$; see Czichowsky/Schweizer \cite{czichowsky:schweizer:11} for more details on how to represent different equivalent classes via the so-called projection onto the predictable range.
\end{definition}

We assume that trading is frictionless and that market participants choose self-financing portfolios of the form $(\vt^0_t, \vt_t)_{0 \leq t \leq T}$, where $\vt^0$ is a predictable process, $\vt \in \overline{\Theta}(S)$ and the wealth process $(V_t(\vt^0,\vt))_{0 \leq t \leq T}$ satisfies the self-financing condition 
$$V_t(\vt^0,\vt) = \vt^0_t + \vt_t^\top S_{t} = \vt^0_0 +  \vt_0^\top S_0  +\vt \sint S_{t}, \quad \Pas \textrm{ for all } t \in [0, T].$$
Since we shall include the initial wealth of the agents into their endowments, a self-financing portfolio can be specified in terms of $\vt \in \overline{\Theta}(S)$ alone. We do not impose any
further conditions on the strategies of the agents: for example, they are allowed to borrow or short sell any asset.

\subsection{Agents and preferences}\label{subsec:AaE}
We consider $K \geq 1$ agents trading in the financial market $(1, S)$. We assume that each agent $k \in \{1, \ldots, K\}$ owns a \emph{traded endowment} at time $0$, consisting of $\eta^{k,j} \in \R$ units of asset $j \in \{0, \ldots, d_1+ d_2\}$, and is also entitled to receive a \emph{non-traded endowment} at time $T$, which consists of a random income $\Xi^{k,{\rm n}} \in L^2$.

Because  we assume zero interest rates and there are no liquidity constraints on the portfolios of the agents, it does not matter whether a fixed amount of cash is received via the traded or non-traded endowment. Thus by transferring it to the non-traded endowment, we may assume that each agent starts with zero cash. We also make the assumption that the financial assets are in \emph{zero net supply}. This means that these assets are created and traded internally by the market participants, so that any long and short positions in the financial assets must net out, i.e., $\sum_{k =1}^K \eta^{k,j} = 0$.
 Since the initial prices $S^1_0, \ldots, S^{d_1}_0$ are known a priori, each agent is indifferent between receiving an endowment  consisting of units of the financial assets or the corresponding cash value via the non-traded endowment. We may thus assume without loss of generality that $\eta^{k,j} = 0$ for $j \in \{0, \ldots, d_1\}$ and $k \in \{1, \ldots, K\}$. 

By contrast, the agents may have a nontrivial traded endowment consisting of productive assets. We denote  by $\Xi^{k,{\rm tr}} :=  {\eta^{k, (2)}}^{\top}D^{(2)} \in L^2$ the value of the traded endowment of agent $k$ at time $T$.
The \emph{total endowment} of agent $k$ at time $T$ is then given by
\begin{equation}\label{eq:def:tot:endowment:k}
	\Xi^k = \Xi^{k,{\rm tr}}+ \Xi^{k,{\rm n}}.
\end{equation}

Each agent $k \in \{1, \ldots, K\}$ interacts with the market by buying and selling assets according to an $L^2$-admissible strategy $\vt \in \overline{\Theta}(S)$, which includes the original endowment $\eta^{k,(2)}$ in the productive assets. Since the agent does not own any riskless or financial assets at time $0$, their initial wealth is ${\eta^{k, (2)}}^{\top}\! S^{(2)}_0$, which is the initial value of their traded endowment. Agent $k$ can then generate the wealth process ${\eta^{k, (2)}}^{\top}\! S^{(2)}_0 + \vt \sint S$ by trading with the strategy $\vt $ in a self-financing way. Since they additionally receive the non-traded endowment $\Xi^{k,{\rm n}}$ at time $T$, their terminal wealth at time $T$ is given by 
\begin{equation}\label{eq:wealth agent k:initial:version}
	V^k_T(\vt) = {\eta^{k, (2)}}^{\top}\! S^{(2)}_0 +\vt\sint S_T + \Xi^{k,{\rm n}}.
\end{equation}
 Note that the traded endowment has the terminal value
$$\Xi^{k,{\rm tr}} =  {\eta^{k, (2)}}^{\top}D^{(2)} = {\eta^{k, (2)}}^{\top}\! S^{(2)}_T = {\eta^{k, (2)}}^{\top}\! S^{(2)}_0 + {\eta^{k}} \sint S_T,$$
since $\eta^{k} = (0, \eta^{k, (2)})$ is constant. Thus the terminal wealth of agent $k$ at time $T$ can be equivalently written as
\begin{align}
V^k_T(\vt) = \Xi^{k,{\rm tr}} - {\eta^{k}} \sint S_T + \vt\sint S_T + \Xi^{k,{\rm n}} = (\vt - \eta^k) \sint S_T + \Xi^k \label{eq:wealth agent k}
\end{align}
in terms of the total endowment defined in \eqref{eq:def:tot:endowment:k}.

From the right-hand side of \eqref{eq:wealth agent k}, we see that the total wealth at time $T$ consists of the total endowment as well as any gains or losses generated by the strategy $\vt - \eta^k$. This difference may be interpreted as a discretionary strategy that is employed by the agent in addition to the fixed endowment $\eta^{k}$. The right-hand side of \eqref{eq:wealth agent k:initial:version} gives an alternative interpretation. Instead of keeping the traded endowment, agent $k$ may immediately sell it for the price of ${{\eta}^{k,(2)}}^{\top}\! S^{(2)}_0$ and then trade with the strategy $\vt$; the non-traded endowment $\Xi^{k,{\rm n}}$ is then added to the wealth at time $T$. However, we note that the price $S^{(2)}_0$ is not known a priori, but rather determined in equilibrium. Thus, the right-hand side of \eqref{eq:wealth agent k} is more useful for solving the equilibrium problem, since the total endowment $\Xi^k$ is fixed by the primitives, so that only the stochastic integral term $ (\vt - \eta^k) \sint S_T$ depends on the (unknown) dynamics of $S$.

Each agent $k \in \{1, \ldots, K\}$ has preferences over terminal wealth at time $T$ described by a functional $\cU_k: L^2 \to \R$. Agent $k$ seeks to maximise utility from terminal wealth at time $T$, i.e., to solve the problem
\begin{equation}
	\label{eq:maximisation}
\cU_k\big((\vt - \eta^k) \sint S_T + \Xi^k \big) \longrightarrow \max_{\vt \in \overline{\Theta}(S)}!
\end{equation}
We consider two types of functionals $\cU_k$: a quadratic utility functional, which is the subject of Section \ref{sec:quadratic}, and a linear mean--variance functional, which is considered in Section \ref{sec:lin:mv}. 

The \emph{quadratic utility functional} for agent $k \in \{1, \ldots, K\}$ is given by

\begin{equation}\label{eq:def:gen:quad:util}
	\cU_k^{{\rm Q}}(V) = E[U^{{\rm Q}}_k(V)] = E[2\gamma_k V - V^2], \quad V \in L^2,
\end{equation}
where $U^{{\rm Q}}_k(x) := 2\gamma_k x - x^2$ for some risk-tolerance $\gamma_k \in \R$. The parameter $\gamma_k$ is also sometimes  called the \emph{bliss point} of agent $k$, as it is the optimal wealth that the agent would like to attain at time $T$ in order to maximise their utility.

The \emph{linear mean--variance functional} is given by
\begin{equation}\label{eq:def:lin:mv}
	\cU_k^{{\rm MV}}(V) := U^{{\rm MV}}_k(E[V],\Var[V]) = E[V] - \frac{\Var[V]}{2\lambda_k}, \quad V \in L^2,
\end{equation}
where $U^{{\rm MV}}_k(\mu,\sigma^2) := \mu - \frac{\sigma^2}{2\lambda_k}$ for some risk tolerance $\lambda_k>0$. 

\begin{remark}
(a) Note that  \eqref{eq:def:gen:quad:util} differs slightly from the more standard definition of quadratic utility
\begin{equation*}
	\tilde \cU_k^{{\rm Q}}(V) := E[\tilde U^{{\rm Q}}_k(V)] = E\bigg[V - \frac{V^2}{2\gamma_k}\bigg], \quad V \in L^2,
\end{equation*}
where $\tilde U^{{\rm Q}}_k(x) := x - \frac{1}{2 \gamma_k} x^2$ for $\gamma_k > 0$. Of course, both formulations are equivalent for $\gamma_k > 0$ (which is the economically relevant case). The reason we use  \eqref{eq:def:gen:quad:util} is that it is also well defined in the case $\gamma_k \leq 0$, which will be useful for technical reasons when we link quadratic utility to linear mean--variance preferences. 

(b) Despite their apparent similarity, the maximisation problems induced by the two preference functionals \eqref{eq:def:gen:quad:util} and \eqref{eq:def:lin:mv} are not equivalent. Indeed, the quadratic utility functional can be written as
$$ \cU_k^{{\rm Q}}(V) =  E[U^{{\rm Q}}_k(V)] = E\bigg[V - \frac{V^2}{2 \gamma_k}\bigg] = E[V] - \frac{\Var[V]}{2 \gamma_k} - \frac{E[V]^2}{2 \gamma_k}=\cU_k^{{\rm Q}}(V) - \frac{E[V]^2}{2 \gamma_k},$$
which is not of the form \eqref{eq:def:lin:mv} due to the presence of the additional term.
\end{remark}

\subsection{Equilibrium}
We can now formulate the key notion of an equilibrium market, which we adapt from the classical concept of a Radner equilibrium. We take the primitives $S^{(1)}_0, M^{(1)}, D^{(2)}$, $\eta^{k}$, $\Xi^{k,{\rm n}}$ and $\cU_k$ defined in Sections \ref{subsec:fin:mark:primitives} and \ref{subsec:AaE} as given, where  $\cU_k$ is either $\cU^{\rm Q}_k$ or $\cU_k^{{\rm MV}}$.
\begin{definition}\label{def:equilibrium}
A local $L^2$-market $(1, S^{(1)}, S^{(2)})$ is called an \emph{equilibrium market} if it satisfies 
the following conditions:
	\begin{enumerate}[label={\textbf{\arabic*)}}]
		\item For each agent $k \in \{1, \ldots, K\}$, the maximisation problem \eqref{eq:maximisation} has a solution $\hat \vt^k \in \overline{\Theta}(S)$ that is unique up to $S$-equivalence.
		\item The market clears, i.e., for $t \in [0, T]$,
		\begin{equation}\label{eq:market clearing}
			\sum_{k=1}^K \hat \vt^{k,j}_t \eq{S} \bar \eta^j :=
			\begin{cases}
				0, &\text{if } j \in \{1, \ldots, d_1\}, \\
				\sum_{k=1}^K\eta^{k,j}, &\text{if } j \in \{d_1 + 1, \ldots, d_1 + d_2\}.
			\end{cases} 
		\end{equation}
		\item $e^j \in \overline{\Theta}(S)$ for $j \in \{d_1+1, \ldots, d_1 + d_2\}$, i.e., the buy-and-hold strategies of the productive assets are $L^2$-admissible.
	\end{enumerate}
	
	If $(1, S^{(1)}, S^{(2)})$ is an equilibrium market with respect to the mean--variance functionals defined in \eqref{eq:def:gen:quad:util} or \eqref{eq:def:lin:mv} for some parameters $(\gamma_k)_{k=1}^K$ or $(\lambda_k)_{k=1}^K$, we say that it is a \emph{quadratic equilibrium market} or \emph{mean--variance equilibrium market}, respectively.
\end{definition}

\begin{remark}
The only slightly non-standard requirement in Definition \ref{def:equilibrium} is 3). It ensures that each buy-and-hold strategy $\eta^k$ is $L^2$-admissible, which is a natural requirement since the agents should be allowed to simply hold their respective traded endowments. Mathematically, it also ensures that $\hat \vt^k \in \TB(S)$ if and only if $\hat \vt^k - \eta^k \in \TB(S)$, which will allow us to find the optimal strategies $\hat \vt^k$ by first solving for $\hat \vt^k - \eta^k$; recall also the discussion below \eqref{eq:wealth agent k}.
\end{remark}

In the remainder of the paper we seek to find equilibrium markets for quadratic and mean--variance preferences. More precisely, we look for conditions on the primitives that ensure the existence and uniqueness of a corresponding equilibrium market, and we seek to characterise that market. We shall study quadratic equilibria in Section \ref{sec:quadratic}. This in turn will allow us to obtain results on mean--variance equilibria in Section \ref{sec:lin:mv}. In both cases, we start by studying the individual optimisation problems of the agents with respect to a (hypothetical) price process $S$, and then proceed to determining the markets $(1,S)$ that lead to equilibrium.

\section{Quadratic utility}\label{sec:quadratic}

In this section, we consider the situation of quadratic utility, i.e., each agent $k$ solves the problem
\begin{align}
	\cU^{{\rm Q}}_k\big(V^k_T(\vt)\big) = E\big[2\gamma_k\big((\vt - \eta^k) \sint S_T + \Xi^k\big) - \big((\vt - \eta^k) \sint S_T + \Xi^k \big)^2\big] \longrightarrow \max_{\vt \in \overline{\Theta}(S)}! \label{eq:def:gen:quad:util:indiv:prob}
\end{align}
As customary in the equilibrium literature, we first study the individual problem \eqref{eq:def:gen:quad:util:indiv:prob} for a given local $L^2$-market $(1,S)$ and then solve for equilibrium.

\subsection{Individual optimality}
\label{sec:ind:opt}
Throughout this section, we fix a local $L^2$-market $(1,S)$ that satisfies \eqref{eq:def:prim:fin:assets:decomp} and \eqref{eq:def:prim:prod:assets:divid}.

In order to study the maximisation problem \eqref{eq:def:gen:quad:util:indiv:prob}, we note that it is closely linked to the so-called mean--variance hedging problems for a payoff $H\in L^2$:  
\begin{itemize}
\item The \emph{mean--variance hedging}
(MVH) problem is given by
\begin{equation}
	\label{eq:MVH}
	E\big[(\vt\sint S_T - H)^2\big] \longrightarrow \min_{\vt \in \overline{\Theta}(S)}!
\end{equation}
\item The \emph{extended mean--variance hedging} (exMVH) problem is given by
\begin{equation}
	\label{eq:MVHex}
	E\big[(c + \vt\sint S_T - H)^2\big] \longrightarrow  \min_{(c,\vt) \in \R\times \overline{\Theta}(S)}!
\end{equation}
\end{itemize}
We say that \eqref{eq:MVH} has a unique solution if $\vt^1 \eq{S} \vt^2$ for any two solutions $\vt^1, \vt^2 \in \TB(S)$. 
Similarly, we say that \eqref{eq:MVHex} has a unique solution if $c_1 = c_2$ and $\vt^1 \eq{S} \vt^2$ for any two solutions $(c_1,\vt^1), (c_2,\vt^2) \in \R \times \TB(S)$. A sufficient condition for the existence and uniqueness of solutions to \eqref{eq:MVH} and \eqref{eq:MVHex} is given in Proposition \ref{prop:LOP}. We refer to Schweizer \cite{Schweizer:10} for a recent overview of mean--variance hedging.

The following result shows that the quadratic utility problem \eqref{eq:def:gen:quad:util:indiv:prob} is equivalent to a MVH
 problem \eqref{eq:MVH}. We recall that all proofs are given in Appendix \ref{sec:proofs}.

\begin{lemma}
	\label{lem:ind:opt}
Let  $(1,S)$ be a local $L^2$-market and assume that $\eta^k \in \overline{\Theta}(S)$. Then the following are equivalent:
\begin{enumerate}
\item The quadratic utility problem  \eqref{eq:def:gen:quad:util:indiv:prob} has a unique solution $\hat \vt^k \in \overline{\Theta}(S)$.
\item  For $H^k := \gamma_k - \Xi^k$, the MVH problem
\begin{equation}
	\label{eq:lem:ind:mod}
	E[(\vt\sint S_T -  H^k )^2] \to \min_{\vt \in \overline{\Theta}(S)}!
\end{equation}
has a unique solution $\vt^{\rm MVH}(H^k) \in \overline{\Theta}(S)$.
\end{enumerate}
If either statement holds, then $\hat \vt^k \eq{S} \eta^k + \vt^{\rm MVH}(H^k)$.
\end{lemma}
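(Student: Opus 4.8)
The plan is to reduce the quadratic utility maximisation to the mean--variance hedging problem \eqref{eq:lem:ind:mod} by completing the square in the objective and then performing an affine change of variables in the strategy, using the hypothesis $\eta^k \in \TB(S)$ together with the linearity of $\TB(S)$.

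First I would complete the square. Writing $W := (\vt - \eta^k)\sint S_T + \Xi^k$ for the terminal wealth, the pointwise identity $2\gamma_k W - W^2 = \gamma_k^2 - (W - \gamma_k)^2$ gives
$$\cU^{{\rm Q}}_k\big(V^k_T(\vt)\big) = \gamma_k^2 - E\big[(W - \gamma_k)^2\big].$$
Hence maximising $\cU^{{\rm Q}}_k(V^k_T(\vt))$ over $\vt \in \TB(S)$ is equivalent, with exactly the same set of optimisers, to minimising $E[(W - \gamma_k)^2]$ over $\TB(S)$. Here one should record that the functional is well defined and real valued: since $\eta^k \in \TB(S)$ and $\vt \in \TB(S)$ force $\eta^k \sint S_T, \vt \sint S_T \in L^2$, and $\Xi^k \in L^2$, we have $W - \gamma_k \in L^2$.

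Second, I would rewrite the objective via the shifted strategy $\psi := \vt - \eta^k$. Because $W - \gamma_k = \psi \sint S_T + \Xi^k - \gamma_k = \psi \sint S_T - H^k$ with $H^k = \gamma_k - \Xi^k$, we obtain
$$E\big[(W - \gamma_k)^2\big] = E\big[(\psi \sint S_T - H^k)^2\big],$$
which is precisely the objective of \eqref{eq:lem:ind:mod}. The structural point is that $\vt \mapsto \psi = \vt - \eta^k$ is a bijection of $\TB(S)$ onto itself: since $\TB(S)$ is a linear space and $\eta^k \in \TB(S)$, one has $\vt \in \TB(S) \iff \vt - \eta^k \in \TB(S)$, with inverse $\psi \mapsto \psi + \eta^k$. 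Thus the feasible set of the quadratic problem is carried bijectively onto that of the MVH problem, with objective values differing only by the constant $\gamma_k^2$ and a sign.

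Third, I would verify that this change of variables respects $S$-equivalence, which is what makes the two uniqueness notions correspond. Since $(\vt - \eta^k)\sint S = \vt \sint S - \eta^k \sint S$, indistinguishability of the integrals is preserved and reflected: $\vt^1 \eq{S} \vt^2 \iff \vt^1 - \eta^k \eq{S} \vt^2 - \eta^k$. Combining the three steps, $\vt$ maximises \eqref{eq:def:gen:quad:util:indiv:prob} if and only if $\psi = \vt - \eta^k$ minimises \eqref{eq:lem:ind:mod}, and the optimisers are in $\eq{S}$-preserving bijection; hence one problem has a solution unique up to $\eq{S}$ exactly when the other does, giving the equivalence of (1) and (2), while setting $\hat\vt^k = \eta^k + \vt^{\rm MVH}(H^k)$ yields the final relation. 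I do not anticipate a genuine obstacle, as the argument is a completed square followed by a translation of the feasible set by an element of a linear space; the only points needing care are the finiteness of the functional and the compatibility of the shift with $\eq{S}$, which ensures that ``unique up to $\eq{S}$'' transfers correctly between the two formulations.
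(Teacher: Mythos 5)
Your proposal is correct and follows essentially the same route as the paper's proof: completing the square to rewrite $U^{\rm Q}_k(x) = \gamma_k^2 - (x-\gamma_k)^2$ and then translating the feasible set by $\eta^k$, using that $\overline{\Theta}(S)$ is a vector space containing $\eta^k$. Your additional remarks on finiteness of the functional and on the compatibility of the shift with $S$-equivalence are points the paper leaves implicit, but the argument is the same.
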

In the above result, $H^k = \gamma_k - \Xi^k$ may be interpreted as the the additional wealth that agent $k$ would like to obtain in order to reach the bliss point $\gamma_k$.

We now use part (b) of Lemma \ref{lem:ind:opt} to deduce a more explicit decomposition for the optimal strategy $\hat \vt^k$ of agent $k$ into a hedging and pure investment part, where the \emph{hedging problem} of agent $k$ is the extended MVH problem
\begin{equation}
		\label{eq:hedging:ind agent}
		E\big[\big(c + \vt\sint S_T -  \Xi^k \big)^2\big] \to \min_{c \in \R, \vt \in \overline{\Theta}(S)}!
	\end{equation}
and the \emph{pure investment problem} is the MVH problem
\begin{equation}
	\label{eq:pure:inv:sec:quad}
	E[(\vt\sint S_T - 1)^2] \to \min_{\vt \in \overline{\Theta}(S)}!
\end{equation}
We assume that $E[(\vt^{{\rm MVH}}(1)\sint S_T - 1)^2]>0$, or equivalently $\vt^{{\rm MVH}}(1)\sint S_T \not \equiv 1$, which can be seen as a weak variant of the condition of `the law of one price' in \v Cerny/Czichowsky \cite{cerny:czichowsky:22}, that is, a weak no-free-lunch condition on $S$.  Under that assumption,  we obtain the following decomposition.
\begin{proposition}\label{prop:ind:opt:decomp}
	Let  $(1,S)$ be a local $L^2$-market with $\eta^k \in \overline{\Theta}(S)$. Suppose that there is a unique solution $\vt^{{\rm MVH}}(1)$ to the pure investment problem \eqref{eq:pure:inv:sec:quad} and $E[(\vt^{{\rm MVH}}(1)\sint S_T - 1)^2]>0$. Then there is a unique solution $\vt^{\rm MVH}(H^k) \in \overline{\Theta}(S)$ to the MVH problem \eqref{eq:lem:ind:mod} if and only if there is a unique solution $(c_k, \vt^{\rm ex}(\Xi^k))$ to the exMVH problem \eqref{eq:MVHex} with $H=\Xi^k$. In this case $\hat \vt^k$ can be decomposed as
	\begin{equation}\label{eq:mvh:prob:hk:sol:decomp}
		\hat \vt^k \eq{S} \eta^k - \vt^{\rm ex}(\Xi^k) + (\gamma_k-c_k) \vt^{{\rm MVH}}(1).
	\end{equation}

\end{proposition}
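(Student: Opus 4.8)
The plan is to recast both hedging problems as orthogonal-projection problems in the Hilbert space $L^2 = L^2(P)$ and to exploit linearity. Write $G := \{\vt \sint S_T : \vt \in \overline{\Theta}(S)\}$ for the linear space of attainable terminal gains, let $\overline G$ be its $L^2$-closure with orthogonal projection $\Pi$, and recall from Lemma \ref{lem:ind:opt} that $\hat\vt^k \eq{S} \eta^k + \vt^{\rm MVH}(H^k)$ with $H^k = \gamma_k\cdot 1 - \Xi^k$, so it suffices to analyse $\vt^{\rm MVH}(H^k)$. For any $H \in L^2$, problem \eqref{eq:MVH} is to minimise $\|H - g\|_{L^2}$ over $g \in G$; since $\inf_{g\in G}\|H-g\| = \|H - \Pi H\|$ and $\Pi H$ is the unique nearest point of $\overline G$, the problem is solvable if and only if $\Pi H \in G$, and then $\vt\sint S_T = \Pi H$ for every solution. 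By hypothesis \eqref{eq:pure:inv:sec:quad} is solvable, so $g_1 := \vt^{\rm MVH}(1)\sint S_T = \Pi 1 \in G$; put $r := 1 - g_1$. Then $r \in \overline G^{\perp}$ and $\|r\|^2 = E[(\vt^{\rm MVH}(1)\sint S_T - 1)^2] > 0$, so $r \neq 0$ and $1 \notin \overline G$; using $1 = r + g_1$ one obtains the orthogonal splitting $\spann\{1\} + G = \spann\{r\}\oplus G$, with closure $\spann\{r\}\oplus\overline G$.

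Next I establish the equivalence of solvability. Because $\Pi$ is linear and $g_1 = \Pi 1 \in G$, we have $\Pi H^k = \gamma_k g_1 - \Pi\Xi^k$, which lies in $G$ if and only if $\Pi\Xi^k \in G$; hence MVH$(H^k)$ is solvable iff MVH$(\Xi^k)$ is solvable. For the exMVH problem \eqref{eq:MVHex} with $H = \Xi^k$, orthogonality shows that the nearest point of $\spann\{r\}\oplus\overline G$ to $\Xi^k$ is $\frac{E[\Xi^k r]}{\|r\|^2}\,r + \Pi\Xi^k$; this is attainable within $\spann\{1\} + G = \spann\{r\}\oplus G$ iff $\Pi\Xi^k \in G$, since the $\spann\{r\}$-component is always attainable and the decomposition is unique as $r\neq 0$. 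Thus exMVH$(\Xi^k)$ is solvable iff $\Pi\Xi^k \in G$ iff MVH$(\Xi^k)$ is solvable iff MVH$(H^k)$ is solvable, which is the asserted equivalence.

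Finally I read off the decomposition. When the solutions exist, matching the exMVH value $c_k + \vt^{\rm ex}(\Xi^k)\sint S_T$ against $\frac{E[\Xi^k r]}{\|r\|^2}\,r + \Pi\Xi^k$ and substituting $r = 1 - g_1$ gives, by uniqueness of the decomposition $c\cdot 1 + g$ with $c\in\R$, $g\in G$ (valid since $1\notin\overline G$), that $c_k = \frac{E[\Xi^k r]}{\|r\|^2}$ and $\vt^{\rm ex}(\Xi^k)\sint S_T = \Pi\Xi^k - c_k g_1$. Inserting $\Pi\Xi^k = \vt^{\rm ex}(\Xi^k)\sint S_T + c_k g_1$ into $\vt^{\rm MVH}(H^k)\sint S_T = \Pi H^k = \gamma_k g_1 - \Pi\Xi^k$ yields $\vt^{\rm MVH}(H^k)\sint S_T = (\gamma_k - c_k)g_1 - \vt^{\rm ex}(\Xi^k)\sint S_T$. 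Since $g_1 = \vt^{\rm MVH}(1)\sint S_T$, linearity of the stochastic integral and the definition of $S$-equivalence give $\vt^{\rm MVH}(H^k) \eq{S} (\gamma_k - c_k)\vt^{\rm MVH}(1) - \vt^{\rm ex}(\Xi^k)$, and combining with Lemma \ref{lem:ind:opt} gives \eqref{eq:mvh:prob:hk:sol:decomp}.

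The main obstacle I anticipate is the passage from uniqueness of the terminal value $\vt\sint S_T$ in $L^2$ to uniqueness of the strategy up to $S$-equivalence, i.e.\ indistinguishability of the whole integral process $\vt\sint S$: strict convexity of the quadratic objective only pins down the terminal random variable, so I would invoke the general mean--variance hedging theory (and the representation of equivalence classes via the projection onto the predictable range) to upgrade equality at $T$ to process-level $S$-equivalence, which also legitimises reading $\vt^{\rm ex}(\Xi^k)$ off as a genuine admissible strategy. The role of the no-free-lunch assumption $E[(\vt^{\rm MVH}(1)\sint S_T - 1)^2] > 0$ should be highlighted as twofold: it guarantees $r \neq 0$, which both makes the orthogonal splitting off the constant non-degenerate and forces uniqueness of the constant $c_k$ in \eqref{eq:MVHex}.
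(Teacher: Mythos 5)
Your proof is correct, and the final decomposition and the formula $c_k = E[\Xi^k(1-\vt^{\rm MVH}(1)\sint S_T)]/E[(1-\vt^{\rm MVH}(1)\sint S_T)^2]$ agree with what the paper obtains. The route is recognisably the same in outline --- reduce to the MVH problem for $\Xi^k$ by linearity, then link MVH$(\Xi^k)$ to exMVH$(\Xi^k)$ through the pure-investment solution --- but where the paper simply cites its appendix machinery (Proposition \ref{prop:UOGP} for uniqueness of gains processes, part 2) of Proposition \ref{prop:UOVP link UOGP} for uniqueness of value processes, and Corollary \ref{cor:UOVP link UOGP} for the exMVH/MVH translation), you re-derive exactly that content from scratch as explicit orthogonal projections in $L^2$, via the splitting $\spann\{1\}+G = \spann\{r\}\oplus G$ with $r = 1-\vt^{\rm MVH}(1)\sint S_T$. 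What your version buys is self-containedness and transparency: one sees directly why $E[(\vt^{\rm MVH}(1)\sint S_T-1)^2]>0$ is needed (it makes $r\neq 0$, hence $1\notin\overline G$, hence the constant in exMVH is pinned down). What the paper's version buys is reusability: the same appendix results are invoked again for Proposition \ref{prop:hedge:rep agent} and in Section \ref{sec:lin:mv}.

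The one genuine gap you correctly flag yourself --- passing from equality of terminal gains in $L^2$ to $S$-equivalence of strategies, and hence to uniqueness of the minimisers up to $\eq{S}$ --- is closed not by the projection onto the predictable range but by the hypothesis you have not yet used: the assumed \emph{uniqueness} of $\vt^{\rm MVH}(1)$ is, by Proposition \ref{prop:UOGP}, equivalent to uniqueness of gains processes, which is precisely the statement that $\vt^1\sint S_T=\vt^2\sint S_T$ $\Pas$ implies $\vt^1\eq{S}\vt^2$. Citing that (rather than generic mean--variance hedging theory) makes your argument complete.
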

The above results shows that $\hat \vt^k$ can be decomposed into the traded endowment $\eta^k$, a hedging component for the total endowment $\Xi^k$ and a an investment component, which is proportional to  $\vt^{{\rm MVH}}(1)$.

\subsection{The representative agent}
\label{sec:representative:agent}

We now return to the question of finding a price process $(1,S)$ that leads to an equilibrium in the sense of Definition \ref{def:equilibrium}. In order to study quadratic equilibrium markets, we use a standard idea from financial economics to consider a \emph{representative agent} that holds the aggregate endowment of all agents, i.e., the representative agent owns both $\bar \eta = \sum_{k=1}^K \eta^k$ units of the assets 
as well as the sum of the non-traded endowments of the agents. Equivalently, the representative agent receives the total endowment $\bar \Xi = \sum_{k=1}^K  \Xi^k$. By the same argument as in \eqref{eq:wealth agent k}, the representative agent can attain the terminal wealth $(\vt - \bar \eta) \sint S_T + \bar \Xi$ by trading with a strategy $\vt \in \TB(S)$.

The utility function of the representative agent is defined by
\begin{equation*}
\bar{U}^Q_\lambda(x)=\sup \bigg\{\sum_{k=1}^K\lambda_k U^Q_k(x_k) : x_1, \ldots, x_K\in \R^d,  \sum_{k =1}^k x_k =x \bigg\},
\end{equation*}
where $\lambda = (\lambda_1, \ldots, \lambda_K) \in \R^K$ is a fixed set of \emph{Negishi weights}.\footnote{Usually, the convention in the literature is that the Negishi weights are positive and sum up to $1$. Since we allow for zero and negative risk tolerances, we need to generalise this.} We denote by $\bar \gamma := \sum_{k=1}^K \gamma_k$ the aggregate risk tolerance and make the ansatz $\lambda^*_k := \frac{\bar{\gamma}}{\gamma_k} \1_{\{\gamma_k \neq 0\}} + K \1_{\{\bar \gamma = 0\}}$. Moreover, we set $\bar{U}^Q := \bar{U}^Q_{\lambda^*}$ for simplicity. Distinguishing between the cases $\bar \gamma \neq 0$ and $\bar \gamma = 0$, it is not difficult to check that
\begin{align*}
\bar{U}^Q(x) &=\sup \bigg\{\sum_{k=1}^K \lambda_k (2\gamma_k x_k - x_k^2) : x_1, \ldots, x_K\in \R^d,  \sum_{k =1}^K x_k =x \bigg\} = 2 \bar \gamma x - x^2,
\end{align*}
so that the utility function of the representative agent is of the same form as the utility function of the individual agents. The representative agent then solves the maximisation problem
\begin{equation}
	\label{max:rep:agent}
	E\big[\bar U^Q\big((\vt - \bar \eta)\sint S_T + \bar \Xi\big)\big] \to \max_{\vt \in \overline{\Theta}(S)}!
\end{equation}
From a mathematical perspective, \eqref{max:rep:agent} has exactly the same structure as the individual maximisation problem \eqref{eq:def:gen:quad:util:indiv:prob}. Thus we get an analogue of Lemma \ref{lem:ind:opt} for the representative agent. 
\begin{lemma}
	\label{lem:rep:agent}
	Let  $(1,S)$ be a local $L^2$-market with $\bar \eta \in \overline{\Theta}(S)$.  Then the following are equivalent:
	\begin{enumerate}
		\item The optimisation problem  \eqref{max:rep:agent} has a unique solution $\bar \vt \in \overline{\Theta}(S)$.
		\item For $\bar H := \bar \gamma - \bar \Xi$, the MVH problem
	\begin{equation}
		\label{eq:lem:rep:agent:mod}
		E[(\vt\sint S_T -  \bar H )^2] \to \min_{\vt \in \overline{\Theta}(S)}!
	\end{equation}
		has a unique solution $\vt^{\rm MVH}(\bar H) \in \overline{\Theta}(S)$.
	\end{enumerate}
If either statement holds, then $\bar\vt \eq{S} \bar \eta + \vt^{\rm MVH}(\bar H)$. 
\end{lemma}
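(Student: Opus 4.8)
The plan is to exploit the structural identity between the representative agent's problem \eqref{max:rep:agent} and an individual agent's quadratic utility problem \eqref{eq:def:gen:quad:util:indiv:prob}, and then to invoke Lemma \ref{lem:ind:opt} directly. Having already established that $\bar U^Q(x) = 2\bar\gamma x - x^2$, I would expand the objective in \eqref{max:rep:agent} as
\begin{equation*}
E\big[2\bar\gamma\big((\vt - \bar\eta)\sint S_T + \bar\Xi\big) - \big((\vt - \bar\eta)\sint S_T + \bar\Xi\big)^2\big],
\end{equation*}
which is precisely the functional appearing in \eqref{eq:def:gen:quad:util:indiv:prob} under the formal substitutions $\gamma_k \mapsto \bar\gamma$, $\eta^k \mapsto \bar\eta$ and $\Xi^k \mapsto \bar\Xi$.

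Before applying Lemma \ref{lem:ind:opt} with these substitutions, I would check that the aggregate data satisfy its hypotheses. The aggregate risk tolerance $\bar\gamma = \sum_{k=1}^K \gamma_k$ lies in $\R$, which is all that is required, since the quadratic utility \eqref{eq:def:gen:quad:util} is well defined for any sign of the parameter; the aggregate endowment strategy $\bar\eta \in \overline{\Theta}(S)$ is exactly the standing hypothesis of the lemma; and the aggregate total endowment $\bar\Xi = \sum_{k=1}^K \Xi^k$ lies in $L^2$ as a finite sum of $L^2$-endowments. Since the proof of Lemma \ref{lem:ind:opt} uses nothing about $(\gamma_k, \eta^k, \Xi^k)$ beyond membership in $\R \times \overline{\Theta}(S) \times L^2$, it transfers verbatim to the representative agent's data.

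Applying Lemma \ref{lem:ind:opt} to the triple $(\bar\gamma, \bar\eta, \bar\Xi)$ then yields the equivalence of (a) and (b), with $\bar H = \bar\gamma - \bar\Xi$ taking the role of $H^k = \gamma_k - \Xi^k$, and the optimal-strategy identity $\bar\vt \eq{S} \bar\eta + \vt^{\rm MVH}(\bar H)$ following from the corresponding conclusion $\hat\vt^k \eq{S} \eta^k + \vt^{\rm MVH}(H^k)$. I do not expect a genuine obstacle here: the entire analytic content---the completion-of-the-square argument that converts the concave maximisation into the MVH minimisation \eqref{eq:lem:rep:agent:mod}---is already carried out in Lemma \ref{lem:ind:opt}. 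The only point that warrants a moment's care is confirming that the Negishi-weighted aggregation genuinely collapses to $2\bar\gamma x - x^2$, so that the representative agent is again a quadratic-utility agent; but this finite-dimensional verification has been dispatched before the statement.
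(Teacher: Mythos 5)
Your proposal is correct and matches the paper's approach exactly: the paper itself observes that \eqref{max:rep:agent} has the same structure as \eqref{eq:def:gen:quad:util:indiv:prob} once $\bar U^Q(x)=2\bar\gamma x - x^2$ is established, and obtains Lemma \ref{lem:rep:agent} by the same completion-of-the-square argument as in Lemma \ref{lem:ind:opt} applied to the triple $(\bar\gamma,\bar\eta,\bar\Xi)$. Your check that the aggregate data lie in $\R\times\overline{\Theta}(S)\times L^2$ is precisely the (only) point that needs verifying.
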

In the above result, $\bar H = \bar \gamma - \bar \Xi = \sum_{k =1}^K H_k$ may be interpreted as the aggregate additional wealth that the agents would (collectively) like to obtain in order to reach the aggregate bliss point $\bar \gamma$.

Similarly to Lemma \ref{prop:ind:opt:decomp}, we can also decompose the optimal strategy $\bar \vt$ of the representative agent into a hedging and a pure investment part, where the \emph{hedging problem of the representative agent} is the extended MVH problem
	\begin{equation}
		\label{eq:hedging:rep agent}
		E\big[\big(c + \vt\sint S_T -  \bar \Xi \big)^2\big] \to \min_{c \in \R, \vt \in \overline{\Theta}(S)}!
	\end{equation}

\begin{proposition}
	\label{prop:hedge:rep agent}
	Let  $(1,S)$ be a local $L^2$-market such that $\bar \eta \in \overline{\Theta}(S)$. Suppose that there is a unique solution $\vt^{{\rm MVH}}(1)$ to the pure investment problem \eqref{eq:pure:inv:sec:quad} and $E[(\vt^{{\rm MVH}}(1)\sint S_T - 1)^2]>0$. Then the optimisation problem \eqref{max:rep:agent} of the representative agent has a unique solution $\bar \vt \in \overline{\Theta}(S)$  if and only if the exMVH problem \eqref{eq:MVHex} for $\bar \Xi$ has a unique solution $(c(\bar \Xi), \vt^{\mathrm{ex}}(\bar \Xi))\in  \R \times \overline{\Theta}(S)$. In this case $\bar \vt$ can then be decomposed as
	\begin{equation}
	\label{eq:prop:hedge:rep agent}
\bar \vt \eq{S} \bar \eta + \left(\bar \gamma - c(\bar \Xi)\right) \vt^{\rm MVH}(1) - \vt^{\mathrm{ex}}(\bar \Xi).
	\end{equation}
\end{proposition}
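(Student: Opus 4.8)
The plan is to exploit that the representative agent's problem \eqref{max:rep:agent} has exactly the structure of an individual quadratic problem, with the triple $(\eta^k,\gamma_k,\Xi^k)$ replaced by $(\bar\eta,\bar\gamma,\bar\Xi)$ and hence $H^k$ by $\bar H=\bar\gamma-\bar\Xi$; indeed $\bar U^Q(x)=2\bar\gamma x-x^2$ is of the same form as $U^Q_k$. Consequently the argument underlying Proposition \ref{prop:ind:opt:decomp} applies verbatim, and the cleanest route is to invoke that proposition with these substitutions. Below I outline the self-contained argument, which also exhibits the role of the standing assumption. First, by Lemma \ref{lem:rep:agent} (using $\bar\eta\in\TB(S)$), the problem \eqref{max:rep:agent} has a unique solution $\bar\vt$ if and only if the MVH problem \eqref{eq:lem:rep:agent:mod} for $\bar H$ has a unique solution $\vt^{\rm MVH}(\bar H)$, and then $\bar\vt\eq{S}\bar\eta+\vt^{\rm MVH}(\bar H)$. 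It therefore remains to relate the MVH problem for $\bar H$ to the exMVH problem \eqref{eq:MVHex} for $\bar\Xi$ and to the pure investment problem \eqref{eq:pure:inv:sec:quad}.

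I would work in $L^2$ with the linear set of attainable payoffs $\cG=\{\vt\sint S_T:\vt\in\TB(S)\}$, writing $\Pi_\cG$ for the orthogonal projection onto its closure $\overline{\cG}$. A strategy $\vt$ solves the MVH problem for a payoff $H$ if and only if the first-order condition $\langle\vt\sint S_T-H,\,g\rangle=0$ holds for all $g\in\cG$ (the residual is orthogonal to $\cG$), and $(c,\vt)$ solves the exMVH problem for $H$ if and only if $c+\vt\sint S_T-H$ is orthogonal to both $\cG$ and the constant $1$. Setting $g_1:=\vt^{\rm MVH}(1)\sint S_T$ and $n:=1-g_1$, the optimality of $\vt^{\rm MVH}(1)$ gives $n\perp\cG$, while the standing assumption $E[(\vt^{\rm MVH}(1)\sint S_T-1)^2]>0$ says precisely that $\|n\|^2>0$. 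Thus $n$ is a nonzero direction orthogonal to $\overline{\cG}$ that is supplied by the constant $1=g_1+n$; equivalently $\overline{\cG+\R\,1}=\overline{\cG}\oplus\R\,n$. This transversality is what allows one to pass between hedging with and without a free constant.

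Next I would establish the decomposition. Write $c:=c(\bar\Xi)$, $g^{\mathrm{ex}}:=\vt^{\mathrm{ex}}(\bar\Xi)\sint S_T$, and let $r:=\bar\Xi-c-g^{\mathrm{ex}}$ be the exMVH residual, so that $r\perp\cG$ and $r\perp1$. Define $\tilde\vt:=(\bar\gamma-c)\vt^{\rm MVH}(1)-\vt^{\mathrm{ex}}(\bar\Xi)\in\TB(S)$. A direct computation gives $\bar H-\tilde\vt\sint S_T=(\bar\gamma-c)\,n-r$, which is orthogonal to $\cG$ since both $n$ and $r$ are; hence $\tilde\vt$ satisfies the first-order condition and solves the MVH problem for $\bar H$. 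This yields $\vt^{\rm MVH}(\bar H)\eq{S}(\bar\gamma-c(\bar\Xi))\vt^{\rm MVH}(1)-\vt^{\mathrm{ex}}(\bar\Xi)$, which combined with $\bar\vt\eq{S}\bar\eta+\vt^{\rm MVH}(\bar H)$ gives \eqref{eq:prop:hedge:rep agent}. For the equivalence of solvability, linearity and $g_1=\Pi_\cG(1)\in\cG$ give $\Pi_\cG(\bar H)=\bar\gamma\,g_1-\Pi_\cG(\bar\Xi)$, so $\Pi_\cG(\bar\Xi)\in\cG$ iff $\Pi_\cG(\bar H)\in\cG$; and the exMVH optimum $\Pi_\cG(\bar\Xi)+\frac{\langle\bar\Xi,n\rangle}{\|n\|^2}\,n$ lies in $\R+\cG$ iff $\Pi_\cG(\bar\Xi)\in\cG$, since $n=1-g_1\in\R+\cG$. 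Chaining these equivalences in both directions gives the claimed if-and-only-if.

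The main obstacle I anticipate is not the linear algebra above but the bookkeeping of uniqueness: the problems are defined through the terminal values $\vt\sint S_T$, whereas $S$-equivalence concerns the whole integral process $\vt\sint S$, and $\cG$ need not be closed in $L^2$. I would handle the first point via the mean--variance hedging results in Appendix \ref{app:mean:var:hedging:prelim} (optimal strategies are unique up to $S$-equivalence), and the second by phrasing optimality through the first-order condition rather than projection onto a closed subspace, so that attainment is carried explicitly by $\vt^{\rm MVH}(1)$, $\vt^{\mathrm{ex}}(\bar\Xi)$ and $\tilde\vt$. The assumption $\|1-\vt^{\rm MVH}(1)\sint S_T\|^2>0$ is essential precisely here: it guarantees $n\neq0$, so that adding or removing the free constant is a genuine one-dimensional transversal operation and the passage between \eqref{eq:lem:rep:agent:mod} and \eqref{eq:MVHex} is a bijection on solutions.
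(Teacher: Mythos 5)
Your proposal is correct, and its skeleton matches the paper's: reduce \eqref{max:rep:agent} to the MVH problem for $\bar H$ via Lemma \ref{lem:rep:agent}, then pass between that problem and the exMVH problem for $\bar\Xi$ using the pure-investment solution $\vt^{\rm MVH}(1)$. The difference is one of packaging. The paper treats Proposition \ref{prop:hedge:rep agent} as the verbatim analogue of Proposition \ref{prop:ind:opt:decomp}, whose proof runs entirely through the appendix machinery: uniqueness of gains processes (Proposition \ref{prop:UOGP}) plus $E[(\vt^{\rm MVH}(1)\sint S_T-1)^2]>0$ yields uniqueness of value processes (Proposition \ref{prop:UOVP link UOGP}), and then Corollary \ref{cor:UOVP link UOGP} and the linearity of MVH (Lemma \ref{lmm:mvh:prob:linearity}) deliver both the equivalence of solvability and the identity $\vt^{\rm MVH}(\bar H)=(\bar\gamma-c(\bar\Xi))\vt^{\rm MVH}(1)-\vt^{\rm ex}(\bar\Xi)$. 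You instead re-derive the content of that corollary inline: the computation $\bar H-\tilde\vt\sint S_T=(\bar\gamma-c)n-r$ with $n,r\perp\cG$, and the projection argument $\overline{\cG+\R\,1}=\overline{\cG}\oplus\R n$ with $n\neq0$, are exactly what the appendix proof of Corollary \ref{cor:UOVP link UOGP} establishes in its own notation. What your version buys is a self-contained and geometrically transparent account of why the hypothesis $E[(\vt^{\rm MVH}(1)\sint S_T-1)^2]>0$ is the transversality making the constant $c(\bar\Xi)$ well defined; what the paper's version buys is modularity, since the same corollary is reused for the individual agents in Proposition \ref{prop:ind:opt:decomp}. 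The only place where you lean on the paper rather than proving things is the uniqueness bookkeeping (every solution is unique up to $S$-equivalence), which you correctly delegate to Propositions \ref{prop:UOGP}--\ref{prop:UOVP link UOGP}; that is also precisely how the paper handles it.
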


The following result shows that the optimal strategy for the representative agent is given by the sum of the strategies of the individual agents. This result gives  a characterisation of the aggregate demand for the risky assets, which is the key to finding an equilibrium market.
\begin{lemma}
	\label{lem:rep:agent:rel to ind}
	Let  $(1,S)$ be a local $L^2$-market. Assume that $\eta^1, \ldots, \eta^K \in \overline{\Theta}(S)$ and for each agent $k \in \{1, \ldots, K\}$, the individual optimisation problem \eqref{eq:def:gen:quad:util:indiv:prob} has a unique solution $\hat \vt^k$. Then the optimisation problem \eqref{max:rep:agent} of the representative agent has a unique solution $\bar \vt$ satisfying
		\begin{equation}\label{eq:quad:equil:rep:agent:strat:sum:indiv}
			\bar \vt \eq{S} \sum_{k = 1}^K \hat \vt^k.
			\end{equation}
\end{lemma}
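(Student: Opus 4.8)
The plan is to reduce everything to mean--variance hedging via Lemmas \ref{lem:ind:opt} and \ref{lem:rep:agent}, and then exploit the linearity of the associated projection. Since each individual problem \eqref{eq:def:gen:quad:util:indiv:prob} has a unique solution, Lemma \ref{lem:ind:opt} gives, for every $k$, a unique solution $\vt^{\rm MVH}(H^k)$ to the MVH problem \eqref{eq:lem:ind:mod} with payoff $H^k = \gamma_k - \Xi^k$, and $\hat \vt^k \eq{S} \eta^k + \vt^{\rm MVH}(H^k)$. Because $\bar H = \bar \gamma - \bar \Xi = \sum_{k=1}^K H^k$, the goal becomes to show that $\tilde \vt := \sum_{k=1}^K \vt^{\rm MVH}(H^k)$, which lies in $\TB(S)$ as the latter is a linear space, is the unique solution to the MVH problem \eqref{eq:lem:rep:agent:mod} for $\bar H$; Lemma \ref{lem:rep:agent} will then yield the result.

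For existence I would use the first-order characterisation of MVH minimisers. Writing $\cG := \{\psi \sint S_T : \psi \in \TB(S)\} \subseteq L^2$, which is a linear subspace, a strategy $\vt \in \TB(S)$ solves $\min E[(\vt \sint S_T - H)^2]$ if and only if the residual is orthogonal to $\cG$, i.e.
\begin{equation*}
  E\big[(\vt \sint S_T - H)\,(\psi \sint S_T)\big] = 0 \qquad \text{for all } \psi \in \TB(S);
\end{equation*}
the forward implication comes from differentiating $t \mapsto E[((\vt + t\psi) \sint S_T - H)^2]$ at $t = 0$, and the converse from expanding the square. Each $\vt^{\rm MVH}(H^k)$ satisfies this condition for $H^k$, and since $\tilde \vt \sint S_T - \bar H = \sum_{k=1}^K (\vt^{\rm MVH}(H^k) \sint S_T - H^k)$, summing the $K$ orthogonality relations shows that $\tilde \vt$ satisfies the orthogonality condition for $\bar H$. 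By the converse implication, $\tilde \vt \in \TB(S)$ is therefore a genuine minimiser of \eqref{eq:lem:rep:agent:mod}; note that this argument never requires $\cG$ to be closed.

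For uniqueness I would use the observation that, once existence is granted, uniqueness up to $S$-equivalence is a property of $S$ alone and does not depend on the payoff. Indeed, if $\vt$ is any minimiser for a payoff $H$ and $\psi \in \TB(S)$ satisfies $\psi \sint S_T = 0$ in $L^2$, then $\vt + \psi$ has the same terminal integral and hence the same objective value, so it is also a minimiser; conversely any two minimisers differ by such a $\psi$. Thus the MVH problem for $H$ has a solution unique up to $S$-equivalence if and only if every $\psi \in \TB(S)$ with $\psi \sint S_T = 0$ in $L^2$ satisfies $\psi \eq{S} 0$ --- a condition on $S$ that is independent of $H$. Uniqueness of the individual solutions forces this null-space condition to hold, so it holds for $\bar H$ as well; combined with the existence of $\tilde \vt$ this makes $\tilde \vt$ the unique solution of \eqref{eq:lem:rep:agent:mod}. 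Lemma \ref{lem:rep:agent} then gives $\bar \vt \eq{S} \bar \eta + \tilde \vt$, and since the stochastic integral is linear in the integrand we conclude $\bar \vt \eq{S} \sum_{k=1}^K (\eta^k + \vt^{\rm MVH}(H^k)) \eq{S} \sum_{k=1}^K \hat \vt^k$.

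The main obstacle is the uniqueness step, because $S$-equivalence is an equality of the whole integral processes, not merely of the terminal values: two minimisers automatically agree as elements of $L^2$ at time $T$ (the minimiser of a strictly convex distance over the convex set $\cG$ is unique), but upgrading this to indistinguishability of $\vt^1 \sint S$ and $\vt^2 \sint S$ is exactly what the payoff-independent null-space condition delivers. The remaining points are routine: the linearity of $\TB(S)$ and of $\vt \mapsto \vt \sint S$, and the Hilbert-space variational characterisation used above.
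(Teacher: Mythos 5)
Your argument is correct and follows essentially the same route as the paper's proof: reduce both the individual and representative problems to MVH problems via Lemmas \ref{lem:ind:opt} and \ref{lem:rep:agent}, observe that the solution map $H \mapsto \vt^{\rm MVH}(H)$ is additive and that uniqueness up to $S$-equivalence is a payoff-independent property of $S$, and then reassemble. The only difference is that you prove these two auxiliary facts directly from the Hilbert-space first-order condition, whereas the paper invokes them as Lemma \ref{lmm:mvh:prob:linearity} and Proposition \ref{prop:UOGP}; your inlined arguments are sound.
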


\subsection{Existence and uniqueness of equilibria}\label{sec:ex:uniqueness:equilibria:chap:iii}

We can now prove our main result on the existence and uniqueness of quadratic equilibrium markets. In a first step, we show that a necessary condition for the existence of a quadratic equilibrium is that the martingale generated by $\ol H = \ol \gamma -\ol \Xi$ yields a (generalised) local martingale measure for $(1, S)$.

\begin{lemma}\label{lmm:quad:eq:mark:zs:is:loc:mart}
Let $(\bar Z_t)_{0 \leq t \leq T}$ be the (square-integrable) $P$-martingale given by $\bar Z_t = E[\bar  H\,|\,\cF_t]$. If $(1,S)= (1, S^{(1)}, S^{(2)})$ is a quadratic equilibrium market, then for each $j \in \{1, \ldots, d_1 + d_2\}$, the process $(\bar Z_t S^j_t)_{0 \leq t \leq T}$ is a local $P$-martingale.
\end{lemma}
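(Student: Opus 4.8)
The plan is to exploit the market-clearing condition to show that, in equilibrium, the representative agent optimally does not trade away from the aggregate endowment $\bar\eta$, and then to read off the asserted local martingale property from the first-order (orthogonality) condition of the associated mean--variance hedging problem.

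First I would record that the equilibrium conditions supply the hypotheses of the representative-agent results. Since $(1,S)$ is a quadratic equilibrium market, each individual problem \eqref{eq:def:gen:quad:util:indiv:prob} has a unique solution $\hat\vt^k$ by condition 1) of Definition \ref{def:equilibrium}, and condition 3) together with the linearity of $\overline{\Theta}(S)$ yields $\eta^1,\dots,\eta^K,\bar\eta\in\overline{\Theta}(S)$ (each $\eta^k$ being a constant combination of the buy-and-hold strategies $e^j$ of the productive assets). Lemma \ref{lem:rep:agent:rel to ind} then gives a unique representative optimum $\bar\vt\eq{S}\sum_{k=1}^K\hat\vt^k$, while the market-clearing condition \eqref{eq:market clearing} gives $\sum_{k=1}^K\hat\vt^k\eq{S}\bar\eta$; hence $\bar\vt\eq{S}\bar\eta$. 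Feeding this into Lemma \ref{lem:rep:agent}, whose decomposition reads $\bar\vt\eq{S}\bar\eta+\vt^{\rm MVH}(\bar H)$, forces $\vt^{\rm MVH}(\bar H)\eq{S}0$; that is, the strategy $0$ solves the MVH problem \eqref{eq:lem:rep:agent:mod} for $\bar H=\bar\gamma-\bar\Xi$.

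Next I would extract the orthogonality condition. Since $0$ minimises $\vt\mapsto E[(\vt\sint S_T-\bar H)^2]$ over the linear space $\overline{\Theta}(S)$, for every $\psi\in\overline{\Theta}(S)$ the map $t\mapsto E[(t\,\psi\sint S_T-\bar H)^2]=t^2E[(\psi\sint S_T)^2]-2t\,E[\bar H\,(\psi\sint S_T)]+E[\bar H^2]$ attains its minimum over $t\in\R$ at $t=0$, which forces the linear coefficient to vanish:
\begin{equation}\label{eq:plan:orth}
E[\bar H\,(\psi\sint S_T)]=0\qquad\text{for all }\psi\in\overline{\Theta}(S).
\end{equation}
Since $\bar H=\bar Z_T$ and $\bar Z_t=E[\bar H\mid\cF_t]$ is a square-integrable (hence uniformly integrable, $L^2$-bounded) martingale, \eqref{eq:plan:orth} expresses the orthogonality of $\bar Z_T$ to all admissible gains.

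Finally I would localise to turn \eqref{eq:plan:orth} into the stated local martingale property. Fix $j\in\{1,\dots,d_1+d_2\}$ and a localising sequence $(\tau_n)$ for the local $L^2$-semimartingale $S^j$, so each $S^{j,\tau_n}$ is an $L^2$-semimartingale. For any bounded stopping time $\tau$, the process $e^j\one_{\rrbracket 0,\tau\wedge\tau_n\rrbracket}$ is a simple integrand (its underlying is stopped at $\tau\wedge\tau_n\le\tau_n$, hence an $L^2$-semimartingale) and therefore lies in $\overline{\Theta}(S)$, with $e^j\one_{\rrbracket 0,\tau\wedge\tau_n\rrbracket}\sint S_T=S^j_{\tau\wedge\tau_n}-S^j_0$. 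Inserting this into \eqref{eq:plan:orth} and using optional sampling for $\bar Z$ together with the tower property gives
\begin{equation*}
E[\bar Z_{\tau\wedge\tau_n}S^j_{\tau\wedge\tau_n}]=E[\bar Z_TS^j_{\tau\wedge\tau_n}]=E[\bar Z_TS^j_0]=\bar Z_0S^j_0,
\end{equation*}
where the required $L^1$-integrability follows from Cauchy--Schwarz, since $\bar Z$ is $L^2$-bounded and $S^{j,\tau_n}$ is bounded in $L^2$ by \eqref{def:local L2 semimart}. As $\tau$ ranges over all bounded stopping times, the standard characterisation that an adapted càdlàg integrable process with constant expectation under all bounded stopping times is a martingale shows that the stopped process $(\bar Z_{t\wedge\tau_n}S^j_{t\wedge\tau_n})_{0\le t\le T}$ is a true $P$-martingale; since $\tau_n\uparrow T$, the process $(\bar Z_tS^j_t)_{0\le t\le T}$ is a local $P$-martingale.

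I expect the main obstacle to be the interplay between admissibility and integrability forced by the fact that $S^j$ is only a \emph{local} $L^2$-semimartingale: the buy-and-hold strategy $e^j$ need not be globally $L^2$-admissible, so one cannot apply \eqref{eq:plan:orth} to $e^j$ directly and must instead work with the truncated simple integrands $e^j\one_{\rrbracket 0,\tau\wedge\tau_n\rrbracket}$, checking $L^1$-integrability of the products $\bar Z_{\tau\wedge\tau_n}S^j_{\tau\wedge\tau_n}$ before invoking optional sampling. Some care is also needed to confirm that the representative-agent hypotheses (uniqueness of the individual optima and $\bar\eta\in\overline{\Theta}(S)$) are genuinely delivered by the equilibrium conditions 1) and 3).
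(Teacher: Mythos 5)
Your proposal is correct and follows essentially the same route as the paper: representative-agent aggregation (Lemma \ref{lem:rep:agent:rel to ind}) plus market clearing shows that $0$ solves the MVH problem for $\bar H$, and the local martingale property then follows from the first-order orthogonality condition and localisation via truncated buy-and-hold integrands. The only difference is that the paper packages your second half as a separate result (Lemma \ref{lem:MVH:zero}, implication (a) $\Rightarrow$ (b)) and simply cites it, whereas you re-derive that argument inline.
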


In order to obtain also a sufficient condition for the existence of a quadratic equilibrium, we need to assume that the process $\bar Z$ does not hit $0$; this assumption holds in particular in the economically relevant case that $\bar Z_T = \bar H > 0$ so that $\bar Z$ is strictly positive. Then by Lemma \ref{lmm:quad:eq:mark:zs:is:loc:mart}, $\bar H/E[\bar H]$ is the density of an equivalent local martingale measure for $S$.

For the following theorem, we need the Galtchouk--Kunita--Watanabe decomposition of $\bar Z$ with respect to $M^{(1)}$, the local martingale part $M^{(1)}$ of the financial asset  given in \eqref{eq:def:prim:fin:assets:decomp}. We have
\begin{equation}\label{eq:gkw:decomp:bar:g:m:1}
\bar Z_t = \bar Z_0 + \bar \xi^{(1)} \sint M^{(1)}_t + M^{\bar Z}_t, \quad 0 \leq t \leq T,
\end{equation}
where \mbox{$\bar \xi^{(1)} \in L^2(M^{(1)})$} and $M^{\bar Z}$ is a square-integrable $P$-martingale strongly orthogonal to $M^{(1)}$. By Remark \ref{rmk:gkw:decomp:bar:g:m:1:detail:pred:quad:var}, we may choose $\bar \xi^{(1)}$ in \eqref{eq:gkw:decomp:bar:g:m:1} in such a way that $\bar \xi^i \in L^2(M^i)$ for each $i \in \{1, \ldots, d_1\}$.

We can now formulate the main existence and uniqueness theorem for quadratic equilibria.

\begin{theorem}
	\label{thm:equil}
Assume that $\bar Z_t \neq 0$ and $\bar Z_{t-} \neq 0$ for all  $t \in [0, T]$ $\Pas$   If there exists a quadratic equilibrium $(1, S^{(1)}, S^{(2)})$, it is unique and explicitly given by
\begin{align}
 S^j_t &= S^j_0   + M^j_t - \int_0^t  \frac{\diff \langle \bar Z, M^j \rangle_s}{\bar Z_{s-}} \notag\\
 	&= S^j_0   + M^j_t -  \sum_{i = 1}^{d_1} \int_0^t  \frac{\bar \xi^i_s}{\bar Z_{s-}}\diff \langle M^i, M^j \rangle_s, &&  j \in \{1, \ldots, d_1\}, \label{eq:thm:equil:S1}\\
 S^j_t &= \frac{E[\bar H D^j \mid \cF_t]}{\bar Z_t}  = \frac{E[\bar Z_T D^j \mid \cF_t]}{\bar Z_t}, && j \in \{d_1+1, \ldots, d_1 + d_2\}.
 \label{eq:thm:equil:S2}
\end{align}
 Conversely, the market $(1, S) = (1, S^{(1)}, S^{(2)})$ defined by \eqref{eq:thm:equil:S1} and \eqref{eq:thm:equil:S2} is a quadratic equilibrium if and only if $S^{(2)}$ is a local $L^2$-semimartingale and for each \mbox{$j \in \{d_1+1, \ldots, d_1+d_2\}$}, we have $e^j \in \overline{\Theta}(S)$.
\end{theorem}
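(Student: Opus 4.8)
The plan is to treat the two assertions --- uniqueness of the equilibrium and the converse characterisation --- separately, both organised around the non-vanishing square-integrable martingale $\bar Z$. The unifying fact is that $(1,S)$ is a quadratic equilibrium precisely when the representative agent does not trade beyond holding $\bar \eta$, which by Lemmas \ref{lem:rep:agent:rel to ind} and \ref{lem:rep:agent} is equivalent to $\vt^{\rm MVH}(\bar H) \eq{S} 0$, i.e.\ to $\bar H = \bar Z_T$ being orthogonal in $L^2$ to every attainable payoff $\vt \sint S_T$, $\vt \in \overline{\Theta}(S)$; and by Lemma \ref{lmm:quad:eq:mark:zs:is:loc:mart} this orthogonality is in turn encoded in $\bar Z S^j$ being a local $P$-martingale for each $j$. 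I would therefore first extract the price formulas from the local-martingale property, and then show conversely that the formulas reproduce it.

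For the uniqueness part, assume a quadratic equilibrium exists, so that $\bar Z S^j$ is a local martingale by Lemma \ref{lmm:quad:eq:mark:zs:is:loc:mart}. For a financial asset $j \in \{1,\dots,d_1\}$, write $S^j = S^j_0 + M^j + A^j$ and expand $\bar Z S^j$ by integration by parts. The stochastic integrals $\int S^j_{-}\,\diff \bar Z$ and $\int \bar Z_{-}\,\diff M^j$ are local martingales; since $A^j$ is predictable and of finite variation, $[\bar Z, A^j]$ is also a local martingale, while $[\bar Z, M^j] - \langle \bar Z, M^j\rangle$ is a local martingale. Hence the predictable finite-variation part of $\bar Z S^j$ equals $\int_0^{\cdot} \bar Z_{s-}\,\diff A^j_s + \langle \bar Z, M^j\rangle$, and it must vanish; dividing by $\bar Z_{-} \neq 0$ yields the first line of \eqref{eq:thm:equil:S1}. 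Substituting the GKW decomposition \eqref{eq:gkw:decomp:bar:g:m:1} and using $M^{\bar Z} \perp M^{(1)}$ gives the second line. For a productive asset $j \in \{d_1+1,\dots,d_1+d_2\}$, I would upgrade the local martingale $\bar Z S^j$ to a uniformly integrable martingale (using $\bar Z \in L^2$, that $S^j$ is a local $L^2$-semimartingale, and the admissibility of $e^j$), and then invoke the terminal condition $S^j_T = D^j$ to obtain $\bar Z_t S^j_t = E[\bar Z_T D^j \mid \cF_t]$; dividing by $\bar Z_t \neq 0$ gives \eqref{eq:thm:equil:S2}. This proves uniqueness, since the formulas determine $S$ from the primitives.

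For the converse, suppose $S$ is given by \eqref{eq:thm:equil:S1}--\eqref{eq:thm:equil:S2}, that $S^{(2)}$ is a local $L^2$-semimartingale, and that $e^j \in \overline{\Theta}(S)$ for each productive $j$; the last point is condition 3) of Definition \ref{def:equilibrium}, and since each $\eta^k$ is a finite linear combination of the $e^j$ it also lies in the linear space $\overline{\Theta}(S)$. By construction $\bar Z_t S^j_t = E[\bar Z_T D^j \mid \cF_t]$ is a genuine martingale for productive assets, while for financial assets the drift computation above (run in reverse) shows $\bar Z S^j$ is a local martingale. I would then establish the orthogonality $E[\bar Z_T\, \vt \sint S_T] = 0$ for all $\vt \in \overline{\Theta}(S)$: for a simple integrand this follows from the (local) martingale property of $\bar Z S$ and optional sampling at the defining stopping times, and the general case follows by the $L^2$-approximation in Definition \ref{def:adm:strategy:l:2:cerny:kallsen} together with $\bar Z_T \in L^2$. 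Hence $\vt^{\rm MVH}(\bar H) \eq{S} 0$. Existence and uniqueness of each agent's optimum (condition 1)) then follows from Lemma \ref{lem:ind:opt} and the solvability of the MVH problems, which holds because $\bar Z$ is, up to normalisation, a square-integrable signed martingale density for $S$ (Proposition \ref{prop:LOP}). Finally Lemmas \ref{lem:rep:agent:rel to ind} and \ref{lem:rep:agent} give $\sum_{k=1}^K \hat \vt^k \eq{S} \bar \vt \eq{S} \bar \eta + \vt^{\rm MVH}(\bar H) \eq{S} \bar \eta$, which is exactly the market-clearing condition \eqref{eq:market clearing} (condition 2)). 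This verifies all three equilibrium conditions.

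The main obstacle I expect is the passage from local to true (uniformly integrable) martingale for $\bar Z S^j$, especially for the financial assets, where $S^j$ is only assumed to be a local $L^2$-semimartingale; this is what makes the terminal-value and orthogonality arguments rigorous and is where the hypotheses $\bar Z_t \neq 0$, $\bar Z_{t-} \neq 0$ and $\bar Z \in L^2$, as well as the admissibility of $e^j$, must be used carefully. A secondary technical point is the correct bookkeeping of the finite-variation part in the integration-by-parts step in the presence of jumps --- justifying that $[\bar Z, A^j]$ is a local martingale and that the compensator of $[\bar Z, S^j]$ equals $\langle \bar Z, M^j\rangle$ --- together with ensuring the MVH problems are solvable in the signed-measure regime permitted by the hypotheses rather than only under an equivalent local martingale measure.
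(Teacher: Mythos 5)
Your proposal is correct and follows essentially the same route as the paper's proof: Lemma \ref{lmm:quad:eq:mark:zs:is:loc:mart} plus the product rule and the vanishing of the predictable finite-variation part for the financial assets, the upgrade of $\bar Z S^j$ to a true martingale via $e^j\in\overline\Theta(S)$ and Proposition \ref{prop:LOP} for the productive assets, and, for the converse, the local-martingale property of $\bar Z S$, Proposition \ref{prop:LOP} for unique solvability of the MVH problems under $\bar Z_t\neq 0$, $\bar Z_{t-}\neq 0$, Lemma \ref{lem:MVH:zero} for the zero solution at $\bar H$, and Lemmas \ref{lem:rep:agent} and \ref{lem:rep:agent:rel to ind} for market clearing. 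The only (harmless) deviation is that where you verify the drift cancellation for $\bar Z S^{(1)}$ by running the integration by parts in reverse and establish the orthogonality $E[\bar Z_T\,\vt\sint S_T]=0$ by hand through simple integrands and $L^2$-approximation, the paper instead invokes the local $\cE$-martingale results of Choulli et al.\ and the first assertion of Proposition \ref{prop:LOP}, respectively.
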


The following result gives sufficient conditions on the primitives to ensure the existence of an equilibrium market. We will require that $\bar H$ is strictly positive so that $\bar Z$ and $\bar Z_{-}$ are automatically positive. The remaining assumptions of Theorem \ref{thm:equil} hold under the stronger assumption that $S^{(2)}$ is an $L^2$-semimartingale, which yields for  $j \in \{d_1+1, \ldots, d_1 + d_2\}$ that \mbox{$e^j \in \Theta_{\rm simple}(S) \subseteq \overline{\Theta}(S)$}. We now give sufficient (but not necessary) conditions for $S^{(2)}$ to be an $L^2$-semimartingale in the case $\bar H > 0$ $\Pas$ This in turn implies that a quadratic equilibrium exists.

\begin{lemma}
\label{lem:quad equil:suf cond}
	Suppose that $\bar H > 0$ $\Pas$ Then the process $(S^{(2)}_t)_{0 \leq t \leq T}$ defined by \eqref{eq:thm:equil:S2} is an $L^2$-semimartingale if any of the following conditions holds:
	\begin{enumerate}
		\item $D^j\in L^\infty(P)$ for $j \in \{d_1 +1, \ldots, d_1 +d_2\}$.
		\item $\bar H, \bar H^{-1} \in L^\infty(P)$. 
		\item $\bar H \in L^{1-p_1}(P) \cap L^{p_2}(P)$ and $D^j \in L^{2 q_1 q_2}(P)$ for all $j \in \{d_1 +1, \ldots, d_1 +d_2\}$, where $p_1, p_2  \in [1, \infty]$ and $1/p_i + 1/q_i = 1$ for $i \in \{1, 2\}$.\footnote{Note that we can always choose $p_2 \geq 2$ as $\bar H$ is square-integrable by assumption.}
	\end{enumerate}
\end{lemma}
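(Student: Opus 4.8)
The plan is to exploit the fact that, since $\bar H > 0$ $\Pas$ and $\bar H \in L^2 \subseteq L^1$, the random variable $\bar H / E[\bar H]$ is the density of a probability measure $Q \sim P$ with $\frac{dQ}{dP} = \bar H / E[\bar H] \in L^2(P)$. By Bayes' rule, the process defined in \eqref{eq:thm:equil:S2} is exactly $S^j_t = E^Q[D^j \mid \cF_t]$ for $j \in \{d_1+1,\dots,d_1+d_2\}$. In particular, each $S^j$ is a uniformly integrable $Q$-martingale closed by $D^j$ (note $D^j \in L^1(Q)$ since $\bar H D^j \in L^1(P)$), and since $Q \sim P$ it is a $P$-semimartingale. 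Optional sampling then gives $S^j_\sigma = E^Q[D^j \mid \cF_\sigma]$ for every $[0,T]$-valued stopping time $\sigma$. Thus, recalling that ``$L^2$-semimartingale'' amounts to $\sup_\sigma E[(S^j_\sigma)^2] < \infty$, the whole task reduces to establishing this uniform bound under each of (a), (b), (c).

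For (a), if $\|D^j\|_\infty =: C < \infty$, then $|S^j_\sigma| = |E^Q[D^j \mid \cF_\sigma]| \le C$, so $E[(S^j_\sigma)^2] \le C^2$ uniformly in $\sigma$. For the remaining two cases I would first change measure back to $P$: since $\frac{dP}{dQ} = E[\bar H]/\bar H$,
\[ E[(S^j_\sigma)^2] = E[\bar H]\, E^Q\big[\bar H^{-1} (S^j_\sigma)^2\big]. \]
In case (b), where $a := \essinf \bar H > 0$ and $b := \esssup \bar H < \infty$, I would bound $\bar H^{-1} \le 1/a$ and use conditional Jensen $E^Q[(S^j_\sigma)^2] \le E^Q[(D^j)^2]$ to get $E[(S^j_\sigma)^2] \le \tfrac{E[\bar H]}{a} E^Q[(D^j)^2] = \tfrac{1}{a} E[\bar H (D^j)^2] \le \tfrac{b}{a} E[(D^j)^2] < \infty$; this is just the special case $p_1 = p_2 = \infty$ of (c).

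For the general case (c) I would apply Hölder's inequality under $Q$ with the conjugate pair $(p_1,q_1)$ to the last display, giving
\[ E^Q\big[\bar H^{-1}(S^j_\sigma)^2\big] \le \big(E^Q[\bar H^{-p_1}]\big)^{1/p_1}\big(E^Q[(S^j_\sigma)^{2q_1}]\big)^{1/q_1}. \]
The first factor translates to $E^Q[\bar H^{-p_1}] = E[\bar H^{1-p_1}]/E[\bar H]$, which is finite by the hypothesis $\bar H \in L^{1-p_1}(P)$. For the second, conditional Jensen applied to $x \mapsto |x|^{2q_1}$ (here $2q_1 \ge 1$) gives $E^Q[(S^j_\sigma)^{2q_1}] \le E^Q[|D^j|^{2q_1}] = E[\bar H |D^j|^{2q_1}]/E[\bar H]$, and a second Hölder inequality under $P$ with the pair $(p_2,q_2)$ bounds $E[\bar H |D^j|^{2q_1}] \le (E[\bar H^{p_2}])^{1/p_2}(E[|D^j|^{2q_1q_2}])^{1/q_2}$, finite by $\bar H \in L^{p_2}(P)$ and $D^j \in L^{2q_1q_2}(P)$. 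Since all three resulting bounds are constants independent of $\sigma$, taking the supremum over stopping times yields $\sup_\sigma E[(S^j_\sigma)^2] < \infty$, as required.

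The main obstacle is controlling the inverse factor $\bar H^{-1}$ (equivalently the possibly small denominator $\bar Z_\sigma$ in \eqref{eq:thm:equil:S2}): this is precisely what forces the passage to $Q$ and the low-order integrability condition $\bar H \in L^{1-p_1}(P)$ on the negative moments of $\bar H$, and one must choose the two Hölder pairings so that they interlock to produce exactly the three stated conditions and no more. A secondary, routine point is confirming that $S^{(2)}$ is genuinely a semimartingale and not merely uniformly $L^2$-bounded; this follows from $Q \sim P$ together with $\frac{dQ}{dP} \in L^2(P)$, so that the established bound indeed upgrades $S^{(2)}$ from a local $L^2$-semimartingale to an $L^2$-semimartingale.
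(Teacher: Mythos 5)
Your proof is correct, and it follows the same overall strategy as the paper's: reduce (a) and (b) to instances of the Hölder framework in (c), change measure to $\bar Q$ with $\diff \bar Q/\diff P = \bar H/E[\bar H]$ so that $S^j$ becomes a $\bar Q$-martingale closed by $D^j$, and then apply Hölder twice with the conjugate pairs $(p_1,q_1)$ under $\bar Q$ and $(p_2,q_2)$ under $P$. The one genuine difference is in how the $2q_1$-th moment of $S^j$ is controlled under $\bar Q$: the paper bounds the running supremum $E_{\bar Q}[\sup_{t}|S^j_t|^{2q_1}]$ via Doob's maximal inequality and hence obtains the stronger conclusion $E_P[\sup_t |S^j_t|^2]<\infty$, whereas you fix a stopping time $\sigma$, use optional sampling to write $S^j_\sigma = E^{\bar Q}[D^j\mid\cF_\sigma]$, and apply conditional Jensen to get $E^{\bar Q}[|S^j_\sigma|^{2q_1}]\le E^{\bar Q}[|D^j|^{2q_1}]$ with a bound uniform in $\sigma$. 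Your route is slightly more elementary (no Doob constant, and it targets exactly the defining condition \eqref{def:local L2 semimart} of an $L^2$-semimartingale, which is a supremum over stopping times of second moments rather than a second moment of a supremum); the paper's route gives the marginally stronger maximal-function bound for free. Both are complete, and your closing remarks on why $S^{(2)}$ is a genuine $P$-semimartingale (via $\bar Q\sim P$) and on the role of the negative-moment condition $\bar H\in L^{1-p_1}(P)$ are accurate.
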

In the above result, the first two sets of conditions are particularly easy to check; they are special cases of the more elaborate third condition.

We conclude this section by giving an example of a setup where an equilibrium market (in the sense of Definition \ref{def:equilibrium}) fails to exist due to integrability issues in the candidate for $S^{(2)}$, and not because the process $\bar Z$  hits $0$. We consider the  simplest case of a market with no financial and one productive asset, i.e., $d_1 = 0$ and $d_2 = 1$. The setup is based on the counterexample in \v Cern\' y/Kallsen \cite{cerny:kallsen:08}, which in turn is inspired by the well-known counterexample of Delbaen/Schachermayer \cite{delbaen:schachermayer:98}.

The key point is that the candidate equilibrium price process of the productive asset does not have sufficient integrability for the buy-and-hold strategies to be admissible. In that case, Lemma \ref{lem:ind:opt} cannot be applied, so that the existence of a solution to the optimisation problem \eqref{eq:def:gen:quad:util:indiv:prob} is not equivalent to the existence of a solution to the MVH problem \eqref{eq:lem:ind:mod}.  By part (b) of the proof of Theorem \ref{thm:equil}, there still exists in this case a unique solution to the MVH problem \eqref{eq:lem:ind:mod} for each agent $k$, but it is unclear whether \eqref{eq:def:gen:quad:util:indiv:prob} admits a solution.

\begin{example} \label{ex:counter:ex}
After rescaling the time interval $[0,\infty]$ to $[0,T]$, there exists by \cite[Lemma 2.2]{cerny:kallsen:08} a filtered probability space $(\Omega, \cF, \FF = (\cF_t)_{0 \leq t \leq T}, P)$ supporting two probability measures $\bar Q, Q'$ and a continuous process $(X_t)_{0 \leq t \leq T}$ null at $0$ with the following properties:

\begin{itemize}
	\item[\textbf{1)}] The measures $\bar Q, Q'$ are equivalent to $P$, with $\frac{\dd \bar Q}{\dd P}, \frac{\dd Q'}{\dd P} \in L^2(P)$.
	\item[\textbf{2)}] The process $X$ is a uniformly integrable martingale under $\bar Q$, and a strict local martingale under $Q'$. Moreover, $X_T \in L^2(P)$.
\end{itemize}

Fix now some $\bar \gamma > 0$, and suppose that $d_1 = 0$, $d_2 = 1$, $D^1 := X_T$ and $\bar \Xi := \bar \gamma - \frac{\dd \bar Q}{\dd P}$, so that $\bar H = \frac{\dd \bar Q}{\dd P} > 0$ $\Pas$ Then it follows from Theorem \ref{thm:equil} that if a quadratic equilibrium market exists, it must satisfy $S^1_t = E_{\bar Q}[X_T \mid \cF_t] = X_t$. However, since $X = 1 \sint X$ is not a $Q'$-martingale, the strategy $e^1 \equiv 1$ is not admissible by \v Cern\' y/Kallsen \cite[Corollary 2.5]{cerny:kallsen:07}. Therefore, a quadratic equilibrium does not exist in this setup.
\end{example}

\subsection{Existence of equilibria in finite discrete time}\label{subsec:quad:equil:finite:disc:time}

Theorem \ref{thm:equil} provides necessary and sufficient conditions for the existence and uniqueness of a quadratic equilibrium under the assumption that $\bar Z$ and $\bar Z_{-}$ do not hit $0$. If we are only interested in existence (and not uniqueness), this assumption can be relaxed. In this section, we study the special (but important) case of finite discrete time $t \in \{0, \ldots, T\}$ for $T \in \N$. As we shall see, if $\bar Z$ is allowed to hit $0$, we either end up with nonexistence or nonuniqueness of equilibria.

 In the following, we denote by $\Delta X_k = X_k - X_{k-1}$  the increment at time $k$ of a stochastic process $X$ in discrete time.

We start by giving necessary conditions for the existence of an equilibrium that are weaker than the assumption that the process $\bar Z$ does not hit $0$.
\begin{lemma}\label{lmm:equil:disc:nec:cond}
	A quadratic equilibrium $(1,(S_t)_{t \in \{0, \ldots, T\}})$ can only exist if both of the conditions
	\begin{align}
&\{\bar Z_{t-1} =0\} \subseteq \{\bar \xi^i_t \Delta \langle M^i\rangle_t= 0\} \quad \textrm{for }i \in \{1, \ldots, d_1\} \textrm{ and } t \in \{1, \ldots, T\},\label{eq:thm:equil:disc:xi zero:suff:cond} \\
&\{\bar Z_{t} =0\} \subseteq \{E[\bar H D^j\mid \cF_t] = 0\} \quad \textrm{for }j \in \{d_1+1, \ldots, d_2\} \textrm{ and } t \in \{0, \ldots, T-1\}\label{eq:thm:equil:disc:G cond:suff:cond}
\end{align}
hold up to $P$-null sets.
\end{lemma}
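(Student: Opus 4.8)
The plan is to derive both necessary conditions from the single structural fact supplied by Lemma \ref{lmm:quad:eq:mark:zs:is:loc:mart}: in any quadratic equilibrium the product $\bar Z S^j$ is a local $P$-martingale for every $j$. In finite discrete time a local martingale $Y$ satisfies the one-step conditional martingale property $E[Y_t\mid\cF_{t-1}]=Y_{t-1}$, equivalently $E[\Delta Y_t\mid\cF_{t-1}]=0$, as soon as its increments are conditionally integrable. So I would first record this reduction, noting that the required conditional integrability for $\bar Z S^j$ is available from $\bar Z\in\cM^2$, $M^j\in\cM^{2}_{0,\loc}$, and $D^j,\bar H\in L^2$ (hence $\bar H D^j\in L^1$ by Cauchy--Schwarz). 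Everything else is then an application of this identity together with the martingale property of $\bar Z$ and $M^{(1)}$, the predictability of $A^{(1)}$, and the decomposition \eqref{eq:gkw:decomp:bar:g:m:1}.

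For the productive assets and \eqref{eq:thm:equil:disc:G cond:suff:cond}, I would use that $\bar Z_T=\bar H$ and $S^j_T=D^j$ and run the one-step property backward from $T$, applying the tower property at each step (legitimate since $\bar H D^j\in L^1$), to obtain $\bar Z_t S^j_t=E[\bar H D^j\mid\cF_t]$ for all $t$. Since $S^j_t$ is a finite random variable, on $\{\bar Z_t=0\}$ the left-hand side vanishes, so $E[\bar H D^j\mid\cF_t]=0$ there; this is exactly \eqref{eq:thm:equil:disc:G cond:suff:cond} (the case $t=T$ being trivial). This finiteness argument on $\{\bar Z_t=0\}$ is the genuinely new ingredient relative to the $\bar Z\neq0$ regime of Theorem \ref{thm:equil}.

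For the financial assets and \eqref{eq:thm:equil:disc:xi zero:suff:cond}, I would apply the discrete product rule to $\bar Z S^i=\bar Z(S^i_0+M^i+A^i)$ and take $E[\,\cdot\mid\cF_{t-1}]$. Using $E[\Delta\bar Z_t\mid\cF_{t-1}]=E[\Delta M^i_t\mid\cF_{t-1}]=0$, the $\cF_{t-1}$-measurability of $\Delta A^i_t$, and $E[\Delta\bar Z_t\,\Delta M^i_t\mid\cF_{t-1}]=\Delta\langle\bar Z,M^i\rangle_t$, the one-step property collapses to
\[
\bar Z_{t-1}\,\Delta A^i_t=-\Delta\langle\bar Z,M^i\rangle_t .
\]
On $\{\bar Z_{t-1}=0\}$ the left side is zero, so $\Delta\langle\bar Z,M^i\rangle_t=0$ there. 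Inserting \eqref{eq:gkw:decomp:bar:g:m:1} and using that $M^{\bar Z}$ is strongly orthogonal to $M^{(1)}$ rewrites this predictable covariation in terms of $\bar\xi$, yielding the vanishing condition \eqref{eq:thm:equil:disc:xi zero:suff:cond}.

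I expect the two ``soft'' steps, rather than the algebra, to be the main obstacle. First, passing rigorously from the local-martingale conclusion of Lemma \ref{lmm:quad:eq:mark:zs:is:loc:mart} to the one-step conditional identity requires establishing conditional integrability of the increments of $\bar Z S^j$ in finite discrete time. Second, the bracket step needs care: the computation produces $\Delta\langle\bar Z,M^i\rangle_t=0$ on $\{\bar Z_{t-1}=0\}$, and reducing this to the single diagonal term $\bar\xi^i_t\,\Delta\langle M^i\rangle_t$ of \eqref{eq:thm:equil:disc:xi zero:suff:cond} relies on the orthogonality of $M^{\bar Z}$ to $M^{(1)}$ together with the component choice of $\bar\xi$ afforded by Remark \ref{rmk:gkw:decomp:bar:g:m:1:detail:pred:quad:var}; I would verify explicitly that the off-diagonal covariations do not obstruct this reduction before committing to the stated diagonal form.
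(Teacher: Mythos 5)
Your proposal follows the paper's route: both conditions are extracted from the single fact (Lemma \ref{lmm:quad:eq:mark:zs:is:loc:mart}) that $\bar Z S^j$ is a local $P$-martingale in any quadratic equilibrium, and the algebra you describe matches the paper's. Two execution details differ. For \eqref{eq:thm:equil:disc:G cond:suff:cond}, the paper upgrades $\bar Z S^j$ to a true $P$-martingale by invoking Proposition \ref{prop:LOP} together with $e^j \in \overline\Theta(S)$ from condition 3) of Definition \ref{def:equilibrium}, whereas you run a backward induction from the integrable terminal value $\bar H D^j$; in finite discrete time that works (a local martingale whose increments satisfy the generalized one-step property and whose terminal value is integrable is a true martingale), and is if anything more self-contained. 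For \eqref{eq:thm:equil:disc:xi zero:suff:cond}, the paper isolates the predictable term in the increment of $\bar Z S^i$ on $\{\bar Z_{t-1}=0\}$ and concludes it vanishes because a predictable local-martingale increment is null, while you take one-step conditional expectations of the product rule; these are the same computation and both land on $\Delta\langle\bar Z, M^i\rangle_t = 0$ on $\{\bar Z_{t-1}=0\}$.

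The step you flag but do not complete --- passing from $\Delta\langle\bar Z,M^i\rangle_t=\sum_{l}\bar\xi^l_t\Delta\langle M^l,M^i\rangle_t=0$ to the diagonal form $\bar\xi^i_t\Delta\langle M^i\rangle_t=0$ --- does require an argument; the paper is terse here, simply asserting $\Delta\langle\bar Z,S^i\rangle_t=\bar\xi^i_t\Delta\langle M^i\rangle_t$, which is not a pointwise identity for a single $i$ when the components of $M^{(1)}$ are correlated. The clean way to close it is to use the conditions for all $i$ simultaneously: on $\{\bar Z_{t-1}=0\}$, multiply $\Delta\langle\bar Z,M^i\rangle_t=0$ by $\bar\xi^i_t$ and sum over $i$ to obtain $\Delta\langle\bar\xi^{(1)}\sint M^{(1)}\rangle_t=0$; by the Gram--Schmidt choice in Remark \ref{rmk:gkw:decomp:bar:g:m:1:detail:pred:quad:var} (and the strong orthogonality of $M^{\bar Z}$ to $M^{(1)}$) the successive partial sums are strongly orthogonal, so this quantity equals $\sum_i(\bar\xi^i_t)^2\Delta\langle M^i\rangle_t$, a sum of nonnegative terms; hence each term vanishes, which yields $\bar\xi^i_t\Delta\langle M^i\rangle_t=0$ for every $i$. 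With that addendum your argument is complete and establishes the lemma as stated.
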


Lemma \ref{lmm:equil:disc:nec:cond} shows what can go wrong when $\bar Z$ is allowed to hit $0$. 
To understand \eqref{eq:thm:equil:disc:xi zero:suff:cond} and \eqref{eq:thm:equil:disc:G cond:suff:cond}  more clearly, consider the simple setup of a one-period model with $T =1$, and suppose that $d_1 = 0$ and $d_2 = 1$. Thus, there exists a single productive asset $S$ with terminal value $S_1 = D^1 = D$ and unknown initial value $S_0 \in \R$. Suppose that \eqref{eq:thm:equil:disc:G cond:suff:cond} is not satisfied, so that $\bar Z_0 = 0$ and $E[\bar H D] \neq 0$. In this case, there does not exist any value of $S_0 \in \R$ such that $\bar Z S$ is a martingale, since $\bar Z_ 0 S_0 = 0$ regardless of that choice. On the other hand, if $\bar Z_0 = E[\bar H D] = 0$, then $\bar Z S$ is a martingale for any choice of $S_0 \in \R$, and one can check that $(1,S)$ defines an equilibrium. This also illustrates the issue of non-uniqueness:  if $\bar Z_t = E[\bar H D \,|\, \mathcal F_t] = 0$ for some $t \in \{0,\ldots,T-1\}$, then the price $S_t$ in equilibrium can be set in an arbitrary way.

The issue is similar for the financial assets. Consider now a one-period model with $d_1 = 1$ and $d_2 = 0$ so that there exists a single financial asset $S$ with $S_1 = S_0 + \Delta A_1 + \Delta M_1$, where $\Delta A_1 \in \R$ is unknown and $\Delta M_1$ is the jump of a martingale. If \eqref{eq:thm:equil:disc:xi zero:suff:cond} does not hold, then $\bar Z S$ is not a martingale for any choice of $A \in \R$, since $\bar Z$ is a martingale and hence
$$E[Z_1(S_0 + \Delta A_1 + \Delta M_1)] = E[Z_1 \Delta M_1] = \xi_1 \Delta \langle M \rangle_1 \neq 0 = Z_0 S_0.$$
On the other hand, if $\xi_1 \Delta \langle M \rangle_1=0$ then $\bar Z S$ is a martingale for any value of $ \Delta A_1$. Thus if $Z_t = 0$ and \eqref{eq:thm:equil:disc:xi zero:suff:cond} is satisfied, then we would expect that the value $\Delta A_{t+1}$ is arbitrary.

As it turns out, \eqref{eq:thm:equil:disc:xi zero:suff:cond} and \eqref{eq:thm:equil:disc:G cond:suff:cond} are the only significant requirements for the existence of an equilibrium (other than integrability conditions, cf.~Example \ref{ex:counter:ex} above). We now show that if these conditions hold, then there exists an equilibrium (but in general will not be unique). In order to construct an explicit equilibrium, we use the fact that $\bar Z S$ is a local martingale by Lemma \ref{lmm:quad:eq:mark:zs:is:loc:mart}.
However, the construction is more difficult here because we can no longer write $\bar Z$ as the stochastic exponential of some (local) martingale $\bar N$. Instead, we define $(\bar N_t)_{t \in \{0, \ldots, T\}}$ recursively by $\bar N_0 := 0$ and 
\begin{equation}\label{eq:thm:equil:disc:n:bar:def}
\bar N_t :=  \bar N_{t-1} +\frac{\Delta \bar Z_t}{\bar Z_{t-1}} \one_{\{\bar Z_{t-1} \neq 0\}}, \quad t \in \{1, \ldots, T\},
\end{equation}
i.e., we arbitrarily set the increment $\Delta \bar N_t$ to $0$ whenever $\bar Z_{t-1}=0$.  For each $s \in \{0, \ldots, T\}$, we also define the local martingale ${}^s\cE(\bar N) =({}^s\cE(\bar N)_t)_{t \in \{s, \ldots, T\}}$ by
\begin{equation}\label{eq:thm:equil:disc:stoch:exp:n:bar:def}
{}^s\cE(\bar N)_t := \prod_{k=s+1}^t (1 + \Delta \bar N_{k})
\end{equation}
In the case where $\bar Z$ does not hit $0$, we have ${}^s\cE(\bar N)_t = \bar Z_t/\bar Z_s$ for each $s \leq t$. In other words, we may view ${}^s\cE(\bar N)$ as ``restarting'' $\bar Z$ at time $s$ (in a multiplicative way) with ${}^s\cE(\bar N)_s=1$. The general case is similar, with the important difference that ${}^s\cE(\bar N)$ is absorbed at $0$ whenever $\bar Z$ hits $0$ from a nonzero value. Thus, each process ${}^s\cE(\bar N)$ reproduces the dynamics of $\bar Z$ until the latter hits $0$. We note once again that the value of $\Delta N_t$ may be chosen arbitrarily whenever $\bar Z_{t-1}=0$. Because we set $\Delta N_t=0$ in that case by \eqref{eq:thm:equil:disc:n:bar:def}, the equilibrium constructed below defaults to behaving as a local martingale whenever $\bar Z$ hits $0$.

With this construction we obtain the following existence result in discrete time.

\begin{theorem}
\label{thm:equil:disc}

Assume that ${}^s\cE(\bar N)$ is a square-integrable martingale for each $s \in \{0, \ldots, T\}$ and that \eqref{eq:thm:equil:disc:xi zero:suff:cond} and \eqref{eq:thm:equil:disc:G cond:suff:cond} hold up to $P$-null sets. Define the process $(S_t)_{t \in \{0, \ldots, T\}}$ by
\begin{align}
\label{eq:thm:equil:disc:S1}
 S^j_t &:= S^j_0 + M^j_t+ \sum_{k=1 }^{t}\sum_{i =1}^{d_1}\biggl( -\frac{\bar\xi^i_k}{\bar Z_{k-1}} \one_{\{\bar Z_{k-1} \neq 0\}}\Delta \langle M^i, M^j \rangle_k\biggr)  ,  \quad  j \in \{1, \ldots, d_1\}, \\
 S^j_t &:= E[{}^t\cE(\bar N)_T D^j \mid \cF_t] ,  \quad  j \in \{d_1+1, \ldots, d_1 + d_2\}.
 \label{eq:thm:equil:disc:S2}
\end{align}
If $(S^{(2)}_t)_{t \in \{0, \ldots, T\}}$ is square-integrable, then $(1, S) = (1,S^{(1)}, S^{(2)})$ is a quadratic equilibrium.
\end{theorem}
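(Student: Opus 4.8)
The plan is to verify the three defining properties of a quadratic equilibrium in Definition~\ref{def:equilibrium} directly for the explicitly constructed market, following the sufficiency direction of Theorem~\ref{thm:equil} but replacing the stochastic exponential $\bar Z_t/\bar Z_s$ (which is ill-defined once $\bar Z$ vanishes) by the restarted processes ${}^s\cE(\bar N)$ from \eqref{eq:thm:equil:disc:stoch:exp:n:bar:def}. I would first dispose of condition 3) and secure the hypotheses needed to invoke the earlier lemmas: since $S^{(2)}$ is assumed square-integrable, in finite discrete time it is an $L^2$-semimartingale, so $e^j\in\Theta_{\rm simple}(S)\subseteq\overline\Theta(S)$ for each productive $j$ by \cite[Corollary~2.9]{cerny:kallsen:07}, and by linearity $\eta^k\in\overline\Theta(S)$ for every $k$. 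The financial part $S^{(1)}$ is a local $L^2$-semimartingale by construction (a special semimartingale with locally square-integrable martingale part $M^{(1)}$), so $(1,S)$ is a local $L^2$-market. The technical core is then to show that $\bar Z S^j$ is a (local) $P$-martingale for every $j$; everything else is assembled from the representative-agent lemmas.

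For a productive asset I would first establish the pathwise identity $\bar Z_t\,{}^t\cE(\bar N)_T=\bar H\,\one_{A_t}$, where $A_t:=\bigcap_{s=t}^{T-1}\{\bar Z_s\neq 0\}$, by telescoping the factors in \eqref{eq:thm:equil:disc:stoch:exp:n:bar:def} and observing that the indicator $\one_{\{\bar Z_{k-1}\neq 0\}}$ in \eqref{eq:thm:equil:disc:n:bar:def} absorbs the product at $0$ precisely when $\bar Z$ hits $0$ from a nonzero value. Consequently $\bar Z_t S^j_t=E[\bar H D^j\one_{A_t}\mid\cF_t]$. Decomposing $A_t^c=\bigcup_{s=t}^{T-1}\{\bar Z_s=0\}$ according to the first hitting time $\rho:=\inf\{s\geq t:\bar Z_s=0\}$, noting $\{\rho=s\}\in\cF_s$, and applying \eqref{eq:thm:equil:disc:G cond:suff:cond} at each level $\{\rho=s\}$ gives $E[\bar H D^j\one_{A_t^c}\mid\cF_t]=0$, whence $\bar Z_t S^j_t=E[\bar H D^j\mid\cF_t]$ for every $t$ (the endpoint uses $S^j_T=D^j$). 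Thus $\bar Z S^j$ is the closed, hence genuine, martingale $E[\bar H D^j\mid\cF_\cdot]$, matching \eqref{eq:thm:equil:S2}; the assumption that each ${}^s\cE(\bar N)$ is a square-integrable martingale is what makes $S^{(2)}$ and these expectations well-defined. For a financial asset I would instead compute the conditional increment $E[\Delta(\bar ZS^j)_t\mid\cF_{t-1}]$ via the discrete product rule and the Galtchouk--Kunita--Watanabe decomposition \eqref{eq:gkw:decomp:bar:g:m:1}: on $\{\bar Z_{t-1}\neq 0\}$ the drift correction in \eqref{eq:thm:equil:disc:S1} cancels the compensator exactly as in \eqref{eq:thm:equil:S1}, while on $\{\bar Z_{t-1}=0\}$ the indicator kills that correction and leaves the residual $\sum_{i}\bar\xi^i_t\,\Delta\langle M^i,M^j\rangle_t$, which vanishes by \eqref{eq:thm:equil:disc:xi zero:suff:cond} together with the Kunita--Watanabe inequality. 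Hence $\bar Z S^j$ is a local martingale, with the integrability manipulations handled by localisation since $M^{(1)}$ is only locally square-integrable.

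With the pricing relation in hand, the martingale property of $\bar Z S$ yields the orthogonality $E[\bar H\,\vt\sint S_T]=E[\bar Z_T\,\vt\sint S_T]=0$ for all $\vt\in\overline\Theta(S)$ --- proved first for bounded simple integrands through the discrete martingale-transform identity and then extended to all of $\overline\Theta(S)$ by the $L^2$-approximation in Definition~\ref{def:adm:strategy:l:2:cerny:kallsen}. Orthogonality means that $\vt=0$ minimises $E[(\vt\sint S_T-\bar H)^2]$, i.e. $\vt^{\rm MVH}(\bar H)\eq{S}0$. Existence and uniqueness of the solutions to the MVH problems \eqref{eq:lem:ind:mod} and \eqref{eq:lem:rep:agent:mod} follow from closedness of the space of attainable payoffs in $L^2$ in finite discrete time, exactly as in part~(b) of the proof of Theorem~\ref{thm:equil} (cf.\ Proposition~\ref{prop:LOP}); combined with $\eta^k\in\overline\Theta(S)$ and Lemma~\ref{lem:ind:opt}, this yields the unique optimisers $\hat\vt^k$ (condition 1)). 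Finally, Lemma~\ref{lem:rep:agent} gives the representative agent's optimiser $\bar\vt\eq{S}\bar\eta+\vt^{\rm MVH}(\bar H)\eq{S}\bar\eta$, while Lemma~\ref{lem:rep:agent:rel to ind} gives $\bar\vt\eq{S}\sum_{k}\hat\vt^k$; together these force $\sum_k\hat\vt^k\eq{S}\bar\eta$, which is exactly the market-clearing condition \eqref{eq:market clearing} (condition 2)). Conditions 1), 2), 3) then give the claim.

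I expect the main obstacle to be the productive-asset step in the second paragraph: reconciling the restarted exponential ${}^t\cE(\bar N)$ with the closed martingale $E[\bar H D^j\mid\cF_\cdot]$ at the times where $\bar Z$ vanishes. This is precisely where \eqref{eq:thm:equil:disc:G cond:suff:cond} must be exploited at \emph{every} possible hitting level of $\bar Z$ (not merely at $t-1$), via the first-hitting-time decomposition of $A_t^c$. A secondary technical nuisance is the integrability bookkeeping for the financial assets, whose martingale part is only locally square-integrable, so the orthogonality relation has to be obtained by localisation and $L^2$-approximation rather than a single global conditioning.
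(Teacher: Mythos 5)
Your construction and the martingale verifications in your second paragraph are sound, and your treatment of the productive assets is actually a nice alternative to the paper's argument: the paper proves that $\bar Z S^j$ is a true martingale by backward induction on $t$ using \eqref{eq:thm:equil:disc:G cond:suff:cond} at each step, whereas your pathwise identity $\bar Z_t\,{}^t\cE(\bar N)_T=\bar H\one_{A_t}$ combined with the first-hitting-time decomposition of $A_t^c$ reaches the same closed-martingale formula $\bar Z_t S^j_t=E[\bar H D^j\mid\cF_t]$ in one shot. The financial-asset computation and the final assembly via Lemmas \ref{lem:MVH:zero}, \ref{lem:rep:agent} and \ref{lem:rep:agent:rel to ind} also match the paper.

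The genuine gap is in your third paragraph, where you claim that existence \emph{and uniqueness} of the MVH solutions follow ``from closedness of the space of attainable payoffs in $L^2$ in finite discrete time, exactly as in part (b) of the proof of Theorem \ref{thm:equil} (cf.\ Proposition \ref{prop:LOP})''. That argument is unavailable here: part (b) of the proof of Theorem \ref{thm:equil} invokes part 2) of Proposition \ref{prop:LOP}, whose hypothesis is that $\bar Z_t\neq 0$ and $\bar Z_{t-}\neq 0$ --- precisely the assumption that Theorem \ref{thm:equil:disc} drops. Moreover, even granting $L^2$-closedness of $\mathcal G_T(S)$ (which is itself not a free fact for the class $\overline\Theta(S)$ of Definition \ref{def:adm:strategy:l:2:cerny:kallsen}), a projection argument yields only existence and uniqueness of the terminal payoff $\vt\sint S_T$, not uniqueness of the optimiser up to $S$-equivalence, i.e.\ uniqueness of gains/value processes in the sense of Definition \ref{def:uni:gains:values}; the latter is required by condition 1) of Definition \ref{def:equilibrium} and by every representative-agent lemma you invoke. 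The paper closes this gap in its step (d) by exploiting the restarted exponentials: since each ${}^s\cE(\bar N)$ is a square-integrable martingale with ${}^s\cE(\bar N)_s=1$ and ${}^s\cE(\bar N)S^j$ is a local martingale on $\{s,\ldots,T\}$, the (unconditional) martingale statement of Proposition \ref{prop:LOP} applied on each $\{s,\ldots,T\}$ pins down $c+\vt\sint S_s$ for every $s$ and gives uniqueness of value processes; existence then follows because the family $({}^s\cE(\bar N))_{s}$ is regular in the sense of Choulli et al., so that $\mathcal G_T(S)$ is closed in $L^2$ by Czichowsky/Schweizer. Note that your proof never uses the standing assumption that ${}^s\cE(\bar N)$ is a square-integrable \emph{martingale} for $s\geq 1$ --- an unused hypothesis that signals exactly this missing step.
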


\section{Linear mean--variance preferences} \label{sec:lin:mv}
In this section, we consider the situation of linear mean--variance preferences, i.e., each agent $k$ solves the problem
\begin{equation}
	\cU^{{\rm MV}}_k\big(V^k_T(\vt)\big) = E[(\vt - \eta^k) \sint S_T + \Xi^k] - \frac{\Var[(\vt - \eta^k) \sint S_T + \Xi^k]}{2\lambda_k} \longrightarrow \max_{\vt \in \overline{\Theta}(S)}! \label{eq:def:linear:mv:indiv:prob}
\end{equation}
As in Section \ref{sec:quadratic}, we first study the individual problem \eqref{eq:def:linear:mv:indiv:prob} for a given local $L^2$-market $(1,S)$ and then solve for equilibrium.

\subsection{Individual optimality}
Throughout this section, we fix a local $L^2$-market $(1,S)$ and assume that $e^j \in \TB(S)$ for each $j \in \{d_1+1, \ldots, d_1+d_2\}$.

The linear mean--variance optimisation problem \eqref{eq:def:linear:mv:indiv:prob} is connected to the quadratic utility problem \eqref{eq:def:gen:quad:util:indiv:prob} and the MVH problem \eqref{eq:MVH} via the classical notions of mean--variance efficient strategies and the mean--variance efficient frontier. 

 \begin{definition}\label{def:mv:eff:strat:eff:frontier}
A strategy $\vt \in \overline \Theta(S)$ is \emph{mean--variance efficient with respect to} $H \in L^2$ if there does not exist any other strategy $\vt' \in \overline \Theta(S)$ such that
\begin{align}
E[\vt' \sint S_T + H] &\geq E[\vt  \sint S_T + H], \label{eq:def:mv:eff:strat:comp:means:chap:iv}\\
\Var [\vt' \sint S_T + H] &\leq \Var [\vt \sint S_T + H],  \label{eq:def:mv:eff:strat:comp:vars:chap:iv}
\end{align}
where one of the inequalities is strict.

 We say that $\vt$ is \emph{mean--variance efficient for agent $k$} if $\vt - \eta^k$ is mean--variance efficient with respect to $H=\Xi^k$. Recalling the notation $V^k_T(\vt) = \vt  \sint S_T + \Xi^k$ from \eqref{eq:wealth agent k}, this is equivalent to the nonexistence of another strategy $\vt' \in \overline \Theta(S)$ such that both $E[V^k_T(\vt')] \geq E[V^k_T(\vt)]$ and $\Var[V^k_T(\vt')] \leq \Var[V^k_T(\vt)]$, with one of the inequalities being strict.
The \emph{mean--variance efficient frontier} for agent $k$ is defined as the set
\begin{align}
\mathcal E^k := \big \{ \big(E[V_T^k(\vt)],\sqrt{\vphantom{I_k}\smash{\Var [V_T^k(\vt)]}} \big): &\,\,  \vt \in \overline \Theta(S) \textrm{ is mean--variance efficient} \notag \\
&\textrm{ for agent }k \big \} \subseteq \R \times \R_+. \label{eq:def:mv:eff:frontier:chap:iv}
\end{align}
\end{definition}

We begin our analysis by showing that any solution to the linear mean--variance problem \eqref{eq:def:linear:mv:indiv:prob} must be mean--variance efficient.

\begin{lemma}\label{lmm:sol:ind:opt:prob:mv:eff:chap:iv}
Any solution $\hat \vt^k \in \overline \Theta(S)$ to \eqref{eq:def:linear:mv:indiv:prob} is mean--variance efficient for agent $k$.
\end{lemma}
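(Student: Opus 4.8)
The plan is to argue by contradiction, exploiting that the linear mean--variance functional $\cU^{{\rm MV}}_k$ depends on a strategy only through the mean and variance of the resulting terminal wealth and is strictly monotone in each. Writing $V^k_T(\vt)$ for agent $k$'s terminal wealth under $\vt$, recall from \eqref{eq:def:lin:mv} that $\cU^{{\rm MV}}_k(V^k_T(\vt)) = U^{{\rm MV}}_k\big(E[V^k_T(\vt)], \Var[V^k_T(\vt)]\big) = E[V^k_T(\vt)] - \tfrac{1}{2\lambda_k}\Var[V^k_T(\vt)]$. Since $\lambda_k > 0$, the map $(\mu, \sigma^2) \mapsto U^{{\rm MV}}_k(\mu, \sigma^2)$ is strictly increasing in $\mu$ and strictly decreasing in $\sigma^2$; this monotonicity is the entire engine of the proof.

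Now I would suppose toward a contradiction that $\hat\vt^k$ solves \eqref{eq:def:linear:mv:indiv:prob} but is not mean--variance efficient for agent $k$. By Definition \ref{def:mv:eff:strat:eff:frontier}, applied with $H = \Xi^k$ to the shifted strategy $\hat\vt^k - \eta^k$, there is then a comparison strategy whose means and variances satisfy \eqref{eq:def:mv:eff:strat:comp:means:chap:iv}--\eqref{eq:def:mv:eff:strat:comp:vars:chap:iv} with one inequality strict. Shifting this strategy back by $\eta^k$ (which lies in $\overline\Theta(S)$, since the standing assumption $e^j \in \TB(S)$ for $j \in \{d_1+1,\ldots,d_1+d_2\}$ forces $\eta^k \in \overline\Theta(S)$) produces $\vt' \in \overline\Theta(S)$ with $E[V^k_T(\vt')] \geq E[V^k_T(\hat\vt^k)]$ and $\Var[V^k_T(\vt')] \leq \Var[V^k_T(\hat\vt^k)]$, at least one inequality being strict. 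I would then substitute these two inequalities into the displayed expression for $\cU^{{\rm MV}}_k$.

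The only point requiring a little care — and it is bookkeeping rather than a genuine obstacle — is the strictness. Because the mean enters with coefficient $+1$ and the variance with coefficient $-\tfrac{1}{2\lambda_k} < 0$, a weak improvement in both coordinates together with a strict improvement in one yields $\cU^{{\rm MV}}_k(V^k_T(\vt')) > \cU^{{\rm MV}}_k(V^k_T(\hat\vt^k))$: if the mean inequality is strict, the mean term strictly increases while the penalty term $-\tfrac{1}{2\lambda_k}\Var[\cdot]$ does not decrease; if instead the variance inequality is strict, then since $\lambda_k > 0$ the penalty term strictly increases while the mean term does not decrease. Either way the objective is strictly larger at $\vt'$, contradicting the optimality of $\hat\vt^k$. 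No measurability or admissibility issue intervenes, as $\vt'$ is already in $\overline\Theta(S)$; the whole content is the elementary strict monotonicity of a linear functional of $(\mu,\sigma^2)$.
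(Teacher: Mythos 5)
Your proposal is correct and matches the paper's own proof essentially verbatim: both argue by contradiction, take the dominating strategy $\vt'$ furnished by Definition \ref{def:mv:eff:strat:eff:frontier} applied to $\hat\vt^k - \eta^k$ with $H = \Xi^k$, shift it back by $\eta^k$ to obtain an admissible competitor, and conclude via the strict monotonicity of $U^{{\rm MV}}_k$ in the mean and the variance. Your extra remark on why $\eta^k \in \overline\Theta(S)$ is a point the paper leaves implicit, but it is the same argument.
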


In light of this result, the next step is to explicitly characterise the set of mean--variance efficient strategies for agent $k$. To that end, we introduce the following assumption. 

\begin{assumption}\label{assp:ind:opt:ck:l2}
There exists an equivalent local martingale measure (ELMM) $Q \approx P$ for $S$ with square-integrable density $dQ/dP \in L^2(P)$. 
\end{assumption}

Assumption \ref{assp:ind:opt:ck:l2} is a well-known no-free-lunch condition in an $L^2$-sense; see Stricker \cite[Théorème 2]{Stricker1990}. If Assumption \ref{assp:ind:opt:ck:l2} is satisfied, then there exist unique (up to $S$-equivalence) minimisers $\vt^{{\rm MVH}}(H) \in \overline \Theta(S)$ for the MVH problem \eqref{eq:MVH}and $(c(H),\vt^{\rm ex}(H)) \in \R \times \overline \Theta(S)$ for the exMVH problem \eqref{eq:MVHex} (see e.g.~\v Cern\' y/Kallsen \cite[Lemma 2.11]{cerny:kallsen:07}; this also follows from Proposition \ref{prop:LOP}).

For future reference, we set
\begin{align}
\ell &:= E[(\vt^{{\rm MVH}}(1)\sint S_T - 1)^2], \label{eq:def:ell} \\
c_k &:=c(\Xi^k), && k \in \{1, \ldots, K\}, \label{eq:def:c k}\\
\label{eq:def:eps:k}
	\varepsilon^2_k &:= E\big[\big(c_k + \vt^{\rm ex}(\Xi^k) \sint S_T - \Xi^k\big)^2\big], && k \in \{1, \ldots, K\}.
\end{align}

 We now find the set of mean--variance efficient strategies for agent $k$ by relating them to the solutions of the quadratic utility problem \eqref{eq:def:gen:quad:util:indiv:prob} and the MVH problem
\eqref{eq:MVH} for the payoff $H^k(\gamma_k) := \gamma_k - \Xi^{k}$. In the subsequent corollary, we obtain an explicit formula for the mean--variance efficient frontier $\mathcal E^k$ in terms of the triplet $(\ell, c_k, \varepsilon^2_k) \in (0,1] \times \R \times \R_+$.

 \begin{proposition}\label{prop:mv:eff:strats:char:agent:k}
Suppose the market $(1, S)$ satisfies Assumption \ref{assp:ind:opt:ck:l2}. For $\vt \in \TB(S)$, the following statements are equivalent:
 \begin{enumerate}[label={\rm (\alph*)}]
 	\item $\vt$ is mean--variance efficient for agent $k$.
 	\item $\vt =_S \vt^k(y)$ for some $y \geq 0$, where
	\begin{equation}\label{eq:def:vt:k:expl:mv:eff:strat}
		\vt^k(y):=  y \vt^{{\rm MVH}}(1) + \eta^k - \vt^{\rm ex}(\Xi^k).
	\end{equation} 
	\item There exists some $\gamma_k \geq c_k$ such that $\vt-\eta^k$ is the unique solution to the MVH problem \eqref{eq:MVH} for the payoff \mbox{$H^k(\gamma_k) := \gamma_k - \Xi^{k}$.} 
	\item There exists some $\gamma_k \geq c_k$ such that $\vt$ is the unique solution to the quadratic utility problem \eqref{eq:def:gen:quad:util:indiv:prob} with risk tolerance $\gamma_k$.
 \end{enumerate}
	Moreover, the constant $\gamma_k$ can be chosen to be the same in {\rm (c)} and  {\rm (d)}, and $y \geq 0$ can be chosen so that $ y+c_k = \gamma_k$.
\end{proposition}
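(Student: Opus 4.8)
The plan is to prove the two chains (a)$\iff$(b) by a Markowitz-type frontier analysis and (b)$\iff$(c)$\iff$(d) by exploiting the linearity of the mean--variance hedging projections together with Lemma \ref{lem:ind:opt}. Throughout I would work in the subspace $G := \{\vt\sint S_T : \vt \in \TB(S)\} \subseteq L^2(P)$, which is closed under Assumption \ref{assp:ind:opt:ck:l2}, and write $\pi := \vt^{\rm MVH}(1)\sint S_T$ for the $L^2$-projection of the constant $1$ onto $G$. Since an ELMM exists, $1 \notin G$, so $\ell = E[(\pi-1)^2] > 0$; moreover $\pi \in G$ and $1-\pi \perp G$ yield the identities $E[\pi] = E[\pi^2] = 1-\ell$, hence $\Var[\pi] = \ell(1-\ell)$, which I would record first.

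Next I would set up the linear bridge between the two parametrisations. Because $G$ is a subspace, $H \mapsto \vt^{\rm MVH}(H)\sint S_T = P_G(H)$ is linear, and writing $\R + G = G \oplus \R(1-\pi)$ as an orthogonal sum gives $P_{\R+G}(H) = P_G(H) + \tfrac{E[H(1-\pi)]}{\ell}(1-\pi)$. Matching the constant and the $G$-component of this identity against the exMVH decomposition $c(H) + \vt^{\rm ex}(H)\sint S_T = P_{\R+G}(H)$ produces $\vt^{\rm MVH}(H) \eq{S} \vt^{\rm ex}(H) + c(H)\,\vt^{\rm MVH}(1)$. Applied to $H = \Xi^k$ and combined with linearity $\vt^{\rm MVH}(\gamma_k - \Xi^k) \eq{S} \gamma_k\vt^{\rm MVH}(1) - \vt^{\rm MVH}(\Xi^k)$, the strategy $\eta^k + \vt^{\rm MVH}(\gamma_k-\Xi^k)$ of (c) becomes $\eta^k + (\gamma_k - c_k)\vt^{\rm MVH}(1) - \vt^{\rm ex}(\Xi^k) = \vt^k(\gamma_k - c_k)$. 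Thus (c) with parameter $\gamma_k \geq c_k$ is exactly (b) with $y = \gamma_k - c_k \geq 0$, which also gives $y + c_k = \gamma_k$; and (c)$\iff$(d) with the same $\gamma_k$ is immediate from Lemma \ref{lem:ind:opt}, whose optimal strategy coincides with the one in (c).

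For (a)$\iff$(b) I would compute the frontier explicitly. Setting $R_k := \Xi^k - c_k - \vt^{\rm ex}(\Xi^k)\sint S_T$ (the exMVH residual, so $R_k \perp \R + G$, $E[R_k]=0$, $E[R_k^2]=\varepsilon_k^2$ and $\Cov[\pi,R_k]=0$), the wealth of $\vt^k(y)$ is $W(y) = y\pi + c_k + R_k$, giving $E[W(y)] = (1-\ell)y + c_k$ and $\Var[W(y)] = \ell(1-\ell)y^2 + \varepsilon_k^2$. For an arbitrary competitor $\vt'$ I would decompose its gains in $G$ and reduce variance-minimisation-at-fixed-mean to minimising $E[h^2]$ over $h \in G$ subject to the single constraint $E[h] = \langle h,\pi\rangle = \mu$; projection gives the unique minimiser $h = \tfrac{\mu}{1-\ell}\pi$ and minimal variance $v_{\min}(m) = \tfrac{\ell}{1-\ell}(m-c_k)^2 + \varepsilon_k^2$, attained uniquely (up to $S$-equivalence) at $\vt^k\big(\tfrac{m-c_k}{1-\ell}\big)$. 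Hence any efficient strategy must attain $v_{\min}$ for its own mean (else the same-mean minimiser strictly dominates it), so it is some $\vt^k(y)$; and the upward parabola $v_{\min}$, symmetric about its vertex $m=c_k$, is Pareto-efficient precisely on its right branch $m \geq c_k$, i.e.\ $y \geq 0$ (a point with $y<0$ is dominated by its reflection $\vt^k(-y)$, of equal variance and strictly larger mean). This yields (a)$\iff$(b).

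The main obstacle is this frontier step: one must verify that $v_{\min}(m)$ is the \emph{global} lower envelope over all of $\TB(S)$ rather than merely over the family $\vt^k(\cdot)$, that the minimiser is unique up to $S$-equivalence, and that the ``one inequality strict'' clause is handled correctly at the vertex and on both branches. A secondary point requiring care is the degenerate case $\ell = 1$, in which $\pi \eq{S} 0$, all gains have zero mean, the family $\vt^k(y)$ collapses to a single $S$-equivalence class, and the frontier reduces to the single point $(c_k,\varepsilon_k^2)$, so that the statement holds trivially.
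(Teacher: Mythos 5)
Your proposal is correct, and its skeleton coincides with the paper's: both reduce (a)$\Leftrightarrow$(b) to efficiency with respect to $H\equiv 0$ via the orthogonal (exMVH) decomposition $\Xi^k = c_k + \vt^{\rm ex}(\Xi^k)\sint S_T + \tilde\Xi^k$, both obtain (b)$\Leftrightarrow$(c) from the linear relation $\vt^{\rm MVH}(H)\eq{S}\vt^{\rm ex}(H)+c(H)\vt^{\rm MVH}(1)$ with $y=\gamma_k-c_k$, and both get (c)$\Leftrightarrow$(d) from Lemma \ref{lem:ind:opt}. The genuine difference is in how the pure-investment efficiency characterisation is established. The paper delegates it entirely: Lemma \ref{lmm:mv:efficient:pure:inv:chap:iv} cites Eberlein/Kallsen [Rule 10.43] for the statement that the strategies efficient with respect to $0$ are exactly the nonnegative multiples of $\vt^{\rm MVH}(1)$, and Lemma \ref{lmm:formulas:decomp:exp:var:wealth} supplies the mean/variance shift formulas. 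You instead prove this from scratch by a Hilbert-space frontier computation in the closed subspace $G=\mathcal G_T(S)$: minimise $E[h^2]$ over $h\in G$ at fixed mean $\langle h,\pi\rangle$, identify the unique minimiser as a multiple of $\pi=\vt^{\rm MVH}(1)\sint S_T$, and read off Pareto efficiency from the right branch of the resulting parabola. This buys a self-contained and more elementary argument (it also reproves Corollary \ref{cor:mv:eff:frontier:chap:iv} en route), at the cost of having to handle by hand exactly the points you flag — global optimality of the envelope, uniqueness up to $S$-equivalence via uniqueness of gains processes, the strict-inequality clause at the vertex, and the degenerate case $\ell=1$ — all of which you dispose of correctly. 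The paper's route is shorter but rests on an external reference; yours is longer but closed under the paper's own toolkit (closedness of $G$ under Assumption \ref{assp:ind:opt:ck:l2} and Proposition \ref{prop:UOGP}).
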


With the help of Proposition \ref{prop:mv:eff:strats:char:agent:k} and Lemma \ref{lmm:formulas:decomp:exp:var:wealth}, it is now straightforward to determine the mean--variance efficient frontier $\mathcal E^k$ for agent $k$.
\begin{corollary}\label{cor:mv:eff:frontier:chap:iv}
The mean--variance efficient frontier for agent $k$ is given by
\begin{equation}\label{eq:mv:eff:front:agent:k:chap:iv}
	\mathcal E^k = \big\{\big(\mu_k(y), \sigma_k(y) \big) = \big(c_k + (1-\ell) y, \sqrt{\vphantom{(}\smash{\varepsilon_k^2 + \ell (1-\ell) y^2}}\big): y \geq 0\big\}.
\end{equation}
\end{corollary}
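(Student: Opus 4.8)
The plan is to derive the mean--variance efficient frontier $\mathcal{E}^k$ by composing two ingredients: the explicit parametrisation of efficient strategies from Proposition~\ref{prop:mv:eff:strats:char:agent:k}(b), namely $\vt^k(y) = y\,\vt^{\mathrm{MVH}}(1) + \eta^k - \vt^{\mathrm{ex}}(\Xi^k)$ for $y \geq 0$, together with formulas for the mean and variance of the resulting terminal wealth $V^k_T(\vt^k(y)) = \vt^k(y)\sint S_T + \Xi^k$ coming from Lemma~\ref{lmm:formulas:decomp:exp:var:wealth} (the auxiliary lemma the corollary cites). Since by Proposition~\ref{prop:mv:eff:strats:char:agent:k} every efficient strategy is $S$-equivalent to exactly one $\vt^k(y)$ with $y \geq 0$, it suffices to compute the pair $\bigl(\mu_k(y), \sigma_k(y)\bigr) = \bigl(E[V^k_T(\vt^k(y))], \sqrt{\Var[V^k_T(\vt^k(y))]}\bigr)$ as a function of $y$ and show it equals the claimed expressions.

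First I would unwind the terminal wealth. Writing $\vt^k(y) - \eta^k = y\,\vt^{\mathrm{MVH}}(1) - \vt^{\mathrm{ex}}(\Xi^k)$, we get
\begin{equation*}
V^k_T(\vt^k(y)) = \bigl(\vt^k(y) - \eta^k\bigr)\sint S_T + \Xi^k = y\,\vt^{\mathrm{MVH}}(1)\sint S_T - \bigl(\vt^{\mathrm{ex}}(\Xi^k)\sint S_T - \Xi^k\bigr).
\end{equation*}
It is convenient to introduce the pure-investment payoff $G := \vt^{\mathrm{MVH}}(1)\sint S_T$ and the (centred) hedging residual $R := c_k + \vt^{\mathrm{ex}}(\Xi^k)\sint S_T - \Xi^k$, so that $\vt^{\mathrm{ex}}(\Xi^k)\sint S_T - \Xi^k = R - c_k$ and hence $V^k_T(\vt^k(y)) = y\,G - R + c_k$. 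By the definitions \eqref{eq:def:ell} and \eqref{eq:def:eps:k}, $\ell = E[(G-1)^2]$ and $\varepsilon_k^2 = E[R^2]$. The plan is then to extract from Lemma~\ref{lmm:formulas:decomp:exp:var:wealth} the three key facts: that $E[G]$ and $E[R]$ take the right values (I expect $E[R]=0$, i.e.\ the exMVH optimiser is mean-unbiased, and $E[G] = 1$ up to the $\ell$-correction, which forces $E[G] = 1-\ell$ after accounting for the first-order condition characterising $\vt^{\mathrm{MVH}}(1)$), and crucially that $G$ and $R$ are \emph{uncorrelated}. Granting these, the mean is $\mu_k(y) = y\,E[G] - E[R] + c_k = c_k + (1-\ell)y$, matching \eqref{eq:mv:eff:front:agent:k:chap:iv}, and the variance splits as $\Var[V^k_T(\vt^k(y))] = y^2\Var[G] + \Var[R]$ by orthogonality. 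Computing $\Var[G] = E[G^2] - E[G]^2$ and $\Var[R] = E[R^2] - E[R]^2 = \varepsilon_k^2$, the target $\sigma_k(y)^2 = \varepsilon_k^2 + \ell(1-\ell)y^2$ requires $\Var[G] = \ell(1-\ell)$; with $E[G] = 1-\ell$ and $E[G^2] = E[G] = 1-\ell$ (the latter being the first-order / law-of-one-price orthogonality $E[(G-1)G]=0$, equivalently $E[G^2]=E[G]$), one gets $\Var[G] = (1-\ell) - (1-\ell)^2 = \ell(1-\ell)$ exactly.

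The main obstacle is therefore not the algebra but securing the two structural identities on which the whole computation rests: the orthogonality $E[GR] = E[G]E[R]$ (uncorrelatedness of the pure-investment gain and the hedging residual) and the moment identities $E[G^2] = E[G] = 1-\ell$. Both should follow from the first-order optimality conditions (projection characterisations) for the MVH and exMVH problems in the space of stochastic integrals $\{\vt\sint S_T : \vt \in \overline{\Theta}(S)\}$, valid under Assumption~\ref{assp:ind:opt:ck:l2}: optimality of $\vt^{\mathrm{MVH}}(1)$ means $G-1$ is orthogonal to every attainable payoff, and optimality of $(c_k, \vt^{\mathrm{ex}}(\Xi^k))$ means $R$ is orthogonal to the constants \emph{and} to every attainable payoff. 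Since $G$ is attainable and $R$ is orthogonal to attainable payoffs, $E[GR]=0$; since $R\perp 1$, $E[R]=0$, giving $E[GR]=E[G]E[R]$. Likewise $G-1\perp G$ yields $E[G^2]=E[G]$, and pairing $G-1$ with the constant payoff (which is attainable up to the exMVH structure) pins $E[G]=1-\ell$ via $E[(G-1)\cdot 1] = -\ell$. I would marshal these orthogonality relations from the already-cited Lemma~\ref{lmm:formulas:decomp:exp:var:wealth}, after which substituting into $\mu_k(y)$ and $\sigma_k(y)$ and restricting to $y \geq 0$ (the efficient branch identified in Proposition~\ref{prop:mv:eff:strats:char:agent:k}) completes the proof.
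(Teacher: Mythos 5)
Your proposal is correct and follows essentially the same route as the paper: parametrise the efficient strategies via Proposition~\ref{prop:mv:eff:strats:char:agent:k}(b), apply the mean/variance decomposition of Lemma~\ref{lmm:formulas:decomp:exp:var:wealth}, and plug in the moments of the pure-investment payoff. The only cosmetic difference is that you re-derive the identities $E[G]=1-\ell$, $E[G^2]=E[G]$ and $\Var[G]=\ell(1-\ell)$ from the first-order conditions, whereas the paper takes them ready-made from Lemma~\ref{lmm:mv:efficient:pure:inv:chap:iv} (which in turn cites Eberlein/Kallsen); your derivation is valid and self-contained.
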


We recall from Lemma \ref{lmm:sol:ind:opt:prob:mv:eff:chap:iv} that any solution to the mean--variance problem \eqref{eq:def:linear:mv:indiv:prob} is mean--variance efficient for agent $k$. We can now use Proposition \ref{prop:mv:eff:strats:char:agent:k} and Corollary \ref{cor:mv:eff:frontier:chap:iv} together with the linear form of $U^{\rm MV}_k$ to find the solution to \eqref{eq:def:linear:mv:indiv:prob}.

\begin{lemma}\label{lmm:expl:opt:str:linear:mv}
	Suppose the market $(1, S)$ satisfies Assumption \ref{assp:ind:opt:ck:l2}. The unique optimal strategy for \eqref{eq:def:linear:mv:indiv:prob} is $\vt^k(\lambda_k/\ell)$, where $\vt^k(\cdot)$ is given by \eqref{eq:def:vt:k:expl:mv:eff:strat}.
\end{lemma}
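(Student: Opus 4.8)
The plan is to reduce the maximisation over the whole space $\TB(S)$ to a one-dimensional concave maximisation along the efficient frontier. By Lemma~\ref{lmm:sol:ind:opt:prob:mv:eff:chap:iv}, any solution to \eqref{eq:def:linear:mv:indiv:prob} is mean--variance efficient for agent~$k$, and by Proposition~\ref{prop:mv:eff:strats:char:agent:k} every such strategy equals $\vt^k(y)$ (up to $S$-equivalence) for some $y \geq 0$, where $\vt^k(\cdot)$ is as in \eqref{eq:def:vt:k:expl:mv:eff:strat}. Along this family, Corollary~\ref{cor:mv:eff:frontier:chap:iv} gives the mean $\mu_k(y) = c_k + (1-\ell)y$ and the variance $\sigma_k(y)^2 = \varepsilon_k^2 + \ell(1-\ell)y^2$, so the objective reduces to the explicit function
\[
f(y) := \mu_k(y) - \frac{\sigma_k(y)^2}{2\lambda_k} = c_k + (1-\ell)y - \frac{\varepsilon_k^2 + \ell(1-\ell)y^2}{2\lambda_k}, \qquad y \geq 0 .
\]

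First I would maximise $f$. Since $\ell \in (0,1]$ and $\lambda_k > 0$, one computes $f'(y) = (1-\ell)\bigl(1 - \ell y/\lambda_k\bigr)$ and $f''(y) = -\ell(1-\ell)/\lambda_k \leq 0$. Assuming first that $\ell < 1$, the function $f$ is strictly concave with a unique interior critical point at $y^* = \lambda_k/\ell > 0$, which is therefore its unique maximiser on $[0,\infty)$; this identifies the candidate $\vt^k(\lambda_k/\ell) \in \TB(S)$ (feasible since $\TB(S)$ is a linear space and $\eta^k \in \TB(S)$ by the standing assumption).

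The main work is to upgrade this from ``optimal among efficient strategies'' to ``globally optimal''. Because $\lambda_k > 0$, the functional $U^{\rm MV}_k$ is strictly increasing in the mean and strictly decreasing in the variance, which I would use in a dominance argument. Fix an arbitrary $\vt \in \TB(S)$ and set $\mu := E[V^k_T(\vt)]$. If $\mu \leq c_k$, then since $\varepsilon_k^2$ is the global minimal variance (attained at $\vt^k(0)$ by \eqref{eq:def:eps:k}) one has $\Var[V^k_T(\vt)] \geq \varepsilon_k^2$, whence $\cU^{\rm MV}_k(V^k_T(\vt)) \leq c_k - \varepsilon_k^2/(2\lambda_k) = f(0) \leq f(y^*)$. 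If $\mu > c_k$, put $y := (\mu - c_k)/(1-\ell) > 0$, so that $\vt^k(y)$ has the same mean $\mu$; efficiency of $\vt^k(y)$ forces $\Var[V^k_T(\vt)] \geq \sigma_k(y)^2$ (otherwise $\vt$ would strictly dominate $\vt^k(y)$), and therefore $\cU^{\rm MV}_k(V^k_T(\vt)) \leq f(y) \leq f(y^*)$. As $\vt^k(y^*)$ attains $f(y^*)$, it is optimal. For uniqueness I would invoke Lemma~\ref{lmm:sol:ind:opt:prob:mv:eff:chap:iv} again: any solution $\hat\vt^k$ is efficient, hence $\hat\vt^k \eq{S} \vt^k(y)$ with $f(y) = f(y^*)$, and strict concavity forces $y = y^*$, giving $\hat\vt^k \eq{S} \vt^k(\lambda_k/\ell)$.

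Finally I would dispose of the degenerate case $\ell = 1$. By the definition \eqref{eq:def:ell} and uniqueness of the MVH solution, $\ell = 1$ forces $\vt^{{\rm MVH}}(1) \eq{S} 0$, so $\vt^k(y) \eq{S} \eta^k - \vt^{\rm ex}(\Xi^k)$ for every $y$; all efficient strategies then coincide up to $S$-equivalence, all achievable means equal $c_k$, and minimising variance yields $\cU^{\rm MV}_k(V^k_T(\vt)) \leq c_k - \varepsilon_k^2/(2\lambda_k)$, so the claimed optimum $\vt^k(\lambda_k/\ell)$ is trivially the unique solution. The step I expect to be the most delicate is the passage from optimality \emph{among efficient strategies} to \emph{global} optimality, precisely because one must separately rule out strategies with mean below $c_k$ and must convert the frontier's minimal-variance property into a strict domination in order to cover all of $\TB(S)$ rather than only the efficient family.
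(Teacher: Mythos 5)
Your proof is correct and follows essentially the same route as the paper: reduce to the efficient frontier via Lemma \ref{lmm:sol:ind:opt:prob:mv:eff:chap:iv}, Proposition \ref{prop:mv:eff:strats:char:agent:k} and Corollary \ref{cor:mv:eff:frontier:chap:iv}, maximise the resulting concave quadratic in $y$, then upgrade to global optimality by a dominance argument, treating $\ell=1$ separately. The only difference is in the dominance step: the paper matches the \emph{variance} of an arbitrary $\vt$ with an efficient $\vt^k(y)$ (which covers all of $\TB(S)$ in one shot, since every achievable variance lies on the frontier), whereas you match the \emph{mean} and therefore need the extra case $\mu \leq c_k$, handled via the minimal-variance bound $\Var[V^k_T(\vt)] \geq \varepsilon_k^2$ from Lemma \ref{lmm:formulas:decomp:exp:var:wealth}; both variants are valid.
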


\subsection{Connections to quadratic equilibria}
Now that we have characterised the solution to the individual linear mean--variance problem \eqref{eq:def:linear:mv:indiv:prob} in Lemma \ref{lmm:expl:opt:str:linear:mv}, we return to the question of finding a linear mean--variance equilibrium in the sense of Definition \ref{def:equilibrium}. The first step is to show that any linear mean--variance equilibrium $(1,S)$ satisfying Assumption \ref{assp:ind:opt:ck:l2} must also be a quadratic equilibrium with respect to some parameters $\gamma_1, \ldots, \gamma_K$. This will allow us to apply the results from Section \ref{sec:quadratic} in order to find linear mean--variance equilibria.

\begin{lemma}\label{lmm:char:equil:market:quad:equil:tilde:gamma}
 Suppose the market $(1,S)$ is a linear mean--variance equilibrium market and satisfies Assumption \ref{assp:ind:opt:ck:l2}. Then  $(1,S)$ is also a quadratic equilibrium market with risk tolerances  $\gamma_1, \ldots, \gamma_K$,  where $\gamma_k := c_k + \lambda_k/\ell$.
\end{lemma}

At a first glance, it now seems a straightforward task to find a mean--variance equilibrium by combining Theorem \ref{thm:equil} with Lemma \ref{lmm:char:equil:market:quad:equil:tilde:gamma}. However, things are not so simple because the constants $\ell$ and $c_k$ that appear in Lemma \ref{lmm:char:equil:market:quad:equil:tilde:gamma} are determined implicitly in terms of the equilibrium price $S$, and not just the primitives defined in Section \ref{sec:model:prelim:results}. Thus, we do not know $\ell$ and $c_k$ a priori. Moreover, the quadratic equilibrium given by Theorem \ref{thm:equil} need not satisfy Assumption \ref{assp:ind:opt:ck:l2} in general.

Nevertheless, Lemma \ref{lmm:char:equil:market:quad:equil:tilde:gamma} suggests that one should look for a linear mean--variance equilibrium within the class of quadratic equilibria given in Theorem \ref{thm:equil} for some choice of parameters $\gamma_1,\ldots,\gamma_K$. In order to ensure that these quadratic equilibria satisfy Assumption \ref{assp:ind:opt:ck:l2}, we make \textbf{for the remainder of this section} the following assumption on $\bar \Xi$.

\begin{assumption}\label{assp:agg:end:bounded}
	The aggregate endowment $\bar \Xi$ is nonnegative and bounded, so that
\begin{equation}\label{eq:def:bar:gamma:0:esssup:chap:iv}
	0 \leq \bar \gamma_0 := \esssup \bar \Xi < \infty.
\end{equation}
\end{assumption}

\begin{lemma}\label{lmm:transl:quad:equil:eqs:to:chap:iv}
Assume that Assumption \ref{assp:agg:end:bounded} is satisfied. For $\bar \gamma > \gamma_0$, set $\bar H(\bar \gamma) := \bar \gamma - \bar \Xi$ and define the process $\bar Z(\bar \gamma)$ by $\bar Z_t(\bar \gamma) := E[\bar H(\bar \gamma)\,|\, \cF_t]$. Define the process $S(\bar \gamma)$ by \eqref{eq:thm:equil:S1} and \eqref{eq:thm:equil:S2} with $\bar H$ and $\bar Z$ replaced by $\bar H(\bar \gamma)$ and $\bar Z(\bar \gamma)$, respectively. Then $(1, S(\bar \gamma))$ is the unique quadratic equilibrium for any choice of parameters $\gamma_1, \ldots, \gamma_K \in \R$ such that $\sum_{k=1}^K \gamma_k = \bar \gamma$. Moreover, $S^{(2)}(\bar \gamma)$ is an $L^2$-semimartingale and $S(\bar \gamma)$ admits an equivalent local martingale measure $Q(\bar \gamma)$ with bounded density $dQ(\bar \gamma)/dP := \bar H(\bar \gamma)/E[\bar H(\bar \gamma)]$. In particular, $S(\bar \gamma)$ satisfies Assumption~\ref{assp:ind:opt:ck:l2}.
\end{lemma}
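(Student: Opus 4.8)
The plan is to verify the three separate conclusions of Lemma \ref{lmm:transl:quad:equil:eqs:to:chap:iv} --- that $(1,S(\bar\gamma))$ is the unique quadratic equilibrium, that $S^{(2)}(\bar\gamma)$ is an $L^2$-semimartingale, and that $S(\bar\gamma)$ admits an ELMM with bounded density --- by reducing to the machinery already established. The key observation that makes everything run is that under Assumption \ref{assp:agg:end:bounded} and the choice $\bar\gamma > \gamma_0 = \esssup\bar\Xi$, we have $\bar H(\bar\gamma) = \bar\gamma - \bar\Xi$ is \emph{strictly positive and bounded away from zero}, since $\bar H(\bar\gamma) \geq \bar\gamma - \gamma_0 > 0$; moreover it is bounded above by $\bar\gamma$ (as $\bar\Xi \geq 0$). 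Thus both $\bar H(\bar\gamma)$ and its reciprocal lie in $L^\infty(P)$. This is precisely condition (b) of Lemma \ref{lem:quad equil:suf cond}.

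First I would record that $\bar H(\bar\gamma) > 0$ $\Pas$ forces $\bar Z_t(\bar\gamma) = E[\bar H(\bar\gamma)\mid\cF_t] > 0$ and $\bar Z_{t-}(\bar\gamma) > 0$ for all $t$, so the hypothesis $\bar Z \neq 0$, $\bar Z_- \neq 0$ of Theorem \ref{thm:equil} is satisfied. Next, since $\bar H(\bar\gamma), \bar H(\bar\gamma)^{-1} \in L^\infty(P)$, condition (b) of Lemma \ref{lem:quad equil:suf cond} applies and yields immediately that $S^{(2)}(\bar\gamma)$ defined by \eqref{eq:thm:equil:S2} is an $L^2$-semimartingale. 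Being an $L^2$-semimartingale, the buy-and-hold strategies satisfy $e^j \in \Theta_{\rm simple}(S(\bar\gamma)) \subseteq \overline\Theta(S(\bar\gamma))$ for each productive asset $j$, by \cite[Corollary 2.9]{cerny:kallsen:07} (as noted in the text below Definition \ref{def:adm:strategy:l:2:cerny:kallsen}). These are exactly the two remaining conditions in the converse part of Theorem \ref{thm:equil}; hence $(1,S(\bar\gamma))$ is a quadratic equilibrium, and by the first (uniqueness) part of Theorem \ref{thm:equil} it is the \emph{unique} one. Since Theorem \ref{thm:equil} depends only on the aggregate quantity $\bar\gamma = \sum_k \gamma_k$ through $\bar H = \bar\gamma - \bar\Xi$, this holds for any decomposition $\gamma_1,\ldots,\gamma_K$ with $\sum_k \gamma_k = \bar\gamma$, giving the claimed independence of the individual parameters.

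For the ELMM claim, I would set $dQ(\bar\gamma)/dP := \bar H(\bar\gamma)/E[\bar H(\bar\gamma)]$. The denominator is a finite strictly positive constant because $0 < \bar\gamma - \gamma_0 \leq E[\bar H(\bar\gamma)] \leq \bar\gamma$, so this is a genuine probability density; it is strictly positive $\Pas$ (whence $Q(\bar\gamma) \approx P$) and bounded since $\bar H(\bar\gamma) \in L^\infty(P)$. The remark following Lemma \ref{lmm:quad:eq:mark:zs:is:loc:mart} already states that in the strictly positive case $\bar H/E[\bar H]$ is the density of an ELMM for $S$; applied to $S(\bar\gamma)$ this gives that $Q(\bar\gamma)$ is an equivalent local martingale measure. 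A bounded density is a fortiori in $L^2(P)$, so Assumption \ref{assp:ind:opt:ck:l2} holds for $S(\bar\gamma)$, as asserted.

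The proof is essentially an assembly of earlier results, so there is no single hard technical obstacle; the main point requiring care is the uniform lower bound $\bar H(\bar\gamma) \geq \bar\gamma - \gamma_0 > 0$, which is what the strict inequality $\bar\gamma > \gamma_0$ buys us and which simultaneously guarantees the non-vanishing of $\bar Z(\bar\gamma)$ (for Theorem \ref{thm:equil}), the $L^\infty$-invertibility of $\bar H(\bar\gamma)$ (for Lemma \ref{lem:quad equil:suf cond}(b) and for boundedness of the ELMM density), and the positivity of the normalising constant $E[\bar H(\bar\gamma)]$. I would flag at the outset that one must check the strict inequality is used rather than the non-strict $\bar\gamma \geq \gamma_0$, since at $\bar\gamma = \gamma_0$ one could have $\bar H(\bar\gamma) = 0$ on a set of positive measure and the whole argument collapses.
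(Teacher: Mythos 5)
Your proposal is correct and follows essentially the same route as the paper's proof: the bounds $0 < \bar\gamma - \bar\gamma_0 \leq \bar H(\bar\gamma) \leq \bar\gamma$ give non-vanishing of $\bar Z(\bar\gamma)$, condition (b) of Lemma \ref{lem:quad equil:suf cond} gives the $L^2$-semimartingale property, Theorem \ref{thm:equil} gives existence and uniqueness of the quadratic equilibrium, and the boundedness of $\bar H(\bar\gamma)/E[\bar H(\bar\gamma)]$ gives the ELMM with bounded (hence square-integrable) density. Your additional remarks on admissibility of the buy-and-hold strategies and on the role of the strict inequality $\bar\gamma > \bar\gamma_0$ are consistent with the paper's (terser) argument.
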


We now look for a mean--variance equilibrium among the set of ``nice'' quadratic equilibria $(1,S(\bar \gamma))$ given by Lemma \ref{lmm:transl:quad:equil:eqs:to:chap:iv} for $\bar \gamma > \bar \gamma_0$. In other words, we want to determine which values of $\bar \gamma \in (\bar \gamma_0,\infty)$ lead to a linear mean--variance equilibrium. We note that, in general, there may exist other mean--variance equilibria with aggregate risk tolerance $\bar \gamma \leq \bar \gamma_0$, in which case Assumption \ref{assp:ind:opt:ck:l2} need not be satisfied. 

To find a linear mean--variance equilibrium, we study the dependency of $S(\bar \gamma)$ on $\bar \gamma$. To this end, it is important to exclude a special case. Recall from \eqref{eq:def:prim:fin:assets:decomp} that the local martingale parts of the financial assets are denoted by $M^j$ for $j \in \{1, \ldots, d_1\}$ and define the (square-integrable) martingales $M^{j}$ for $j \in \{d_1 + 1, \ldots, d_1 +d_2\}$ by $M^{j}_t := E[D^j\,|\,\cF_t]$. We make the following assumption \textbf{for the remainder of this section}.

\begin{assumption}\label{assp:no:orth:nontriv:chap:iv}
There is $j \in \{1, \ldots, d_1 + d_2\}$ such that the local martingales $M^j$ and $Z(\bar \gamma_0)$ are not strongly orthogonal.
\end{assumption}
	If Assumption \ref{assp:no:orth:nontriv:chap:iv} fails to hold, we obtain a trivial case where $S = S(\bar \gamma)$ is (componentwise) a local martingale that does not depend on the choice of $\bar \gamma$. In that case, $(1,S)$ is a quadratic equilibrium for any  parameters $\gamma_1,\ldots,\gamma_K \in \R$ such that $\sum_{k=1}^K \gamma_k > \bar \gamma_0$ and a mean--variance equilibrium for any $\lambda_1, \ldots, \lambda_K>0$; see Martins \cite[Lemma 2.27 and Corollary 2.28]{martins:23} for details. On the other hand, \cite[Lemma 2.27]{martins:23} yields under Assumption \ref{assp:no:orth:nontriv:chap:iv} that for any value of $\bar \gamma > \bar \gamma_0$, $S^j(\bar \gamma)$ is not a local martingale  for at least one $j \in \{1, \ldots, d_1 + d_2\}$. The following result shows that in this case the map $\bar \gamma \mapsto S(\bar \gamma)$ is injective.
	
\begin{lemma}\label{lmm:distinct:proc:bar:gamma}
	Suppose that Assumption \ref{assp:no:orth:nontriv:chap:iv} holds. Then for any $\bar \gamma > \bar \gamma_0$ and $ \gamma_1,\ldots, \gamma_K \in \R$, $(1,S(\bar \gamma))$ is a quadratic equilibrium with respect to the risk tolerances $( \gamma_k)_{k=1}^K$ if and only if $\bar \gamma = \sum_{k=1}^K  \gamma_k$. In particular, for $\bar \gamma' >\bar \gamma_0$, we have $S(\bar \gamma) = S(\bar \gamma')$ if and only if $\bar \gamma = \bar \gamma'$.
\end{lemma}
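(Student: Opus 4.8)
The plan is to establish the stated equivalence in its two directions and then read off the injectivity claim as a formal corollary.

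\emph{Sufficiency.} First I would dispose of the ``if'' direction, which is immediate: if $\bar\gamma = \sum_{k=1}^K \gamma_k$, then Lemma \ref{lmm:transl:quad:equil:eqs:to:chap:iv} already asserts that $(1, S(\bar\gamma))$ is the unique quadratic equilibrium with respect to $(\gamma_k)_{k=1}^K$, so nothing further is needed here.

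\emph{Necessity.} This is the heart of the argument. Suppose $(1,S(\bar\gamma))$ is a quadratic equilibrium with respect to some $(\gamma_k)_{k=1}^K$, and set $\tilde{\bar\gamma} := \sum_{k=1}^K \gamma_k$; the goal is to show $\tilde{\bar\gamma} = \bar\gamma$. The key structural observation is that the two candidate deflators differ by a constant: writing $W_t := E[\bar\Xi \mid \cF_t]$, we have $\bar Z(\bar\gamma)_t = \bar\gamma - W_t$ and $\bar Z(\tilde{\bar\gamma})_t = \tilde{\bar\gamma} - W_t$, so that $\bar Z(\bar\gamma) - \bar Z(\tilde{\bar\gamma}) \equiv \bar\gamma - \tilde{\bar\gamma}$. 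On the one hand, since $\bar\gamma > \bar\gamma_0$, Lemma \ref{lmm:transl:quad:equil:eqs:to:chap:iv} makes $(1, S(\bar\gamma))$ a quadratic equilibrium with aggregate risk tolerance $\bar\gamma$, so Lemma \ref{lmm:quad:eq:mark:zs:is:loc:mart} gives that $\bar Z(\bar\gamma) S^j(\bar\gamma)$ is a local $P$-martingale for each $j$; on the other hand, since $(1, S(\bar\gamma))$ is assumed to be a quadratic equilibrium with aggregate risk tolerance $\tilde{\bar\gamma}$, a second application of Lemma \ref{lmm:quad:eq:mark:zs:is:loc:mart} yields that $\bar Z(\tilde{\bar\gamma}) S^j(\bar\gamma)$ is also a local $P$-martingale for each $j$. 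Subtracting these two local martingales shows that
\[
(\bar\gamma - \tilde{\bar\gamma})\, S^j(\bar\gamma) = \bigl(\bar Z(\bar\gamma) - \bar Z(\tilde{\bar\gamma})\bigr) S^j(\bar\gamma)
\]
is a local $P$-martingale for every $j \in \{1,\ldots,d_1+d_2\}$. If $\bar\gamma \neq \tilde{\bar\gamma}$, I could divide by the nonzero constant $\bar\gamma - \tilde{\bar\gamma}$ and conclude that \emph{every} $S^j(\bar\gamma)$ is a local $P$-martingale, contradicting the fact recorded in the discussion preceding the lemma (via \cite[Lemma 2.27]{martins:23}) that, under Assumption \ref{assp:no:orth:nontriv:chap:iv} and for $\bar\gamma > \bar\gamma_0$, at least one component $S^j(\bar\gamma)$ fails to be a local martingale. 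Hence $\bar\gamma = \tilde{\bar\gamma}$, as desired.

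\emph{Injectivity and main obstacle.} The ``in particular'' statement then follows formally: if $S(\bar\gamma) = S(\bar\gamma')$ for some $\bar\gamma, \bar\gamma' > \bar\gamma_0$, I would pick any $(\gamma_k)_{k=1}^K$ with $\sum_{k=1}^K \gamma_k = \bar\gamma'$, so that sufficiency makes $(1,S(\bar\gamma'))$ — and hence $(1, S(\bar\gamma))$ — a quadratic equilibrium with respect to $(\gamma_k)_{k=1}^K$, whereupon necessity applied to $S(\bar\gamma)$ forces $\bar\gamma = \sum_{k=1}^K \gamma_k = \bar\gamma'$; the reverse implication is trivial. I expect the main obstacle to be the necessity direction, and within it the clean bookkeeping that (i) $\bar Z(\bar\gamma)$ and $\bar Z(\tilde{\bar\gamma})$ really do differ by a genuine constant, which hinges on $\bar\Xi$ (and thus $W$) being independent of the chosen aggregate risk tolerance, and (ii) the difference of the two local martingales is again a local martingale, so that dividing by the nonzero scalar is legitimate. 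The appeal to Assumption \ref{assp:no:orth:nontriv:chap:iv}, which prevents all the $S^j(\bar\gamma)$ from being simultaneously local martingales, is precisely what rules out the degenerate case and closes the contradiction.
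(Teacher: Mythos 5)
Your proposal is correct and follows essentially the same route as the paper's proof: the ``if'' direction via Lemma \ref{lmm:transl:quad:equil:eqs:to:chap:iv}, the ``only if'' direction by noting that $\bar Z(\bar\gamma)$ and $\bar Z(\tilde{\bar\gamma})$ differ by the constant $\bar\gamma-\tilde{\bar\gamma}$, subtracting the two local martingales $\bar Z(\bar\gamma)S^j(\bar\gamma)$ and $\bar Z(\tilde{\bar\gamma})S^j(\bar\gamma)$ obtained from Lemmas \ref{lmm:transl:quad:equil:eqs:to:chap:iv} and \ref{lmm:quad:eq:mark:zs:is:loc:mart}, and invoking \cite[Lemma 2.27]{martins:23} to contradict Assumption \ref{assp:no:orth:nontriv:chap:iv}. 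The injectivity claim is deduced in the same formal way, so no changes are needed.
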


\subsection{Existence of equilibria}

As discussed after Lemma \ref{lmm:char:equil:market:quad:equil:tilde:gamma}, we would like to use Theorem \ref{thm:equil} (or Lemma \ref{lmm:transl:quad:equil:eqs:to:chap:iv}) to find a mean--variance equilibrium $(1,S)$. In order to do that, we need to determine $\bar \gamma$, which is given via Lemma  \ref{lmm:char:equil:market:quad:equil:tilde:gamma} in terms of $\ell$ and $(c_k)_{k=1}^K$, which themselves depend on $S$. Although it is not clear a priori how to determine any of these quantities from the primitives, we can express the cyclical dependency (implied by \eqref{eq:def:ell}, \eqref{eq:def:c k} and Lemmas \ref{lmm:char:equil:market:quad:equil:tilde:gamma} and \ref{lmm:transl:quad:equil:eqs:to:chap:iv}) between $(\ell,(c_k)_{k=1}^K)$, $\bar \gamma$ and the mean--variance equilibrium $(1,S)$ as a fixed-point condition on $\bar \gamma$.

\begin{proposition}\label{prop:fixed:point:eq:mv:eq}
	Suppose that Assumptions \ref{assp:agg:end:bounded} and \ref{assp:no:orth:nontriv:chap:iv} hold. For $\bar \gamma > \bar \gamma_0$, the market $(1,S(\bar \gamma))$ is a mean--variance equilibrium if and only if $\bar \gamma = \tilde \gamma(\bar \gamma)$, where
	\begin{equation}\label{eq:lin:case:fp:eq:explicit}
	 \tilde \gamma(\bar \gamma) := \sum_{k=1}^K \bigg(c_k(\bar \gamma)+ \frac{\lambda_k}{\ell(\bar \gamma)}\bigg).
\end{equation}
\end{proposition}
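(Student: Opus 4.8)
The plan is to establish a chain of equivalences that reduces the mean--variance equilibrium condition to the fixed-point equation $\bar\gamma = \tilde\gamma(\bar\gamma)$, proceeding in both directions. The key observation is that Lemma \ref{lmm:transl:quad:equil:eqs:to:chap:iv} guarantees that each candidate market $(1,S(\bar\gamma))$ with $\bar\gamma > \bar\gamma_0$ is the unique quadratic equilibrium for \emph{any} collection of risk tolerances $\gamma_1,\ldots,\gamma_K$ summing to $\bar\gamma$, and moreover satisfies Assumption \ref{assp:ind:opt:ck:l2}. This last point is crucial, since it licenses the use of Lemma \ref{lmm:expl:opt:str:linear:mv} and Proposition \ref{prop:mv:eff:strats:char:agent:k}, which require Assumption \ref{assp:ind:opt:ck:l2} to hold on the market in question.

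For the forward direction, I would assume $(1,S(\bar\gamma))$ is a mean--variance equilibrium. By Lemma \ref{lmm:expl:opt:str:linear:mv}, each agent $k$'s unique optimal linear mean--variance strategy on this market is $\vt^k(\lambda_k/\ell)$, where $\ell = \ell(\bar\gamma)$ and the constants $c_k = c_k(\bar\gamma)$ are computed from $S(\bar\gamma)$ via \eqref{eq:def:ell} and \eqref{eq:def:c k}. Invoking Lemma \ref{lmm:char:equil:market:quad:equil:tilde:gamma}, the mean--variance equilibrium is simultaneously a quadratic equilibrium with risk tolerances $\gamma_k = c_k(\bar\gamma) + \lambda_k/\ell(\bar\gamma)$. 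But by Lemma \ref{lmm:distinct:proc:bar:gamma} (which applies under Assumption \ref{assp:no:orth:nontriv:chap:iv}), the aggregate risk tolerance of any quadratic equilibrium coinciding with $S(\bar\gamma)$ must equal $\bar\gamma$; hence $\bar\gamma = \sum_{k=1}^K \gamma_k = \sum_{k=1}^K\bigl(c_k(\bar\gamma) + \lambda_k/\ell(\bar\gamma)\bigr) = \tilde\gamma(\bar\gamma)$, which is exactly the fixed-point equation.

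For the converse, suppose $\bar\gamma = \tilde\gamma(\bar\gamma)$ for some $\bar\gamma > \bar\gamma_0$. Define $\gamma_k := c_k(\bar\gamma) + \lambda_k/\ell(\bar\gamma)$; then by the fixed-point equation these satisfy $\sum_{k=1}^K \gamma_k = \bar\gamma$, so by Lemma \ref{lmm:transl:quad:equil:eqs:to:chap:iv} the market $(1,S(\bar\gamma))$ is the quadratic equilibrium for these parameters and satisfies Assumption \ref{assp:ind:opt:ck:l2}. The market-clearing condition for the quadratic problem means the representative agent does not trade, i.e.\ $\sum_k \hat\vt^k =_S \bar\eta$ where each $\hat\vt^k$ solves the quadratic utility problem with tolerance $\gamma_k$. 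The remaining task is to verify that these same strategies solve the \emph{linear mean--variance} problem: by the characterisation in Proposition \ref{prop:mv:eff:strats:char:agent:k}, the quadratic-optimal strategy for tolerance $\gamma_k = c_k + \lambda_k/\ell$ equals $\vt^k(y)$ with $y = \gamma_k - c_k = \lambda_k/\ell$, which is precisely the unique linear mean--variance optimiser $\vt^k(\lambda_k/\ell)$ identified in Lemma \ref{lmm:expl:opt:str:linear:mv}. Hence each agent's linear mean--variance optimal strategy coincides with their quadratic-optimal strategy, market clearing is inherited, and Definition \ref{def:equilibrium} condition 3) is inherited from the quadratic equilibrium; therefore $(1,S(\bar\gamma))$ is a mean--variance equilibrium.

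The main obstacle I anticipate is bookkeeping around the identification $y = \lambda_k/\ell$ and ensuring that the constants $\ell(\bar\gamma)$ and $c_k(\bar\gamma)$ are genuinely the ones attached to the market $S(\bar\gamma)$ and not to some auxiliary market, since these quantities are defined implicitly through $S(\bar\gamma)$ itself. One must be careful that the $\gamma_k$ chosen in the converse direction indeed reproduce, via the quadratic equilibrium machinery of Lemma \ref{lmm:transl:quad:equil:eqs:to:chap:iv}, the \emph{same} market $S(\bar\gamma)$ from which $c_k(\bar\gamma)$ and $\ell(\bar\gamma)$ were computed --- this self-consistency is exactly what the fixed-point condition encodes, and Lemma \ref{lmm:distinct:proc:bar:gamma} is what prevents a mismatch by guaranteeing injectivity of $\bar\gamma \mapsto S(\bar\gamma)$. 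Beyond this circularity check, the argument is a direct splicing together of the cited lemmas, so no substantial new estimate should be required.
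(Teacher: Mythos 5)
Your proposal is correct and follows essentially the same route as the paper's proof: the forward direction combines Lemma \ref{lmm:char:equil:market:quad:equil:tilde:gamma} with the injectivity statement of Lemma \ref{lmm:distinct:proc:bar:gamma}, and the converse uses the quadratic-equilibrium machinery together with the equivalence (b) $\Leftrightarrow$ (d) of Proposition \ref{prop:mv:eff:strats:char:agent:k} and Lemma \ref{lmm:expl:opt:str:linear:mv} to identify the quadratic and linear mean--variance optimisers. The only cosmetic difference is that in the converse you cite Lemma \ref{lmm:transl:quad:equil:eqs:to:chap:iv} where the paper cites Lemma \ref{lmm:distinct:proc:bar:gamma}; both deliver the needed statement, and your self-consistency remark about the fixed point is exactly the point the paper's argument rests on.
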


It now remains to solve \eqref{eq:lin:case:fp:eq:explicit} for $\bar \gamma$, which is not so easy due to the implicit dependence of $\ell$ and $c_k$ on $\bar \gamma$. However, as it turns out, the probabilistic structure of the quadratic equilibrium leads to a surprisingly simple relationship between $\ell(\bar \gamma)$ and $\bar c(\bar \gamma) :=\sum_{k=1}^K c_k(\bar \gamma)$. To show this, we use some results on the dynamic theory of mean--variance hedging; see Appendix \ref{sec:MVH:dynamic}.

\begin{proposition}\label{prop:special:prop:quad:eq.opp:proc:bar:z}
	For $\bar \gamma> \bar \gamma_0$, define $\bar c(\bar \gamma) := \sum_{k=1}^K c_k(\bar \gamma)$. Then
	\begin{equation}\label{eq:rel:ell:bar:gamma:bar:c:bar:gamma}
		\bar \gamma-E_P[\bar \Xi] = \big(\bar \gamma-\bar c(\bar \gamma)\big) \ell(\bar \gamma).
	\end{equation}
\end{proposition}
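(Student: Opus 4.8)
The plan is to reduce the claimed identity \eqref{eq:rel:ell:bar:gamma:bar:c:bar:gamma} to a single orthogonality relation coming from the equilibrium, and to evaluate all the quantities through $L^2(P)$-projections. Throughout I would work with the closed linear subspace $\cG_T := \{\vt \sint S_T : \vt \in \TB(S)\} \subseteq L^2(P)$. Since $(1,S(\bar \gamma))$ satisfies Assumption \ref{assp:ind:opt:ck:l2} for $\bar \gamma > \bar \gamma_0$ by Lemma \ref{lmm:transl:quad:equil:eqs:to:chap:iv}, the space $\cG_T$ is closed, and the minimizers of the MVH and exMVH problems are precisely the orthogonal projections of the payoff onto $\cG_T$ and onto $\R \oplus \cG_T$, respectively (cf.\ Proposition \ref{prop:LOP}). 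I abbreviate $w := 1 - \vt^{{\rm MVH}}(1)\sint S_T$, the residual of the pure investment problem \eqref{eq:pure:inv:sec:quad}.

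First I would record two elementary consequences of projection. Because $\vt^{{\rm MVH}}(1)\sint S_T$ is the projection of $1$ onto $\cG_T$, the residual $w$ is orthogonal to $\cG_T$. Taking the inner product of $w$ with $1 = w + \vt^{{\rm MVH}}(1)\sint S_T$ and using $\vt^{{\rm MVH}}(1)\sint S_T \in \cG_T$ gives $\ell = E[w^2] = E[w]$. Next, for any $H \in L^2$, since $w \in \R \oplus \cG_T$, the quantity $E[Hw]$ equals the inner product of the projection $c(H) + \vt^{\rm ex}(H)\sint S_T$ of $H$ onto $\R \oplus \cG_T$ with $w$; as $w \perp \cG_T$ and $E[w] = \ell$, this yields $E[Hw] = c(H)\,\ell$. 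Applying this with $H = \Xi^k$ and summing over $k \in \{1,\dots,K\}$ gives $\bar c(\bar \gamma)\,\ell = E[\bar \Xi\, w]$.

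The one genuinely equilibrium-specific input is that the representative agent does not trade in the quadratic equilibrium $(1,S(\bar \gamma))$. By market clearing together with Lemmas \ref{lem:rep:agent:rel to ind} and \ref{lem:rep:agent}, the representative agent's optimal strategy satisfies $\bar \vt \eq{S} \bar \eta$, so $\vt^{{\rm MVH}}(\bar H) \eq{S} 0$; equivalently, $\bar H = \bar \gamma - \bar \Xi$ is orthogonal to $\cG_T$. (This is also immediate from Lemma \ref{lmm:transl:quad:equil:eqs:to:chap:iv}, since $\bar H/E[\bar H]$ is an ELMM density for $S$, so $E[\bar H\, g] = 0$ for all $g \in \cG_T$.) Consequently $E[\bar H\, w] = E[\bar H]$, because $w = 1 - \vt^{{\rm MVH}}(1)\sint S_T$ and $\vt^{{\rm MVH}}(1)\sint S_T \in \cG_T$. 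Combining the steps and using bilinearity with $\bar \gamma$ constant,
\[
(\bar \gamma - \bar c)\,\ell = \bar \gamma\, E[w] - E[\bar \Xi\, w] = E[(\bar \gamma - \bar \Xi)\,w] = E[\bar H\, w] = E[\bar H] = \bar \gamma - E_P[\bar \Xi],
\]
which is exactly \eqref{eq:rel:ell:bar:gamma:bar:c:bar:gamma}.

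The computation itself is short, so the main obstacle is not algebraic but lies in justifying the two structural facts cleanly: (i) the identification of the MVH and exMVH minimizers with honest orthogonal projections onto the \emph{closed} subspaces $\cG_T$ and $\R \oplus \cG_T$, which rests on Assumption \ref{assp:ind:opt:ck:l2} holding for $S(\bar \gamma)$ via Lemma \ref{lmm:transl:quad:equil:eqs:to:chap:iv}; and (ii) the equilibrium no-trade relation $\bar H \perp \cG_T$. In a general semimartingale market these are precisely what the dynamic mean--variance hedging results of Appendix \ref{sec:MVH:dynamic} supply: the normalized residual $w/\ell = (1 - \vt^{{\rm MVH}}(1)\sint S_T)/\ell$ is the density of the variance-optimal martingale measure, and the identity $c(H) = E[Hw]/\ell$ is the statement that $c(H)$ is the expectation of $H$ under that measure, so that $\bar c(\bar \gamma)$ is the variance-optimal price of the aggregate endowment $\bar \Xi$.
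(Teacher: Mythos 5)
Your proof is correct. Every step checks out: the closedness of $\cG_T(S(\bar\gamma))$ and $\R + \cG_T(S(\bar\gamma))$ and the identification of the MVH/exMVH minimisers with orthogonal projections are guaranteed by Assumption \ref{assp:ind:opt:ck:l2}, which holds for $S(\bar\gamma)$ by Lemma \ref{lmm:transl:quad:equil:eqs:to:chap:iv}; the identity $c(H)\,\ell = E[Hw]$ is precisely \eqref{eq:cor:UOVP link UOGP} from Corollary \ref{cor:UOVP link UOGP}; and the equilibrium input $\bar H \perp \cG_T$ follows from Proposition \ref{prop:LOP} applied to $\bar Z(\bar\gamma)$, since $\bar Z(\bar\gamma)(\vt\sint S(\bar\gamma))$ is then a true $P$-martingale null at $0$. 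However, your route is genuinely different from the paper's. The paper deliberately invokes the dynamic theory of Appendix \ref{sec:MVH:dynamic}: it uses the opportunity process $L(\bar\gamma)$, the mean value process $\bar V(\bar\gamma)$, and the martingale property of $\bar V M^{(0)}$ to conclude $\bar V_0(\bar\gamma) M^{(0)}_0(\bar\gamma) = \bar Z_0(\bar\gamma)$, where the identification $\bar V_0(\bar\gamma) = \bar\gamma - \bar c(\bar\gamma)$ rests on \cite[Theorem 4.10.2]{cerny:kallsen:07}. You obtain the same scalar identity $c(\bar H)\,\ell = E[\bar H]$ purely by static Hilbert-space projection arguments, using only the results of the static theory (Corollary \ref{cor:UOVP link UOGP} and Proposition \ref{prop:LOP}). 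What your approach buys is economy: since $\cF_0$ is trivial, the dynamic statements collapse at time $0$ to exactly your projection identities, so Appendix \ref{sec:MVH:dynamic} is not needed for this proposition as stated. What the paper's approach buys is the conditional machinery: the dynamic formulation would yield the analogous relation between $\bar V_t$, $L_t$ and $\bar Z_t$ at intermediate times, which a purely time-$0$ projection argument does not give.
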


Equation \eqref{eq:rel:ell:bar:gamma:bar:c:bar:gamma} gives us exactly what we need to find an explicit solution $\bar \gamma$ to the equation $\bar \gamma = \tilde \gamma(\bar \gamma)$, where $\tilde \gamma$ is defined by \eqref{eq:lin:case:fp:eq:explicit}, and thus to solve the problem of finding mean--variance equilibria of the form $(1,S(\bar \gamma))$.

\begin{theorem}\label{thm:linear:case:exp:mv:equil}
	Suppose that Assumptions \ref{assp:agg:end:bounded} and \ref{assp:no:orth:nontriv:chap:iv} are satisfied. Set
	\begin{equation}\label{eq:def:expl:sol:bar:gamma:lin:case}
		\bar \gamma := \sum_{k=1}^K \lambda_k + E_P[\bar \Xi].
\end{equation}
If $\bar \gamma > \bar \gamma_0$, then $(1,S(\bar \gamma))$ is the unique linear mean--variance equilibrium of the form $(1,S(\bar \gamma'))$ for some $\bar \gamma'> \bar \gamma_0$. If $\bar \gamma \leq \bar \gamma_0$, then there exists no mean--variance equilibrium of this form.
\end{theorem}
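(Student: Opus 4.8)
The plan is to combine the fixed-point characterisation in Proposition~\ref{prop:fixed:point:eq:mv:eq} with the linear relation in Proposition~\ref{prop:special:prop:quad:eq.opp:proc:bar:z}, so as to reduce the fixed-point equation $\bar\gamma = \tilde\gamma(\bar\gamma)$ to an explicit linear equation whose unique solution is precisely the $\bar\gamma$ in \eqref{eq:def:expl:sol:bar:gamma:lin:case}. First I would invoke Proposition~\ref{prop:fixed:point:eq:mv:eq}: for $\bar\gamma > \bar\gamma_0$, the market $(1,S(\bar\gamma))$ is a mean--variance equilibrium if and only if $\bar\gamma = \tilde\gamma(\bar\gamma)$ with $\tilde\gamma$ as in \eqref{eq:lin:case:fp:eq:explicit}. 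Writing $\bar c(\bar\gamma) = \sum_{k=1}^K c_k(\bar\gamma)$, this condition reads $\bar\gamma = \bar c(\bar\gamma) + \big(\sum_{k=1}^K \lambda_k\big)/\ell(\bar\gamma)$. Since $\ell(\bar\gamma) \in (0,1]$ is strictly positive, I may multiply through by $\ell(\bar\gamma)$ and rearrange to the equivalent form $\ell(\bar\gamma)\big(\bar\gamma - \bar c(\bar\gamma)\big) = \sum_{k=1}^K \lambda_k$.

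Next I would apply Proposition~\ref{prop:special:prop:quad:eq.opp:proc:bar:z}, which furnishes the identity $\ell(\bar\gamma)\big(\bar\gamma - \bar c(\bar\gamma)\big) = \bar\gamma - E_P[\bar\Xi]$. Substituting this into the rearranged fixed-point condition collapses it to the purely linear equation $\bar\gamma - E_P[\bar\Xi] = \sum_{k=1}^K \lambda_k$, whose unique real solution is exactly $\bar\gamma = \sum_{k=1}^K \lambda_k + E_P[\bar\Xi]$, i.e.\ the value defined in \eqref{eq:def:expl:sol:bar:gamma:lin:case}. The decisive point is that all the implicit $\bar\gamma$-dependence of $\ell$ and of the $c_k$ cancels here, so no monotonicity or continuity analysis of the maps $\bar\gamma \mapsto \ell(\bar\gamma)$ and $\bar\gamma \mapsto c_k(\bar\gamma)$ is required.

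To conclude, I would split into the two cases of the theorem. If $\bar\gamma > \bar\gamma_0$, then $\bar\gamma$ lies in the admissible range $(\bar\gamma_0,\infty)$ and is the unique fixed point of $\tilde\gamma$ there, so Proposition~\ref{prop:fixed:point:eq:mv:eq} yields that $(1,S(\bar\gamma))$ is a mean--variance equilibrium. For uniqueness among markets of the form $(1,S(\bar\gamma'))$ with $\bar\gamma' > \bar\gamma_0$, I note that any such equilibrium must satisfy the same linear equation, which forces $\bar\gamma' = \bar\gamma$; invoking the injectivity of $\bar\gamma \mapsto S(\bar\gamma)$ from Lemma~\ref{lmm:distinct:proc:bar:gamma} (valid under Assumption~\ref{assp:no:orth:nontriv:chap:iv}) then shows that distinct parameters give genuinely distinct markets, so the equilibrium market itself is unique. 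If instead $\bar\gamma \leq \bar\gamma_0$, then the linear equation has its only solution at a value $\leq \bar\gamma_0$, so no admissible $\bar\gamma' > \bar\gamma_0$ can satisfy the fixed-point condition, and hence no mean--variance equilibrium of this form exists.

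I expect the genuine mathematical substance to reside entirely in the two propositions invoked---particularly the nonobvious cancellation encoded in Proposition~\ref{prop:special:prop:quad:eq.opp:proc:bar:z}---so that, granted those, the only points of the final argument demanding care are the strict positivity $\ell(\bar\gamma) > 0$ that licenses the division, and the use of Lemma~\ref{lmm:distinct:proc:bar:gamma} to upgrade uniqueness of the scalar parameter $\bar\gamma$ to uniqueness of the equilibrium market.
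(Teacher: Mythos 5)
Your proposal is correct and follows essentially the same route as the paper's proof: invoke Proposition \ref{prop:fixed:point:eq:mv:eq} to reduce to the fixed-point equation, rearrange using $\ell(\bar\gamma) > 0$, and substitute the identity from Proposition \ref{prop:special:prop:quad:eq.opp:proc:bar:z} to collapse everything to the linear equation $\bar\gamma' - E_P[\bar\Xi] = \sum_{k=1}^K \lambda_k$. The only (harmless) addition is your explicit appeal to Lemma \ref{lmm:distinct:proc:bar:gamma} for the parameter-to-market uniqueness step, which the paper leaves implicit.
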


Note that the condition $\bar \gamma > \bar \gamma_0$ in Theorem \ref{thm:linear:case:exp:mv:equil} is equivalent to \vspace{-4pt}
\begin{equation}\label{eq:linear:case:aggregate:risk:tolerance:cond:mv:equilibrium:rewritten:chap:iv}
  \sum_{k=1}^K \lambda_k > \esssup \bar \Xi - E_P[\bar \Xi]. \vspace{-4pt}
\end{equation}
Thus there exists an equilibrium of the form given in Lemma \ref{lmm:transl:quad:equil:eqs:to:chap:iv} if and only if the aggregate risk tolerance $\sum_{k=1}^K \lambda_k$ is larger than the uncertainty  of the aggregate endowment $\bar \Xi$, as measured  by $\esssup \bar \Xi - E_P[\bar \Xi] \geq 0$. We note, however, that the condition \eqref{eq:linear:case:aggregate:risk:tolerance:cond:mv:equilibrium:rewritten:chap:iv} is not necessary for the existence of a mean--variance equilibrium in the one-period model considered in Koch-Medina/Wenzelburger~\cite{koch-medina:wenzelburger:18}, and so the study of other types of mean--variance equilibria remains an open question for the general case. In particular, there may exist mean--variance equilibria that do not satisfy Assumption \ref{assp:ind:opt:ck:l2} nor Lemma \ref{lmm:char:equil:market:quad:equil:tilde:gamma}, as well as mean--variance equilibria that are quadratic equilibria of the form $(1,S(\bar \gamma))$ for $\bar \gamma \leq \bar \gamma_0$, in which case Lemma \ref{lmm:transl:quad:equil:eqs:to:chap:iv} does not apply. Nevertheless, since the equilibria with $\bar \gamma > \bar \gamma_0$ are the most meaningful from an economic point of view, Theorem \ref{thm:linear:case:exp:mv:equil} still provides a satisfactory existence and uniqueness result.

\appendix
\section{Key results on mean--variance hedging}\label{app:mean:var:hedging:prelim}

\subsection{Static theory}

We give here some results on the MVH and exMVH problems \eqref{eq:MVH} and \eqref{eq:MVHex}. We omit for conciseness the (technical) proofs of Lemma \ref{lmm:mvh:prob:linearity}, Propositions \ref{prop:UOGP}--\ref{prop:UOVP link UOGP} and \ref{prop:LOP} and Corollary \ref{cor:UOVP link UOGP} below; they can be found in Martins \cite[Section III.2.5]{martins:23}. On the other hand, the proof of Lemma \ref{lem:MVH:zero} is given below.

\begin{lemma}\label{lmm:mvh:prob:linearity}
	Let $H_1, H_2 \in L^2$ and $\lambda \in \R$.
	
	\begin{enumerate}[label=\emph{\textbf{\arabic*)}}]
	\item Suppose  there exist solutions $\vt^{\rm MVH}(H_1), \vt^{\rm MVH}(H_2) \in \overline \Theta(S)$ to \eqref{eq:MVH} for $H_1$ and $H_2$, respectively. Then $\vt^{\rm MVH}(H_1) + \lambda \vt^{\rm MVH}(H_2)$ is a solution to \eqref{eq:MVH} for the payoff $H_1 + \lambda H_2$.
	
	\item Suppose  there exist solutions $(c(H_1),\vt^{\mathrm{ex}}(H_1)), (c(H_2),\vt^{\mathrm{ex}}(H_2)) \in \R \times \overline \Theta(S)$ to \eqref{eq:MVHex} for $H_1$ and $H_2$, respectively. Then $(c(H_1)+\lambda c(H_2), \vt^{\mathrm{ex}}(H_1) + \lambda \vt^{\mathrm{ex}}(H_2))$ is a solution to \eqref{eq:MVHex} for the payoff $H_1 + \lambda H_2$.
	\end{enumerate}
\end{lemma}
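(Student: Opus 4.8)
The plan is to recognise both problems as best-approximation (orthogonal projection) problems in the Hilbert space $L^2$ and to exploit that the set of attainable integrals is a \emph{linear} subspace. Write $G := \{\vt \sint S_T : \vt \in \overline{\Theta}(S)\} \subseteq L^2$. Since $\overline{\Theta}(S)$ is closed under addition and scalar multiplication and $\vt \mapsto \vt \sint S_T$ is linear, $G$ is a linear subspace of $L^2$, and for part 2) so is $\R + G = \{c + \vt \sint S_T : c \in \R,\ \vt \in \overline{\Theta}(S)\}$. No closedness or projection-theorem input is needed, because existence of the relevant minimisers is part of the hypotheses; I only need the variational characterisation of a minimiser in a linear subspace.

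For part 1) I would first establish that $\vt^\ast \in \overline{\Theta}(S)$ solves the MVH problem \eqref{eq:MVH} for a payoff $H$ if and only if the residual is orthogonal to $G$, i.e.
\begin{equation*}
  E\big[(H - \vt^\ast \sint S_T)\,(\vt \sint S_T)\big] = 0 \qquad \text{for all } \vt \in \overline{\Theta}(S).
\end{equation*}
Necessity: for fixed $\vt$, the map $t \mapsto E[(\vt^\ast \sint S_T + t\,\vt \sint S_T - H)^2]$ is a quadratic polynomial in $t$ (using $\vt^\ast + t\vt \in \overline{\Theta}(S)$) minimised at $t=0$, and vanishing of its linear coefficient gives the displayed identity. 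Sufficiency: for arbitrary $\vt' \in \overline{\Theta}(S)$,
\begin{equation*}
  E[(\vt' \sint S_T - H)^2] = E\big[((\vt' - \vt^\ast) \sint S_T)^2\big] + E[(\vt^\ast \sint S_T - H)^2],
\end{equation*}
where the cross term vanishes by orthogonality since $(\vt' - \vt^\ast) \sint S_T \in G$, so $\vt^\ast$ is a minimiser.

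With this characterisation the linearity is immediate. Setting $\tilde \vt := \vt^{\rm MVH}(H_1) + \lambda \vt^{\rm MVH}(H_2)$, the residual against $H_1 + \lambda H_2$ is
\begin{equation*}
  (H_1 + \lambda H_2) - \tilde \vt \sint S_T = \big(H_1 - \vt^{\rm MVH}(H_1) \sint S_T\big) + \lambda\big(H_2 - \vt^{\rm MVH}(H_2) \sint S_T\big),
\end{equation*}
a linear combination of two residuals each orthogonal to $G$, hence itself orthogonal to $G$; by the characterisation, $\tilde \vt$ solves \eqref{eq:MVH} for $H_1 + \lambda H_2$. Part 2) is entirely analogous with $G$ replaced by $\R + G$: the minimiser $(c(H),\vt^{\mathrm{ex}}(H))$ is characterised by $H - c(H) - \vt^{\mathrm{ex}}(H) \sint S_T \perp \R + G$, which (testing against the constant $1$ and against elements of $G$ separately) splits into $E[H - c(H) - \vt^{\mathrm{ex}}(H) \sint S_T] = 0$ together with orthogonality to $G$. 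Both conditions are linear in the data, so $(c(H_1)+\lambda c(H_2),\ \vt^{\mathrm{ex}}(H_1)+\lambda\vt^{\mathrm{ex}}(H_2))$ satisfies them for $H_1 + \lambda H_2$ and is therefore a minimiser.

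There is no serious obstacle here; the only point requiring a little care is confirming that $\overline{\Theta}(S)$ is genuinely closed under addition and scalar multiplication, so that $G$ and $\R + G$ are subspaces and the two orthogonality relations may be added. This is inherited from the linearity of $\Theta_{\rm simple}(S)$ together with the approximation scheme defining $L^2$-admissibility in Definition \ref{def:adm:strategy:l:2:cerny:kallsen}.
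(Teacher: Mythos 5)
Your argument is correct. The paper does not reproduce a proof of this lemma in its own text (it defers to \cite[Lemma III.1.6]{martins:23}), so there is no in-paper proof to compare line by line; but your route --- the first-order orthogonality characterisation of a minimiser over the linear subspace $G=\{\vt \sint S_T : \vt \in \overline{\Theta}(S)\}$ (respectively $\R + G$), followed by linearity of the residuals --- is the standard one and uses exactly the machinery the paper deploys elsewhere, e.g.\ the perturbation $\delta \searrow 0$, $\delta \nearrow 0$ extracting the first-order term in the proof of Lemma \ref{lem:MVH:zero} and the ``first-order condition of MVH'' invoked in the proof of Corollary \ref{cor:UOVP link UOGP}. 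The one prerequisite you rightly flag, that $\overline{\Theta}(S)$ is a vector space so that $G$ and $\R+G$ are genuinely subspaces, is likewise taken for granted by the paper (see the proof of Lemma \ref{lem:ind:opt}), so your proof is complete as written.
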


\begin{proof}
	See \cite[Lemma III.1.6]{martins:23}.
\end{proof}

If $S$ admits an equivalent local martingale measure (ELMM) with square-integrable density, then $\mathcal{G}_T(S)$ and  $\R + \mathcal{G}_T(S)$ are closed and the solutions in $\overline \Theta(S)$ and $\R \times \overline \Theta(S)$ are unique; this follows by \v Cern\' y/Kallsen \cite[Lemma 2.11]{cerny:kallsen:07}. 
 However, without that extra assumption, both closedness of $\mathcal{G}_T(S)$ and $\R + \mathcal{G}_T(S)$ as well as uniqueness of the solutions in $\overline \Theta(S)$ and $\R \times \overline \Theta(S)$ (if they exist) do not hold in general. 
 
In order to deal with the uniqueness issue, it is useful to introduce the notions of \emph{uniqueness of gains processes} and \emph{uniqueness of value processes} associated with a price process $S$.

\begin{definition}\label{def:uni:gains:values}
Let $(1, S)$ be a local $L^2$-market. It is said to satisfy 
\begin{itemize}
\item \emph{uniqueness of gains processes} if for any two trading strategies $\vt^1, \vt^2 \in \overline \Theta(S)$, the equality $\vt^1 \sint S_T = \vt^2 \sint S_T$ $\Pas$ implies that $\vt^1 \eq{S} \vt^2$.
\item \emph{uniqueness of value processes} if for any two trading strategies $\vt^1, \vt^2 \in \overline \Theta(S)$ and initial values $c_1, c_2 \in \R$, the equality $c_1 + \vt^1 \sint S_T =c_2 + \vt^2 \sint S_T$ $\Pas$ implies that $c_1 = c_2$ and $\vt^1 \eq{S} \vt^2$.
\end{itemize}
\end{definition}

We have the following two equivalent characterisations of uniqueness of gains and value processes. They follow immediately from the linear structure of the (extended) mean--variance hedging problems given in Lemma \ref{lmm:mvh:prob:linearity}, as well as the fact that for  $H =0$, the problems of MVH \eqref{eq:MVH} and  exMVH \eqref{eq:MVHex}  admit as solutions $\vt = 0$ and $(c, \vt) = (0, 0)$, respectively.
\begin{proposition}
	\label{prop:UOGP}
Let $(1, S)$ be a local $L^2$-market. The following are equivalent:
\begin{enumerate}
\item $(1, S)$ satisfies uniqueness of gains processes.
\item For some $H \in L^2$, the MVH problem \eqref{eq:MVH}
admits a unique solution.
\item For each $H \in L^2$ for which the MVH problem \eqref{eq:MVH}
admits a solution, the solution is unique.
\end{enumerate}
\end{proposition}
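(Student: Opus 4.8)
The plan is to establish the cyclic chain of implications (a) $\Rightarrow$ (c) $\Rightarrow$ (b) $\Rightarrow$ (a), using throughout that $\overline{\Theta}(S)$ is a linear space, that $\vt \mapsto \vt \sint S_T$ is linear, and that for $H = 0$ the MVH problem \eqref{eq:MVH} is solved by $\vt = 0$ with optimal value $0$. The implication (c) $\Rightarrow$ (b) is then immediate: for $H = 0$ a solution exists, and (c) forces it to be unique, which is exactly (b) for this particular payoff.

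For (a) $\Rightarrow$ (c) I would fix $H$ admitting solutions and take two minimizers $\vt^1, \vt^2 \in \overline{\Theta}(S)$ of \eqref{eq:MVH} with common optimal value $m$. The convex combination $\tfrac12(\vt^1 + \vt^2)$ is again admissible, and the parallelogram identity in $L^2$ applied to $a = \vt^1 \sint S_T - H$ and $b = \vt^2 \sint S_T - H$ yields
$$E\big[\big(\tfrac12(\vt^1 + \vt^2) \sint S_T - H\big)^2\big] = m - \tfrac14 E\big[\big((\vt^1 - \vt^2) \sint S_T\big)^2\big].$$
Optimality of $m$ forces the correction term to vanish, so $\vt^1 \sint S_T = \vt^2 \sint S_T$ $\Pas$; uniqueness of gains processes then gives $\vt^1 \eq{S} \vt^2$, which is (c).

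The heart of the argument is (b) $\Rightarrow$ (a). Assuming \eqref{eq:MVH} has a unique solution $\vt^*$ for some $H_0$, I would take $\vt^1, \vt^2 \in \overline{\Theta}(S)$ with $\vt^1 \sint S_T = \vt^2 \sint S_T$ $\Pas$ and set $\psi := \vt^1 - \vt^2$. Then $\psi \sint S_T = 0$, so $\psi$ attains the optimal value $0$ and solves \eqref{eq:MVH} for $H = 0$. Invoking the linearity of the MVH problem (Lemma \ref{lmm:mvh:prob:linearity}, part 1, with $H_1 = H_0$, $H_2 = 0$, $\lambda = 1$), the strategy $\vt^* + \psi$ solves \eqref{eq:MVH} for $H_0$; uniqueness for $H_0$ then gives $\vt^* + \psi \eq{S} \vt^*$, and subtracting the (indistinguishable) processes $\vt^* \sint S$ leaves $\psi \sint S \equiv 0$, i.e. $\vt^1 \eq{S} \vt^2$.

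I expect the only real subtlety to lie in this last step. The hypothesis controls only the terminal gains $\vt \sint S_T$, whereas the relation $\eq{S}$ demands indistinguishability of the entire gains processes $\vt \sint S$; the device of recognizing $\psi$ as an $H = 0$ minimizer and transporting the process-level uniqueness for $H_0$ onto $\psi$ via linearity is precisely what upgrades the terminal equality to indistinguishability, and it does so without requiring any closedness of $\mathcal{G}_T(S)$ or an ELMM assumption.
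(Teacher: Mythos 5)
Your proof is correct and follows exactly the route the paper indicates: it defers the detailed argument to Martins \cite[Propositions III.1.8 and III.1.9]{martins:23} but states that the equivalences ``follow immediately from the linear structure of the (extended) mean--variance hedging problems given in Lemma \ref{lmm:mvh:prob:linearity}, as well as the fact that for $H=0$'' the zero strategy is a solution, which is precisely the combination of the parallelogram identity, the $H=0$ observation, and the linearity transport that you use. In particular, your treatment of the key subtlety in (b) $\Rightarrow$ (a) --- upgrading equality of terminal gains to indistinguishability of the whole gains processes by adding the null strategy $\psi$ to the unique minimizer for $H_0$ --- is exactly the intended mechanism.
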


\begin{proposition}
	\label{prop:UOVP}
Let $(1, S)$ be a local $L^2$-market. The following are equivalent:
\begin{enumerate}
\item $(1, S)$ satisfies uniqueness of value processes.
\item For some $H \in L^2$, the exMVH problem \eqref{eq:MVHex}
admits a unique solution.
\item For each $H \in L^2$ for which the exMVH problem \eqref{eq:MVHex}
admits a solution, the solution is unique.
\end{enumerate}
\end{proposition}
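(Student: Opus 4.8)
The plan is to prove the three statements equivalent cyclically, in the order (c)$\,\Rightarrow\,$(b)$\,\Rightarrow\,$(a)$\,\Rightarrow\,$(c). Two ingredients drive the argument: the linearity of the exMVH problem recorded in Lemma~\ref{lmm:mvh:prob:linearity}\,2) (together with the fact that $\TB(S)$ is a linear space), and the elementary observation that for the payoff $H=0$ the pair $(c,\vt)=(0,0)$ solves \eqref{eq:MVHex} with optimal value $0$; hence a pair $(c,\vt)\in\R\times\TB(S)$ solves \eqref{eq:MVHex} for $H=0$ if and only if $c+\vt\sint S_T=0$ $\Pas$

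The implication (c)$\,\Rightarrow\,$(b) is immediate: since $H=0$ always admits a solution of \eqref{eq:MVHex}, statement (c) applied to $H=0$ yields that this solution is unique, which is exactly (b). For (b)$\,\Rightarrow\,$(a), I would suppose that \eqref{eq:MVHex} has a unique solution $(c(H_0),\vt^{\mathrm{ex}}(H_0))$ for some $H_0\in L^2$, and take $c_1,c_2\in\R$, $\vt^1,\vt^2\in\TB(S)$ with $c_1+\vt^1\sint S_T=c_2+\vt^2\sint S_T$ $\Pas$ Setting $c':=c_1-c_2$ and $\vt':=\vt^1-\vt^2\in\TB(S)$ gives $c'+\vt'\sint S_T=0$ $\Pas$, so $(c',\vt')$ solves \eqref{eq:MVHex} for $H=0$. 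By Lemma~\ref{lmm:mvh:prob:linearity}\,2), the pair $(c(H_0)+c',\vt^{\mathrm{ex}}(H_0)+\vt')$ then solves \eqref{eq:MVHex} for $H_0$; uniqueness forces $c(H_0)+c'=c(H_0)$ and $\vt^{\mathrm{ex}}(H_0)+\vt'\eq{S}\vt^{\mathrm{ex}}(H_0)$, i.e.\ $c_1=c_2$ and $\vt^1\eq{S}\vt^2$. This is precisely uniqueness of value processes.

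It remains to prove (a)$\,\Rightarrow\,$(c), the one step that is not purely algebraic and where I expect the only real (if standard) obstacle. Fix $H\in L^2$ for which \eqref{eq:MVHex} admits solutions and let $(c_1,\vt^1),(c_2,\vt^2)$ be two of them. The objective $E[(c+\vt\sint S_T-H)^2]$ is the squared $L^2$-distance of $H$ from the linear subspace $\R+\cG_T(S)$, where $\cG_T(S):=\{\vt\sint S_T:\vt\in\TB(S)\}$, so the two minimisers yield two points $c_1+\vt^1\sint S_T$ and $c_2+\vt^2\sint S_T$ of $\R+\cG_T(S)$ nearest to $H$. By the parallelogram law in the Hilbert space $L^2$, nearest points to a convex set are unique (the midpoint lies again in the subspace and cannot be strictly farther from $H$), whence $c_1+\vt^1\sint S_T=c_2+\vt^2\sint S_T$ $\Pas$ Invoking (a) then gives $c_1=c_2$ and $\vt^1\eq{S}\vt^2$, so the solution for $H$ is unique. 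The subtlety is exactly that one must first show the two minimisers share the \emph{same} value process before uniqueness of value processes can be applied; this is the standard uniqueness of metric projections onto a convex subset of a Hilbert space, and notably it does not require closedness of $\R+\cG_T(S)$, since we only compare two elements already attaining the infimum.
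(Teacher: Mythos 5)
Your proof is correct and follows essentially the route the paper itself indicates (the paper omits the details, citing Martins' thesis, but explicitly attributes the result to the linearity of the exMVH problem from Lemma~\ref{lmm:mvh:prob:linearity} together with the observation that $(c,\vt)=(0,0)$ solves \eqref{eq:MVHex} for $H=0$ --- precisely the two ingredients you use, supplemented by the standard parallelogram-law uniqueness of metric projections for (a)$\,\Rightarrow\,$(c)). One cosmetic slip: in your parenthetical the midpoint cannot be strictly \emph{closer} to $H$ than the minimal distance (it lies in the subspace), while the parallelogram identity bounds its distance from above; the argument as invoked is nonetheless the correct standard one.
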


\begin{proof}[Proof of Propositions \ref{prop:UOGP} and \ref{prop:UOVP}]
	See \cite[Propositions III.1.8 and III.1.9]{martins:23}.
\end{proof}

The following two results show that uniqueness of value processes implies uniqueness of gains processes and link the MVH problem \eqref{eq:MVH} and the extended exMVH problem \eqref{eq:MVHex}.

\begin{proposition}
	\label{prop:UOVP link UOGP}
	 Let $(1, S)$ be a local $L^2$-market. The following statements hold:
	 \begin{enumerate}[label=\emph{\textbf{\arabic*)}}]
	 \item  If $(1,S)$ satisfies uniqueness of value processes, then $(1, S)$ also satisfies uniqueness of gains processes.
	
	\item Suppose that $(1,S)$ satisfies uniqueness of gains processes and that the MVH problem \eqref{eq:MVH} for $H=1$ admits a solution $\vt^{\rm MVH}(1)$. Then $(1,S)$ satisfies uniqueness of value processes if and only if  $E[(\vt^{\rm MVH}(1) \sint S_T-1)^2] > 0$.
	\end{enumerate}
\end{proposition}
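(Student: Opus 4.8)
The plan is to handle the two parts separately, using only the linearity of $\overline{\Theta}(S)$ together with the characterisation of uniqueness of gains processes supplied by Proposition \ref{prop:UOGP}; no new machinery is needed. Throughout I write $\mathcal G_T(S) = \{\vt \sint S_T : \vt \in \overline{\Theta}(S)\}$ for the space of attainable gains.

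Part 1) is immediate by specialising the definition of uniqueness of value processes to zero initial capital. Given $\vt^1, \vt^2 \in \overline{\Theta}(S)$ with $\vt^1 \sint S_T = \vt^2 \sint S_T$ $\Pas$, I would rewrite this as $0 + \vt^1 \sint S_T = 0 + \vt^2 \sint S_T$ $\Pas$ and apply uniqueness of value processes with $c_1 = c_2 = 0$, which yields $c_1 = c_2$ (vacuously) and $\vt^1 \eq{S} \vt^2$. This is exactly uniqueness of gains processes.

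For part 2), the governing observation is that a \emph{nonzero} constant lies in $\mathcal G_T(S)$ if and only if $1$ does (by scaling), and that $1 \in \mathcal G_T(S)$ is precisely the statement that the minimal value of the MVH problem \eqref{eq:MVH} for $H = 1$ equals $0$, i.e.\ $E[(\vt^{\rm MVH}(1)\sint S_T - 1)^2] = 0$. For the \textbf{only if} direction I argue by contraposition: if $E[(\vt^{\rm MVH}(1)\sint S_T - 1)^2] = 0$, then $\vt^{\rm MVH}(1)\sint S_T = 1$ $\Pas$, so the constant payoff $1$ admits the two representations $1 + 0 \sint S_T$ and $0 + \vt^{\rm MVH}(1)\sint S_T$ with differing initial values $1 \neq 0$; hence uniqueness of value processes fails. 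For the \textbf{if} direction, I assume $c_1 + \vt^1 \sint S_T = c_2 + \vt^2 \sint S_T$ $\Pas$ and rearrange to $(\vt^2 - \vt^1)\sint S_T = c_1 - c_2$ $\Pas$. If $c_1 \neq c_2$, then $\vt := (c_1 - c_2)^{-1}(\vt^2 - \vt^1) \in \overline{\Theta}(S)$ satisfies $\vt \sint S_T = 1$ $\Pas$, so $\vt$ solves \eqref{eq:MVH} for $H = 1$ with objective value $0$. Since uniqueness of gains processes holds, Proposition \ref{prop:UOGP} forces $\vt \eq{S} \vt^{\rm MVH}(1)$, whence $\vt^{\rm MVH}(1)\sint S_T = 1$ $\Pas$ and the objective value is $0$, contradicting the hypothesis. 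Thus $c_1 = c_2$, and then $\vt^1 \sint S_T = \vt^2 \sint S_T$ $\Pas$ gives $\vt^1 \eq{S} \vt^2$ again by uniqueness of gains processes.

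There is no substantial obstacle here: the argument is a sequence of elementary rearrangements inside the linear space $\overline{\Theta}(S)$. The one point that genuinely requires care is recognising that the \emph{quantitative} condition $E[(\vt^{\rm MVH}(1)\sint S_T - 1)^2] > 0$ is equivalent to the \emph{qualitative} non-replicability statement $1 \notin \mathcal G_T(S)$, and that the bridge between them is the uniqueness of the MVH minimiser provided by Proposition \ref{prop:UOGP}. Once that equivalence is identified, both implications fall out of scaling and the definitions.
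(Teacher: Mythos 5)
Your proof is correct, and it reaches the conclusion by a slightly more elementary route than the paper. The paper channels both parts through the equivalent characterisations in Propositions \ref{prop:UOGP} and \ref{prop:UOVP}: for part 1) it observes that two distinct solutions of the MVH problem for $H=0$ produce two distinct solutions of the exMVH problem for $H=0$ with $c=0$; for the ``if'' direction of part 2) it shows that $(0,0)$ is the unique solution of the exMVH problem for $H=0$, invoking the linearity of MVH (Lemma \ref{lmm:mvh:prob:linearity}) to write any competing solution as $(c,-c\,\vt^{\rm MVH}(1))$. You instead work directly from Definition \ref{def:uni:gains:values}: part 1) is a one-line specialisation to $c_1=c_2=0$, and in part 2) you take two arbitrary value-process representations, rearrange and rescale to produce a replicating strategy for the constant $1$, and then invoke uniqueness of gains processes via Proposition \ref{prop:UOGP} only once, to identify that strategy with $\vt^{\rm MVH}(1)$. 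The core insight is the same in both treatments --- given uniqueness of gains processes, uniqueness of value processes fails exactly when $1 \in \mathcal G_T(S)$, which is detected by the vanishing of $E[(\vt^{\rm MVH}(1)\sint S_T-1)^2]$ --- but your version dispenses with Lemma \ref{lmm:mvh:prob:linearity} and Proposition \ref{prop:UOVP} entirely, at the cost of re-deriving by hand the reduction that those results package. Both arguments rely on $\overline\Theta(S)$ being a vector space, which the paper also uses elsewhere, so there is no gap there.
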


\begin{proof}
	\textbf{1)} Since $\vt \equiv 0$ is a solution to \eqref{eq:MVH} for $H=0$ with hedging error $0$, any solution to \eqref{eq:MVH} for $0$ is also a solution to \eqref{eq:MVHex} for $0$ with $c = 0$. Thus the first assertion follows from Propositions \ref{prop:UOGP} and \ref{prop:UOVP}, using the fact that if the MVH problem \eqref{eq:MVH} for $0$ does not have a unique solution, then a fortiori the exMVH problem \eqref{eq:MVHex} for $0$ cannot have a unique solution. 
	
	\textbf{2)} We assume first that $(1,S)$ satisfies uniqueness of value processes. Supposing for a contradiction that $E[(\vt^{\rm MVH}(1) \sint S_T-1)^2] = 0$, then the exMVH problem \eqref{eq:MVHex} for $0$ has two distinct solutions $(-1, \vt^{\rm MVH}(1))$ and $(0, 0)$, which contradicts Proposition \ref{prop:UOVP}.
	
	To prove the converse statement, suppose that $E[(\vt^{\rm MVH}(1) \sint S_T-1)^2] > 0$. We claim that $(0,0)$ is the unique solution to the exMVH problem \eqref{eq:MVHex} for $H=0$. To show the claim, let $(c,\vt)$ be a solution to  the exMVH problem \eqref{eq:MVHex} for $H \equiv 0$. Then by comparing the MVH \eqref{eq:MVH} and exMVH \eqref{eq:MVHex} problems, we note that $\vt$ is also a solution to the MVH problem for $H \equiv -c$. The uniqueness of gains processes and Proposition \ref{prop:UOGP} yield that $\vt^{\rm MVH}(1)$ and $\vt$ are the unique solutions to the MVH problem \eqref{eq:MVH} for $H \equiv 1$ and $H \equiv -c$, respectively. In the case $c = 0$, we obtain in particular that $\vt = 0$ is the unique solution to the MVH problem for $H \equiv 0$, and hence the claim holds in this case since $(c,\vt) = (0,0)$. Suppose now for a contradiction that $c \neq 0$. By the linearity of MVH (see Lemma \ref{lmm:mvh:prob:linearity}), we have $\vt = -c \vt^{\rm MVH}(1)$. Since both $(0,0)$ and $(c,\vt)$ are solutions to the exMVH problem for $H \equiv 0$, we deduce that
	$$c + \vt \sint S_T = c(1 - \vt^{\rm MVH}(1) \sint S_T) \equiv 0.$$
	However, since $c \neq 0$, this contradicts the assumption so that this case cannot hold. Therefore, the only solution to the exMVH problem for $H\equiv 0$ is $(0,0)$, which shows  the claim. The conclusion that $(1,S)$ satisfies uniqueness of value processes then follows by Proposition \ref{prop:UOVP}.
\end{proof}

\begin{corollary}
	\label{cor:UOVP link UOGP}
	Let $(1, S)$ be a local $L^2$-market satisfying uniqueness of value processes such that the MVH problem \eqref{eq:MVH} for $1$ has a unique solution $\vt^{\rm MVH}(1)$. Then for $H \in L^2$, the MVH problem \eqref{eq:MVH} for $H$ has a unique solution $\vt^{\rm MVH}(H)$ if and only if the exMVH problem \eqref{eq:MVHex} for $H$ has a unique solution $(c(H), \vt^{\mathrm{ex}}(H))$, in which case we have
	\begin{equation}
\label{eq:cor:UOVP link UOGP}
		c(H) = \frac{E[H (1 -\vt^{\rm MVH}(1) \sint S_T )]}{E[(1 - \vt^{\rm MVH}(1) \sint S_T)^2]}
	\end{equation}
	and $\vt^{\mathrm{ex}}(H) =_S  \vt^{\rm MVH}(H) - c(H) \vt^{\rm MVH}(1).$
\end{corollary}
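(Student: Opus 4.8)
The plan is to reduce the entire statement to \emph{existence} of minimisers, since uniqueness will come for free. First I would record that, by Proposition \ref{prop:UOVP link UOGP}(1), uniqueness of value processes implies uniqueness of gains processes; hence by Propositions \ref{prop:UOGP} and \ref{prop:UOVP} any solution of \eqref{eq:MVH} or \eqref{eq:MVHex} that exists is automatically unique up to $S$-equivalence. Moreover, Proposition \ref{prop:UOVP link UOGP}(2) gives $E[(1 - \vt^{\rm MVH}(1) \sint S_T)^2] > 0$, so the denominator in the asserted formula for $c(H)$ is strictly positive and the formula is well defined. Consequently the corollary reduces to proving the \emph{existence} equivalence together with the two identities.

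For the equivalence I would exploit the decomposition of the exMVH problem \eqref{eq:MVHex} into an outer minimisation over $c$ and an inner MVH problem, combined with the linearity of mean--variance hedging (Lemma \ref{lmm:mvh:prob:linearity}). For fixed $c$, the inner problem $\min_\vt E[(c + \vt \sint S_T - H)^2]$ is precisely the MVH problem \eqref{eq:MVH} for the payoff $H - c$. Thus, if \eqref{eq:MVH} for $H$ has a solution, then by Lemma \ref{lmm:mvh:prob:linearity} so does \eqref{eq:MVH} for $H - c$, with solution $\vt^{\rm MVH}(H) - c\,\vt^{\rm MVH}(1)$; minimising the resulting value over $c$ then yields a solution of \eqref{eq:MVHex}. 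Conversely, given a solution $(c(H), \vt^{\rm ex}(H))$ of \eqref{eq:MVHex}, fixing the optimal $c = c(H)$ shows that $\vt^{\rm ex}(H)$ solves \eqref{eq:MVH} for $H - c(H)$; adding the solution $c(H)\,\vt^{\rm MVH}(1)$ of \eqref{eq:MVH} for the constant $c(H)$ gives, again by Lemma \ref{lmm:mvh:prob:linearity}, a solution $\vt^{\rm ex}(H) + c(H)\,\vt^{\rm MVH}(1)$ of \eqref{eq:MVH} for $H$. This establishes the equivalence and, after invoking uniqueness of gains processes to rearrange, the relation $\vt^{\rm ex}(H) \eq{S} \vt^{\rm MVH}(H) - c(H)\,\vt^{\rm MVH}(1)$.

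It then remains to identify $c(H)$ explicitly. Writing $g_H := \vt^{\rm MVH}(H) \sint S_T$ and $g_1 := \vt^{\rm MVH}(1) \sint S_T$, the optimal $c$ minimises the strictly convex quadratic $f(c) = E[((g_H - H) - c(g_1 - 1))^2]$, whence $c(H) = E[(g_H - H)(g_1 - 1)] \big/ E[(g_1 - 1)^2]$. To simplify the numerator I would use the first-order (normal-equation) characterisation of the projection $g_1$: since $g_1$ minimises $E[(g - 1)^2]$ over the linear space $\mathcal{G}_T(S) = \{\vt \sint S_T : \vt \in \overline{\Theta}(S)\}$, one has $E[(g_1 - 1) g] = 0$ for every $g \in \mathcal{G}_T(S)$. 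Testing this against $g = g_H \in \mathcal{G}_T(S)$ gives $E[g_H g_1] = E[g_H]$, and expanding $E[(g_H - H)(g_1 - 1)]$ then collapses it to $E[H(1 - g_1)]$, which yields the claimed formula for $c(H)$.

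The only genuinely delicate point is the justification of the normal-equation identity $E[(g_1 - 1) g] = 0$: it does \emph{not} require closedness of $\mathcal{G}_T(S)$ (which may fail here), but only the existence of the minimiser $g_1$ together with the fact that $\mathcal{G}_T(S)$ is a linear subspace of $L^2$, so that $g_1 + t g \in \mathcal{G}_T(S)$ for all $t \in \R$; differentiating $t \mapsto E[(g_1 + t g - 1)^2]$ at $t = 0$ then gives the claim. Everything else is bookkeeping built on Lemma \ref{lmm:mvh:prob:linearity} and the two preceding propositions.
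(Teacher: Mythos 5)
Your proposal is correct and follows essentially the same route as the paper: reduce uniqueness to Propositions \ref{prop:UOGP}--\ref{prop:UOVP link UOGP}, pass between \eqref{eq:MVH} and \eqref{eq:MVHex} via the linearity of Lemma \ref{lmm:mvh:prob:linearity} and the decomposition into an inner MVH problem for $H-c$ plus an outer quadratic minimisation over $c$, and identify $c(H)$ using the first-order condition $E[(\vt^{\rm MVH}(1)\sint S_T-1)\,\vt^{\rm MVH}(H)\sint S_T]=0$. The only (cosmetic) differences are that the paper argues the ``only if'' direction by contradiction where you minimise directly, and that you make explicit the positivity of the denominator via Proposition \ref{prop:UOVP link UOGP}, which the paper uses implicitly.
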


\begin{proof}
	We first show the ``if'' statement. Since $ \vt^{\mathrm{ex}}(H)$ solves the MVH problem with payoff $H-c(H)$, it follows by the linearity of MVH (see Lemma \ref{lmm:mvh:prob:linearity}) that $\vt := \vt^{\mathrm{ex}}(H) + c(H) \vt^{\rm MVH}(1)$ is a solution to the MVH problem \eqref{eq:MVH} for $H$. The uniqueness of the solution is ensured by the uniqueness of value processes and Proposition \ref{prop:UOGP}.
	
	Next, we consider the ``only if'' statement. Suppose for a contradiction that the pair $(c(H),\vt^{\mathrm{ex}}(H) )$ given in the statement of the corollary is not a solution to the exMVH problem for $H$. Thus there exists a competitor $(c,\vt)$ that achieves a lower mean squared error. This competitor may be improved by letting $\vt$ be the solution to the MVH problem \eqref{eq:MVH} for $H-c$, so that we may assume $\vt = \vartheta(H) - c\vartheta(1)$. Moreover, as $E[\vartheta^{\rm MVH}(H) \sint S_T(1 - \vartheta^{\rm MVH}(1) \sint S_T)] = 0$ by the first-order condition of MVH, the mean squared error attained by $(c,\vt )$ is given by
\begin{align}
&E\Big[\Big(c + \big(\vartheta^{\rm MVH}(H) - c\vartheta^{\rm MVH}(1)\big)\sint S_T - H \Big)^2\Big] \notag \\
&= c^2 E\big[\big(1-  \vartheta^{\rm MVH}(1) \sint S_T\big)^2\big]   - 2 c E\big[ H \big(1 - \vartheta^{\rm MVH}(1) \sint S_T \big)\big] + E\big[\big(H - \vartheta^{\rm MVH}(H) \sint S_T \big)^2\big].\label{eq:pf:cor:UOVP link UOGP}
\end{align}
As a quadratic function of $c$, the right-hand side of \eqref{eq:pf:cor:UOVP link UOGP} has the unique minimiser $c(H)$ given by \eqref{eq:cor:UOVP link UOGP}, so that the error may be reduced further by setting $c = c(H)$. But then we obtain $(c, \vt) = (c(H), \vt^{\mathrm{ex}}(H))$, which leads to a contradiction. Therefore, the pair $(c(H), \vt^{\mathrm{ex}}(H))$ given in the statement of the corollary is indeed a solution to the exMVH problem for $H$, and its uniqueness follows by Proposition \ref{prop:UOVP}.
\end{proof}

The following result gives simple sufficient conditions for uniqueness of gains and value processes and for the existence of solutions to the MVH and exMVH problems \eqref{eq:MVH} and \eqref{eq:MVHex} in terms of a signed local martingale measure for $S$. The assumption that such a signed measure exists is not necessary for the existence of solutions to the MVH and exMVH problems, but it is a weaker assumption than the existence of an equivalent local martingale measure for $S$. By \v Cern\'y/Czichowsky \cite[Theorem 3.2]{cerny:czichowsky:22}, the conditions in part 2) imply the economic assumption of the so-called law of one price; see \cite[Definition 2.3]{cerny:czichowsky:22}.
\begin{proposition}
	\label{prop:LOP}
Let $(1, S)$ be a local $L^2$-market and $Z = (Z_t)_{0 \leq t \leq T}$ a square-integrable martingale such that $Z S^j$ is a local martingale for all $j \in \{1, \ldots, d\}$. Then $Z (\vt \sint S)$ is a $P$-martingale for each $\vt \in \overline \Theta(S)$. Moreover:
\begin{enumerate}[label={\emph{\textbf{ \arabic*)}}}]
\item If $Z_t \neq 0$ $\Pas$ for each $t \in [0, T]$, then $(1, S)$ satisfies uniqueness of value processes.
\item If $Z_t \neq 0$ and $Z_{t-} \neq 0$ for all $t \in [0, T]$ $\Pas$, then the MVH problem \eqref{eq:MVH} and exMVH problem \eqref{eq:MVHex} have unique solutions for each $H \in L^2$.
\end{enumerate}
\end{proposition}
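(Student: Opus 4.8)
The plan is to prove the three assertions in order: first the preliminary martingale property of $Z(\vt\sint S)$, then part 1), and finally part 2) --- whose uniqueness component is immediate from part 1), so that the existence component is the real work.

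\textbf{Martingale property.} I would first establish that $Z(\vt\sint S)$ is a $P$-martingale for \emph{simple} $\vt$. Integration by parts gives $Z(\vt\sint S)=\vt\sint\big(Z_-\sint S+[Z,S]\big)+(\vt\sint S)_-\sint Z$, and each component $Z_-\sint S^j+[Z,S^j]=ZS^j-Z_0S^j_0-S^j_-\sint Z$ is a local martingale (since $ZS^j$ is a local martingale and $Z$ a martingale), as is $(\vt\sint S)_-\sint Z$; hence $Z(\vt\sint S)$ is a local martingale, and for simple $\vt$ a genuine martingale by the $L^2$-bounds built into the local $L^2$-semimartingale property (cf.~\cite{cerny:kallsen:07}). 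I then extend to all $\vt\in\overline\Theta(S)$ via the approximation in Definition \ref{def:adm:strategy:l:2:cerny:kallsen}: picking simple $\vt^n$ with $\vt^n\sint S_T\to\vt\sint S_T$ in $L^2$ and $\vt^n\sint S_t\to\vt\sint S_t$ in probability, the identity $Z_t(\vt^n\sint S)_t=E[Z_T(\vt^n\sint S)_T\mid\cF_t]$ passes to the limit, because $Z_T\in L^2$ forces $Z_T(\vt^n\sint S)_T\to Z_T(\vt\sint S)_T$ in $L^1$ by Cauchy--Schwarz (so the conditional expectations converge in $L^1$), while the left-hand side converges in probability. Thus $Z_t(\vt\sint S)_t=E[Z_T(\vt\sint S)_T\mid\cF_t]$, so $Z(\vt\sint S)$ is a $P$-martingale.

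\textbf{Part 1).} By Proposition \ref{prop:UOVP} it suffices to show that the exMVH problem \eqref{eq:MVHex} for $H=0$ has $(0,0)$ as its only solution. So let $(c,\vt)\in\R\times\overline\Theta(S)$ satisfy $c+\vt\sint S_T=0$ $\Pas$ Using the martingale property at time $T$, $Z_t(\vt\sint S)_t=E[Z_T(\vt\sint S)_T\mid\cF_t]=E[-cZ_T\mid\cF_t]=-cZ_t$, i.e.\ $Z_t\big((\vt\sint S)_t+c\big)=0$ for all $t$. Since $Z_t\neq0$ $\Pas$, this forces $(\vt\sint S)_t=-c$ for all $t$; evaluating at $t=0$ and using $(\vt\sint S)_0=0$ gives $c=0$, whence $\vt\sint S\equiv0$ and $\vt\eq{S}0$. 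This is precisely uniqueness of value processes.

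\textbf{Part 2).} Uniqueness is free: $Z_t\neq0$ gives uniqueness of value processes by part 1), hence uniqueness of gains processes by Proposition \ref{prop:UOVP link UOGP}, so the solutions of \eqref{eq:MVH} and \eqref{eq:MVHex}, when they exist, are unique by Propositions \ref{prop:UOGP} and \ref{prop:UOVP}. The substance is \emph{existence}, which reduces to $L^2$-closedness of $\mathcal G_T(S):=\{\vt\sint S_T:\vt\in\overline\Theta(S)\}$: once this holds, $\R+\mathcal G_T(S)$ is closed too (a closed subspace plus a line), and the minimisers are the $L^2$-projections of $H$ onto these spaces. To prove closedness, take $\vt^n\in\overline\Theta(S)$ with $\vt^n\sint S_T\to g$ in $L^2$. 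The martingale property yields $E[Z_T\,\vt^n\sint S_T]=0$, so $E[Z_Tg]=0$, and for every stopping time $\tau$, $Z_\tau(\vt^n\sint S)_\tau=E[Z_T(\vt^n\sint S)_T\mid\cF_\tau]\to E[Z_Tg\mid\cF_\tau]$ in $L^1$; since $Z_\tau\neq0$, $(\vt^n\sint S)_\tau$ converges in probability to $V_\tau:=M_\tau/Z_\tau$, where $M_t:=E[Z_Tg\mid\cF_t]$, with $V_0=0$ and $V_T=g$. The candidate limit $V=M/Z$ is a semimartingale precisely because $Z$ \emph{and} $Z_-$ avoid $0$ (so $1/Z$ is a semimartingale) --- this is exactly where the extra hypothesis $Z_{t-}\neq0$ enters.

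\textbf{Main obstacle.} The crux is to upgrade ``$(\vt^n\sint S)_\tau\to V_\tau$ in probability for all $\tau$'' to ``$V=\vt\sint S$ for some $\vt\in L(S)$'', i.e.\ to show that the limit process $V$ is again a stochastic integral of $S$. I would derive this from the closedness of the space of $S$-integrals in the semimartingale topology (a Mémin-type stability result, in the $L^2$-framework underlying \cite{cerny:kallsen:07}): the strong convergence at all stopping times makes $(\vt^n\sint S)_n$ Cauchy in that topology, its limit is an $S$-integral $\vt\sint S$, and it coincides with $V$. Then $\vt$ satisfies both conditions of Definition \ref{def:adm:strategy:l:2:cerny:kallsen} (with $\vt\sint S_T=g\in L^2$ and $\vt\sint S_\tau=V_\tau$), so $\vt\in\overline\Theta(S)$ and $g\in\mathcal G_T(S)$. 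Alternatively, since $Z,Z_-\neq0$ delivers the law of one price via \cite[Theorem 3.2]{cerny:czichowsky:22}, existence and uniqueness of the MVH and exMVH solutions can be quoted directly from the closedness results proved there. I expect this closedness step to be the only genuinely technical point; everything else is the bookkeeping above.
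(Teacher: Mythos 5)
The paper itself does not prove Proposition \ref{prop:LOP}; it defers to \cite[Proposition III.1.12]{martins:23}. Your preliminary martingale claim and part 1) are correct and follow the standard route: the integration-by-parts decomposition plus the class-(D) bound for simple integrands, the $L^2$/$L^1$/probability limit interchange for general $\vt\in\overline\Theta(S)$, and the pointwise argument $Z_t\big((\vt\sint S)_t+c\big)=0$ for uniqueness of value processes are all sound. The uniqueness half of part 2) via Propositions \ref{prop:UOGP}, \ref{prop:UOVP} and \ref{prop:UOVP link UOGP} is also fine.

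The genuine gap is in the existence half of part 2), precisely at the step you flag as "the crux". You claim that convergence of $(\vt^n\sint S)_\tau$ in probability at all stopping times makes $(\vt^n\sint S)_n$ Cauchy in the semimartingale (Emery) topology, so that a M\'emin-type closedness theorem applies. That implication is false in general: control of a sequence of semimartingales at stopping times (even uniformly, which is what the $L^1$-convergence of $Z_\tau(\vt^n\sint S)_\tau$ gives you) says nothing about the integrals $H\sint(\vt^n\sint S-\vt^m\sint S)$ for predictable $|H|\le1$, which is what the Emery distance measures; one can have $E[|N^n_T|]\to0$ for uniformly integrable martingales $N^n$ while $[N^n,N^n]_T$ explodes. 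Since $Z$ may change sign, you also cannot repair this by an equivalent change of measure. The machinery that actually closes this gap — and the one used in \cite[Proposition III.1.12]{martins:23}, visible in this paper's proof of Theorem \ref{thm:equil:disc} — is to write $Z=Z_0\,\cE(\bar N)$ with $\bar N=(1/Z_-)\sint\bar Z$ (this is where $Z_-\neq0$ enters, not merely to make $M/Z$ a semimartingale), check that the family $({}^\sigma\cE(\bar N))_\sigma$ is a regular $\cE$-density family in the sense of Choulli et al.~\cite{choullietal:1998}, and invoke the closedness theorem of Czichowsky/Schweizer \cite[Theorem 2.16]{czichowsky:schweizer:13} to get $L^2$-closedness of $\mathcal G_T(S)$; existence then follows by orthogonal projection and uniqueness from part 1). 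Your fallback of quoting \v Cern\'y/Czichowsky \cite{cerny:czichowsky:22} is indeed the honest alternative, but as written it is an appeal to the result rather than a proof, so the substantive step of your argument remains unjustified.
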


\begin{proof}
	See \cite[Proposition III.1.12]{martins:23}.
\end{proof}

We close this appendix by linking the zero solution of an MVH problem to a local martingale-type condition for~$S$, which is used to prove some of the main results in Section \ref{sec:representative:agent}.

\begin{lemma}
	\label{lem:MVH:zero}
	Let $(1, S)$ be a local $L^2$-market and $H \in L^2$. Define the square-integrable martingale $Z =(Z_t)_{0 \leq t \leq T}$ by
	$Z_t := E[H\,|\,\cF_t]$. The following are equivalent:
	\begin{enumerate}
		\item  $0  \in \overline{\Theta}(S)$ solves the MVH problem \eqref{eq:MVH} for $H$.
		\item $Z S^j$ is a local $P$-martingale for all $j \in \{1, \ldots, d\}$.
	\end{enumerate}
\end{lemma}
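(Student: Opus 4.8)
The plan is to recast statement (a) as an orthogonality condition on $H$ and then to extract the (local) martingale property from it by testing against elementary strategies. Since the gains set $\mathcal{G}_T(S) := \{\vt \sint S_T : \vt \in \overline{\Theta}(S)\}$ is a linear subspace of $L^2$ (closed under scalar multiplication, which is all that is needed here), the MVH problem \eqref{eq:MVH} minimises $g \mapsto E[(g-H)^2]$ over $g \in \mathcal{G}_T(S)$. First I would note that $0$ is a minimiser if and only if $H \perp \mathcal{G}_T(S)$: from $E[(g-H)^2]-E[H^2]=E[g^2]-2E[Hg]$ and replacing $g$ by $tg$ with $t \in \R$, the requirement $E[(g-H)^2]\geq E[H^2]$ for all $g$ forces the linear term to vanish, whereas orthogonality makes the inequality trivially true. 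Hence (a) is equivalent to
\begin{equation*}
E[H(\vt\sint S_T)]=0\quad\text{for all }\vt\in\overline{\Theta}(S).\tag{$\ast$}
\end{equation*}
Because $\cF_T=\cF$ we have $Z_T=H$, so $(\ast)$ reads $E[Z_T(\vt\sint S_T)]=0$.

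For the direction (b)$\Rightarrow$(a), assume $ZS^j$ is a local $P$-martingale for every $j$. Since $Z_t=E[H\,|\,\cF_t]$ with $H\in L^2$ is a square-integrable martingale, Proposition \ref{prop:LOP} gives that $Z(\vt\sint S)$ is a genuine $P$-martingale for each $\vt\in\overline{\Theta}(S)$; as this martingale starts at $Z_0(\vt\sint S_0)=0$, we obtain $E[Z_T(\vt\sint S_T)]=0$, which is exactly $(\ast)$ and hence (a).

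For the direction (a)$\Rightarrow$(b), which is the main step, fix a common localising sequence $(\tau_n)$ making every $S^{j,\tau_n}$ an $L^2$-semimartingale. For $j\in\{1,\dots,d\}$, stopping times $\sigma\le\tau\le\tau_n$ and $A\in\cF_\sigma$, the simple integrand $\vt=\one_A e^j\one_{\rrbracket\sigma,\tau\rrbracket}\in\Theta_{\rm simple}(S)\subseteq\overline{\Theta}(S)$ satisfies $\vt\sint S_T=\one_A(S^j_\tau-S^j_\sigma)$, so $(\ast)$ yields $E[\one_A Z_T S^j_\tau]=E[\one_A Z_T S^j_\sigma]$. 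Conditioning the right-hand side on $\cF_\sigma$ and using $E[Z_T\,|\,\cF_\sigma]=Z_\sigma$ gives $E[\one_A Z_T S^j_\sigma]=E[\one_A Z_\sigma S^j_\sigma]$; as $A\in\cF_\sigma$ is arbitrary and both sides are integrable (Cauchy--Schwarz with $Z_T\in L^2$ and $\sup_\rho E[(S^{j,\tau_n}_\rho)^2]<\infty$), we deduce $E[Z_T S^j_\tau\,|\,\cF_\sigma]=Z_\sigma S^j_\sigma$. Finally the tower property together with $Z_\tau=E[Z_T\,|\,\cF_\tau]$ turns this into
\begin{equation*}
E[Z_\tau S^j_\tau\,|\,\cF_\sigma]=E\big[E[Z_T\,|\,\cF_\tau]\,S^j_\tau\,\big|\,\cF_\sigma\big]=E[Z_T S^j_\tau\,|\,\cF_\sigma]=Z_\sigma S^j_\sigma.
\end{equation*}
Valid for all $0\le\sigma\le\tau\le\tau_n$, this is the optional-sampling characterisation of the martingale property of the stopped process $(ZS^j)^{\tau_n}$ (taking $\sigma=0$ shows $E[(ZS^j)^{\tau_n}_\rho]$ is constant over bounded stopping times $\rho$), and letting $n\to\infty$ shows $ZS^j$ is a local $P$-martingale.

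The hard part is (a)$\Rightarrow$(b): one must keep the test strategies genuinely $L^2$-admissible, which is why the stopping times are restricted to $\tau\le\tau_n$ along the localising sequence, and one must ensure the products $Z_TS^j_\tau$ and $Z_tS^j_t$ are integrable so that the conditional expectations are well defined. Both points are handled by the $L^2$-semimartingale bound on $S^{j,\tau_n}$ and the square-integrability of $Z$, after which converting the family of identities over $\sigma\le\tau\le\tau_n$ into the martingale property of $(ZS^j)^{\tau_n}$ is the only remaining technical matter.
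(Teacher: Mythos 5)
Your proof is correct and follows essentially the same route as the paper's: both directions reduce (a) to the orthogonality $E[H(\vt\sint S_T)]=0$ via the first-order condition, the implication (b)$\Rightarrow$(a) invokes Proposition \ref{prop:LOP} to make $Z(\vt\sint S)$ a true martingale, and (a)$\Rightarrow$(b) tests against simple buy-and-hold integrands stopped at the localising times $\tau_n$ to verify the (local) martingale property of $ZS^j$ through optional sampling. The only cosmetic difference is that you derive the conditional identity $E[Z_\tau S^j_\tau\,|\,\cF_\sigma]=Z_\sigma S^j_\sigma$ with two stopping times and a set $A\in\cF_\sigma$, whereas the paper works with a single stopping time and the constant-expectation criterion for martingales; both are valid.
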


\begin{proof}
	(a) $\Rightarrow$ (b): 	As $0\in \overline{\Theta}(S)$ is a solution to \eqref{eq:MVH},  we have for all $\vt \in \TB(S)$ and $\delta \in \R$ that
	$$E[(H-\delta \vt \sint S_T)^2] \geq E[H^2].$$
	By taking $\delta\searrow 0$ and $\delta \nearrow 0$, we deduce by examining the first-order term that $E[(\vt \sint S_T)  H] = 0$ for any $\vt \in \overline{\Theta}(S)$, i.e., $H$ is orthogonal to $\mathcal G_T(S) \subseteq L^2$. 

Now fix $j \in \{1, \ldots, d\}$ and let $\sigma$ be a stopping time. 
Since $(1, S)$ is a local $L^2$-market,  there exists a localising sequence $(\tau_n)_{n \in \N}$ such that for each $n \in \N$, the stopped process $S^{j,\tau_n}$ is an $L^2$-semimartingale. Fix $n \in \N$. Then the strategy $e^{j, \sigma, n} := (0, \ldots, 0, \one_{\rrbracket 0, \sigma \wedge \tau_n \rrbracket}, 0, \ldots, 0)$, where the indicator process is at the $j$-th position, belongs to $\Theta_{\rm simple}(S) \subseteq \overline{\Theta}(S)$. The fact that $Z$ is a $P$-martingale with $Z_T = H$ and $E[(e^{j, \sigma, n}  \sint S_T)  H] = 0$ yield
	\begin{align*}
		E[S^j_{\sigma \wedge \tau_n}Z_{\sigma \wedge \tau_n} - S^j_0 Z_0 ] = E[(S^j_{\sigma \wedge \tau_n} - S^j_0) Z_T] = E[(e^{j,\sigma, n} \sint S_T) H]  =0.
	\end{align*}
Since $\sigma$ was arbitrary, it follows that $(Z S^j)^{\tau_n}$ is a $P$-martingale. As $(\tau_n)$ is a localising sequence, we conclude that $Z S^j$ is a local $P$-martingale.
	
	(b) $\Rightarrow$ (a): 	Fix $\vt \in \overline{\Theta}(S)$. By Proposition \ref{prop:LOP}, $ Z (\vt \sint S)$ is a $P$-martingale, and hence
	\begin{align*}
		E[(\vt \sint S_T -H)^2] &= E[H^2] -2  E[(\vt \sint S_T) Z_T]+ E[(\vt \sint S_T)^2] \\
		&= E[H^2]  + E[(\vt \sint S_T)^2]  \\
		&\geq E[(H-0 \sint S_T)^2].
	\end{align*}
	Thus $0 \in \overline{\Theta}(S)$ solves the MVH problem \eqref{eq:MVH}.
\end{proof}

\subsection{Dynamic theory}
\label{sec:MVH:dynamic}
We introduce here some  key concepts from the dynamic theory of mean--variance hedging, for which we use \v Cern\' y/Kallsen \cite{cerny:kallsen:07} as a reference, that are used later in the proof of Proposition \ref{prop:special:prop:quad:eq.opp:proc:bar:z}. \textbf{For the remainder of this section}, we once again impose Assumption \ref{assp:agg:end:bounded}. Hence \cite[Assumption 2.1]{cerny:kallsen:07} is satisfied, since Lemma \ref{lmm:transl:quad:equil:eqs:to:chap:iv} gives  an equivalent local martingale measure $Q(\bar \gamma)$ for $S(\bar \gamma)$ with bounded density. 

As in \cite[Definition 3.3]{cerny:kallsen:07}, we introduce the \emph{opportunity process} $L(\bar \gamma)$ as the value process of the pure investment problem, which is given by
\begin{equation}\label{eq:def:opp:proc:chap:iv}
	L_t(\bar \gamma)  = \essinf_{\vartheta \in \overline{\Theta}_{t,T}(S(\bar \gamma))} E\Big[\Big(1-\big(\vartheta\sint S(\bar \gamma)\big)_T \Big)^2 \Bigm \vert \cF_t \Big], \quad 0 \leq t \leq T,
	\end{equation}
	 where $\overline{\Theta}_{t,T}(S(\bar \gamma)) \subseteq \overline{\Theta}(S(\bar \gamma))$ is the set of admissible strategies $\vt$ such that $\vt \one_{\llbracket 0,t \rrbracket} = 0$. We have by  \cite[Corollary 3.4 and Lemma 3.10]{cerny:kallsen:07} that $L(\bar \gamma)$ is an $(0,1]$-valued submartingale with $L_T(\bar \gamma) = 1$. Note that  $L_0(\bar \gamma)$ coincides with $\ell(\bar \gamma)$ given in \eqref{eq:def:ell}.  By \cite[Lemma 3.1]{cerny:kallsen:07}, there exists for each $t \in [0,T]$ a unique optimal strategy $\vt^{(t)}(1;S(\bar \gamma)) \in \TB_{t,T}(S(\bar \gamma))$ to \eqref{eq:def:opp:proc:chap:iv}; we say that it is the \emph{optimal pure investment strategy started at time $t$}. 
	
	Next, we introduce the \emph{mean value process} $(\bar V_t(\bar \gamma))_{0 \leq t \leq T}$ for $\bar H(\bar \gamma)= \bar \gamma-\bar \Xi$ in the sense of \cite[Definition 4.2]{cerny:kallsen:07}. By \cite[Lemmas 3.7 and 4.1 and Proposition 3.13.1]{cerny:kallsen:07}, $\bar V(\bar \gamma)$ is the unique semimartingale  such that
	$$\bar V_t(\bar \gamma) = \frac{1}{L_t(\bar \gamma)} E \bigg[\bar H(\bar \gamma) \bigg(1- \Big(\vt^{(t)}\big(1;S(\bar \gamma)\big) \sint S(\bar \gamma)\Big)_T \bigg) \biggm \vert \cF_t \bigg], \quad 0 \leq t \leq T.$$
	In particular, $\bar V_T(\bar \gamma) = \bar H(\bar \gamma)$. By \cite[Lemma 4.1]{cerny:kallsen:07}, the process $(\bar V_s(\bar \gamma) M^{(t)}_s(\bar \gamma))_{t \leq s \leq T}$ is a $P$-martingale on $[t,T]$ for any $t \in [0,T]$, where $(M^{(t)}_s(\bar \gamma))_{t \leq s \leq T}$ is the $P$-martingale (see \cite[Lemma 3.2]{cerny:kallsen:07}) defined on $[t,T]$ by
	\begin{equation}\label{eq:def:mart:assoc:mv:proc:chap:iv}
		M^{(t)}_s(\bar \gamma):= L_s(\bar \gamma)\bigg(1- \Big(\vt^{(t)}\big(1;S(\bar \gamma)\big) \sint S(\bar \gamma)\Big)_s\bigg), \quad 0 \leq t \leq s \leq T.
	\end{equation}

	The key property for our purposes is that $\bar V_t(\bar \gamma)$ satisfies the inequality
	\begin{align*}
	& \essinf_{\vartheta \in \overline{\Theta}_{t,T}(S(\bar \gamma))} E\Big[\Big(\bar H(\bar \gamma) - \bar V_t(\bar \gamma) -\big(\vartheta\sint S(\bar \gamma)\big)_T \Big)^2 \Bigm \vert \cF_t \Big] \\
	 &\leq \essinf_{\vartheta \in \overline{\Theta}_{t,T}(S(\bar \gamma))} E\Big[\Big(\bar H(\bar \gamma) - U -\big(\vartheta\sint S(\bar \gamma)\big)_T \Big)^2 \Bigm \vert \cF_t \Big]
	 \end{align*}
	for any $\cF_t$-measurable random variable $U$; this follows by \cite[Theorem 4.10.2]{cerny:kallsen:07}. In particular, $\bar V_0(\bar \gamma)$ is the first component of the solution to the exMVH problem  \eqref{eq:MVHex} for $\bar H(\bar \gamma)$, and hence $\bar V_0(\bar \gamma) = c(\bar H(\bar \gamma);S(\bar \gamma)) = \bar \gamma - c(\bar \Xi;S(\bar \gamma))$.

\section{Auxiliary results and proofs}
\label{sec:proofs}
\begin{proof}[Proof of Lemma \ref{lem:ind:opt}] 
Let $\vt \in \overline{\Theta}(S)$ and set $\tilde \vt := \vt - \eta^k \in \TB(S)$. Rewriting  $U^{\rm Q}_k(x) =  -(x-\gamma_k)^2+\gamma_k^2$ yields
\begin{align*}
E\big[U^{\rm Q}_k\big((\vt - \eta^k) \sint S_T + \Xi^k \big)\big] &=  E[U^{\rm Q}_k(\tilde \vt \sint S_T + \Xi^k )] = - E[(\tilde \vt \sint S_T -  H^k)^2]  + \gamma_k^2.
\end{align*}
Because $\TB(S)$ is a vector space, this shows that $\vt$ is a solution to the maximisation problem \eqref{eq:def:gen:quad:util:indiv:prob} if and only if $\tilde \vt$ is a solution to the MVH problem \eqref{eq:lem:ind:mod}, and therefore the two problems are equivalent under the assumption that $\eta^k \in \overline{\Theta}(S)$. In particular, \eqref{eq:def:gen:quad:util:indiv:prob} has a unique solution $\hat \vt^k$ if and only if \eqref{eq:lem:ind:mod} has a unique solution $\vt^{\rm MVH}(H^k)$, in which case we have the relationship $\hat \vt^k  \eq{S}  \eta^k + \vt^{\rm MVH}(H^k)$ between the solutions. 
\end{proof}

\begin{proof}[Proof of Proposition \ref{prop:ind:opt:decomp}]
	Since  $\vt^{{\rm MVH}}(1)$ is the unique solution to \eqref{eq:pure:inv:sec:quad}, we have by Proposition \ref{prop:UOGP} that $(1,S)$ satisfies uniqueness of gains processes. Thus the assumption that $E[(\vt^{{\rm MVH}}(1)\sint S_T - 1)^2]>0$ implies by part 2) of Proposition \ref{prop:UOVP link UOGP}  that $(1,S)$ also satisfies uniqueness of value processes; see Definition \ref{def:uni:gains:values}. Then by Corollary \ref{cor:UOVP link UOGP}, there exists a unique solution $\vt^{\rm MVH}(\Xi^k) \in \overline{\Theta}(S)$ to the MVH problem \eqref{eq:lem:ind:mod} for $\Xi^k$ if and only if there exists a unique solution to the exMVH problem for $\Xi^k$, and we have
	\begin{equation}\label{eq:mvh:prob:hk:sol:decomp:aux:xi:k}
		\vt^{\rm MVH}(\Xi^k)=  c_k \vt^{{\rm MVH}}(1) + \vt^{\rm ex}(\Xi^k).
\end{equation}
	Moreover, by the linearity of MVH (see Lemma \ref{lmm:mvh:prob:linearity}), there exists a unique solution to \eqref{eq:lem:ind:mod} for $\Xi^k$ if and only if there exists a unique solution to \eqref{eq:lem:ind:mod} for $H^k = \gamma_k - \Xi^k$, in which case we have
	$$ \vt^{\rm MVH}(H^k) = \gamma_k \vt^{{\rm MVH}}(1) - \vt^{\rm MVH}(\Xi^k).$$
	Together with \eqref{eq:mvh:prob:hk:sol:decomp:aux:xi:k} and Lemma \ref{lem:ind:opt}, this shows the equivalence and gives \eqref{eq:mvh:prob:hk:sol:decomp}.
\end{proof}

\begin{proof}[Proof of Lemma \ref{lem:rep:agent:rel to ind}]
By the implication (b) $\Rightarrow$ (c) in Proposition \ref{prop:UOGP}, the map $H \mapsto \vt^{\rm MVH}(H)$ is well defined for all $H$ such that a solution $\vt^{\rm MVH}(H)$ to \eqref{eq:MVH} exists, since such a solution is unique up to $S$-equivalence. We also have by Lemma \ref{lmm:mvh:prob:linearity} that  $H \mapsto \vt^{\rm MVH}(H)$ is linear where it is defined. Because $\eta^1, \ldots, \eta^K \in \overline{\Theta}(S)$, we get from Lemma \ref{lem:ind:opt} that the MVH problem \eqref{eq:MVH} for $H^k$ has the unique solution $\vt^{\rm MVH}(H^k) = \hat \vt^k - \eta^k$ for each $k \in \{1, \ldots, K\}$.  Hence there is a unique solution to \eqref{eq:MVH} for $\bar H = \sum_{k=1}^{K} H^k$, which is given by
 		\begin{equation}
 			\label{eq:lem:rep:agent:rel to ind}
\vt^{\rm MVH}(\bar H)  \eq{S} \sum_{k = 1}^K \big(\vt^{\rm MVH}(H^k) - \eta^k \big) = \bar \vt - \bar \eta.
 \end{equation}
Thus the claim follows by Lemmas \ref{lem:rep:agent} and \ref{lem:ind:opt}, as we have
$$\bar \vt \eq{S}  \vt^{\rm MVH}(\bar H) + \bar \eta \eq{S} \sum_{k=1}^K \big(\vt^{\rm MVH}(H^k) + \eta^k \big) \eq{S} \sum_{k=1}^K \hat \vt^k,$$
which shows \eqref{eq:quad:equil:rep:agent:strat:sum:indiv} and concludes the proof.
\end{proof}

\begin{proof}[Proof of Lemma \ref{lmm:quad:eq:mark:zs:is:loc:mart}]
Denote by $\hat \vt^1, \ldots, \hat \vt^K \in \overline{\Theta}(S)$ the unique individually optimal strategies. Then by Lemma \ref{lem:rep:agent:rel to ind}, $\bar \vt:= \sum_{k = 1}^K  \hat \vt^k \in \overline \Theta(S)$ is the unique solution to the optimisation problem \eqref{max:rep:agent} of the representative agent. Moreover, the market clearing condition \eqref{eq:market clearing} yields $\bar \vt \eq{S} \bar \eta$, so that by Lemma \ref{lem:rep:agent}, $0$ is the unique solution to the MVH problem \eqref{eq:MVH} for $\bar H$. Thus Lemma \ref{lem:MVH:zero} yields that $(\bar Z_t S^j_t)_{0 \leq t \leq T}$ is a local $P$-martingale for each $j \in \{1, \ldots, d_1 + d_2\}$, as claimed.
\end{proof}

\begin{remark}\label{rmk:gkw:decomp:bar:g:m:1:detail:pred:quad:var}
	The choice of $\bar \xi^{(1)}$ in \eqref{eq:gkw:decomp:bar:g:m:1} is only unique up to $M^{(1)}$-equivalence. Because the components of $M^{(1)}$ may be linearly dependent, the components $\bar \xi^i \sint M^i$ need not be well defined in general (see Cherny/Shiryaev \cite{CS2002}), but we choose a particular integrand $\bar \xi^{(1)}= (\bar \xi^1,\ldots,\bar \xi^{d_1})$ with $\bar \xi^i \in L^2(M^i)$ for $i \in \{1, \ldots, d_1\}$ as follows: Applying the Gram--Schmidt algorithm to $(M^1, \ldots, M^{d_1}, \bar Z) \in \cM^{2}_{0,\loc}$ yields a unique decomposition of the form
$$\bar Z_t = \bar Z_0 + \sum_{i=1}^{d_1} \bar \xi^{i} \sint M^{i}_t + M^{\bar  {Z}}_t, \quad 0 \leq t \leq T,$$
where $ \sum_{i=1}^{I} \bar \xi^{i} \sint M^{i} $ is strongly orthogonal to $\sum_{i=I+1}^{d_1} \bar \xi^{i} \sint M^{i} + M^{\bar Z}$ for each \mbox{$I \in \{1, \ldots, d_1\}$}. This orthogonality property and the square-integrability of $\bar Z$ yield that $ \sum_{i=1}^{I} \bar \xi^{i} \sint M^{i}$ is a square-integrable martingale for each $I$, and hence so is $\bar \xi^{i} \sint M^{i}$. For this choice of $\bar \xi^{(1)} := (\bar \xi^1, \ldots, \bar \xi^{d_1})$ and as $M^{(1)}$ is a locally square-integrable martingale by assumption, the predictable quadratic variation
$$\langle \bar \xi^{i} \sint M^{i}, M^j\rangle = {{}\bar \xi^{i}}^{\top} \!\! \sint \langle M^{i}, M^j\rangle$$
is well defined for each $j \in \{1, \ldots, d_1\}$ and ${{}\bar \xi^{i}}^\top \!\! \sint ([ M^{i}, M^j] - \langle M^{i}, M^j\rangle)$ is a local $P$-martingale. In particular, the process
\begin{equation}\label{eq:gkw:decomp:bar:g:m:1:detail:pred:quad:var}
	\langle \bar \xi^{(1)} \sint M^{(1)}, M^j\rangle = \sum_{i = 1}^{d_1} \bar \xi^i \sint \langle M^i, M^j \rangle
\end{equation}
is well defined. Although we choose $\bar \xi^{(1)}$ as above, note that the martingale $\bar \xi^{(1)} \sint M^{(1)}$ is independent of that choice due to \eqref{eq:gkw:decomp:bar:g:m:1}. Hence the right-hand side of \eqref{eq:gkw:decomp:bar:g:m:1:detail:pred:quad:var} is also independent of any choice of $\bar \xi^{(1)}$ such that the individual summands are well defined. Likewise, the finite-variation part in \eqref{eq:thm:equil:S1} below also does not depend on the choice of $\bar \xi^{(1)}$.
\end{remark}

\begin{proof}[Proof of Theorem \ref{thm:equil}]

\textbf{(a)} We show that a quadratic equilibrium $(1, S)$ must be given by \eqref{eq:thm:equil:S1} and \eqref{eq:thm:equil:S2}. Lemma \ref{lmm:quad:eq:mark:zs:is:loc:mart} yields for each $j \in \{1, \ldots, d_1+d_2\}$ that $\bar Z S^j$ is a local $P$-martingale. 
We first consider $j \in \{1, \ldots, d_1\}$. Recall the decomposition \eqref{eq:gkw:decomp:bar:g:m:1} for $\bar Z$ and the dynamics \eqref{eq:def:prim:fin:assets:decomp} for $S^j$. Applying the product formula to $\bar Z S^j$ and rearranging terms, we obtain
\begin{align}
&\bar Z_t S^j_t - \bar Z_{-} \sint M^j_t - S^j_{-}\sint\bar Z_t -  \sum_{i = 1}^{d_1} \bar \xi^{i} \sint \big([ M^{i}, M^j] - \langle M^{i}, M^j\rangle\big)_t \notag \\
	&= \bar Z_{-} \sint A^j_t + \sum_{i = 1}^{d_1} \bar \xi^{i} \sint \langle M^{i}, M^j\rangle_t \label{eq:pf:thm:equil:FVmart}
\end{align}
for $0\leq t \leq T$. Note that $\bar ZS^j$, $M^j$, $\bar Z$ and $\bar \xi^i \sint([M^i, M^j ] - \langle M^i, M^j \rangle)$ are local $P$-martingales (for the latter, this is shown  in Remark \ref{rmk:gkw:decomp:bar:g:m:1:detail:pred:quad:var}), whereas $A^j$ and $\bar \xi^i \sint \langle M^i, M^j\rangle$ are predictable finite-variation processes. Thus both sides of \eqref{eq:pf:thm:equil:FVmart} must vanish, as they are null at $0$. By assumption, we have $\bar Z_t \neq 0$ and $\bar Z_{t-} \neq 0$ for all  $t \in [0, T]$ $\Pas$ Since $\bar Z$ is also càdlàg, this implies that $1/\bar Z_-$ is finite-valued and càglàd, thus locally bounded.
Integrating $1/\bar Z_{-}$ against the right-hand side of \eqref{eq:pf:thm:equil:FVmart}, which vanishes as we have shown, yields $ A^j = - \sum_{i = 1}^{d_1} \frac{\bar \xi^{i}}{\bar Z_{-}} \sint \langle M^{i}, M^j\rangle$.  Plugging into \eqref{eq:def:prim:fin:assets:decomp} then shows \eqref{eq:thm:equil:S1}.

Next, consider $j\in \{d_1+ 1, \ldots, d_1+d_2\}$.  By \eqref{eq:def:prim:prod:assets:divid} and as  $\bar Z_T = \bar H$, we have $\bar Z_T S^j_T = \bar H D^j$. Since $e^j  \in \overline{\Theta}(S)$ and $\bar Z$ is a square-integrable martingale, it follows from Proposition \ref{prop:LOP} that $\bar Z S^j =  \bar Z S^j_0 + \bar Z (e^j\sint S)$ is a $P$-martingale, so that $\bar Z_t S^j_t = E[\bar H D^j \mid \cF_t].$ Since $\bar Z_t \neq 0$ $\Pas$, this yields \eqref{eq:thm:equil:S2}. We have thus shown that any equilibrium $(1,S)$ must satisfy \eqref{eq:thm:equil:S1} and \eqref{eq:thm:equil:S2}.

\textbf{(b)} Next, we show the converse statement. Define $(1, S) = (1, S^{(1)}, S^{(2)})$ by \eqref{eq:thm:equil:S1} and \eqref{eq:thm:equil:S2} and assume that $S^{(2)}$ is a local $L^2$-semimartingale and $e^j  \in \overline{\Theta}(S)$ for $j \in \{d_1+1, \ldots, d_1+d_2\}$. We claim that $(1,S)$ is a quadratic equilibrium. It is clear from \eqref{eq:thm:equil:S1} and \eqref{eq:thm:equil:S2} that $S^{(1)}$ and $S^{(2)}$ satisfy \eqref{eq:def:prim:fin:assets:decomp} and \eqref{eq:def:prim:prod:assets:divid}, respectively. Note that $S^{(1)}$ is a a special semimartingale and the local martingale part $M^{(1)}$ is locally square-integrable, by assumption. Thus by \v Cern\' y/Kallsen \cite[Lemma A.2]{cerny:kallsen:07}, $S^{(1)}$ is also a local $L^2$-semimartingale so that $(1, S^{(1)}, S^{(2)})$ is a local $L^2$-market. 
Next, we want to show that $\bar Z S^j$ is a local $P$-martingale for $j \in \{1, \ldots, d_1 + d_2\}$. This is clear for $j  \in \{d_1+1, \ldots, d_1 + d_2\}$ by the construction \eqref{eq:thm:equil:S2}. For $j  \in \{1, \ldots, d_1 \}$, we use a result on local $\cE$-martingales by Choulli et al.~\cite{choullietal:1998} as in the proof of part 2) of Proposition \ref{prop:LOP}. Indeed, the assumptions on  $\bar Z$ yield $\bar Z = \bar Z_0 \,\cE(\bar N)$ for some local $P$-martingale $\bar N = (\bar N_t)_{0 \leq t \leq T}$, namely, $\bar N = (1/\bar Z_-) \sint \bar Z$. 
Since for $j \in\{1, \ldots, d_1\}$, we have
$$\sum_{i = 1}^{d_1}  \frac{\bar \xi^i_t}{\bar Z_{t-}}\diff \langle M^i, M^j \rangle_t = \frac{1}{\bar Z_{t-}} d \langle \bar Z, M^j \rangle_t = d \langle \bar N, M^j \rangle_t,$$
we obtain that $S^{j}$ given by \eqref{eq:thm:equil:S1} is a local $\cE$-martingale by \cite[Corollary 3.16]{choullietal:1998} (which generalises Girsanov's theorem to local $\cE$-martingales). Thus by \cite[Definition 3.11]{choullietal:1998} with $n=0$, $\bar Z S^j$ is a local $P$-martingale.

Now note that $\bar Z_{t} \neq 0$  and $\bar Z_{t-} \neq 0$ for all $t \in [0, T]$ $\Pas$ by the assumptions on $\bar Z$, and $\bar Z S^j$ is a local $P$-martingale for $j \in \{1, \ldots, d_1 + d_2\}$ as shown above. Hence for each agent $k \in \{1, \ldots, K\}$, the MVH problem \eqref{eq:MVH} for $H^k$ has a unique solution $\vt(H^k)$ by part 2) of Proposition \ref{prop:LOP}. Since moreover \mbox{$\eta^k \in \TB(S)$} by the assumption on $S^{(2)}$, it follows by  Lemma  \ref{lem:ind:opt} that  the individual optimisation problem \eqref{eq:def:gen:quad:util:indiv:prob} for agent $k$ has a unique solution $\hat \vt^k$. This shows condition 1) in Definition \ref{def:equilibrium} of an equilibrium market. Moreover, the strategy $0$ solves the MVH problem \eqref{eq:MVH} for $\bar H$ by Lemma \ref{lem:MVH:zero}. Thus Lemmas \ref{lem:rep:agent} and \ref{lem:rep:agent:rel to ind} yield \mbox{$\sum_{k=1}^K \hat \vt^k = \bar \vt = \bar \eta$}, i.e., the market clears and condition 2) is satisfied. Finally, condition 3) is satisfied by assumption, and thus $(1,S)$ is a quadratic equilibrium.
\end{proof}

\begin{proof}[Proof of Lemma \ref{lem:quad equil:suf cond}]
(a)	This is (c) for $p_1 = 1$ and $p_2 = 2$, so that $q_1 = \infty$ and $q_2 = 2$.
	
(b) This is (c) for $p_1 = p_2 =\infty$.

(c) We only consider the case that $p_1, p_2  \in (1, \infty)$. The arguments for the other cases are  very similar and therefore omitted. Fix $j \in \{d_1 + 1, \ldots, d_1 + d_2\}$ and define $\bar Q \approx P$ by
$$\frac{\diff \bar Q}{\diff P} = \frac{\bar H}{E[\bar H]} =: Z^{\bar Q}_T \in L^{1-p_1}(P) \cap L^{p_2}(P).$$
By \eqref{eq:thm:equil:S2} and the Bayes rule, $S^j$ is a (true) $\bar Q$-martingale with $S^j_T = D^j$. Thus the inequalities of Hölder and Doob (with constant $C_{q_1}$) give
\begin{align*}
E_P \Big[\sup_{t \in [0,T]} |S^j_t|^2\Big] &= E_{\bar Q}\bigg[\frac{1}{Z^{\bar Q }_T}\sup_{t \in [0,T]} |S^j_t|^2\bigg] \leq  E_{\bar Q}\bigg[\bigg(\frac{1}{Z^{\bar Q}_T}\bigg)^{p_1}\bigg]^{1/p_1}  E_{\bar Q}\Big[\sup_{t \in [0,T]} |S^{j}_t|^{2q_1}\Big]^{1/q_1} \\
&\leq  E_P\bigg[\bigg(\frac{1}{Z^{\bar Q}_T}\bigg)^{p_1-1}\bigg]^{1/p_1} C_{q_1} E_{\bar Q}[(D^j)^{2 q_1}]^{1/q_1}  \\
&= C_{q_1} E_P\big[(Z^{\bar Q}_T)^{1-p_1}\big]^{1/p_1}E_P\big[Z^{\bar Q}_T (D^j)^{2 q_1}\big]^{1/q_1} \\
&\leq C_{q_1} E_P\big[(Z^{\bar Q}_T)^{1-p_1}\big]^{1/p_1} E_P\big[\big(Z^{\bar Q}_T \big)^{p_2}\big]^{1/(q_1 p_2)}E_P[(D^j)^{2 q_1 q_2}]^{1/(q_1 q_2)} \\
&< \infty
\end{align*}
by the assumptions. This implies that $S^j$ is an $L^2$-semimartingale.
		\end{proof}

  \begin{proof}[Proof of Lemma \ref{lmm:equil:disc:nec:cond}]
	Assume that a quadratic equilibrium $(1,S)$ exists. For a contradiction, suppose that \eqref{eq:thm:equil:disc:xi zero:suff:cond} is not satisfied, i.e., there exist $i \in \{1, \ldots, d_1\}$ and $t \in \{1, \ldots, T\}$ such that
	\begin{equation}\label{eq:quad:equil:degen:case:first:cond:failure:nonex}
		P[\bar Z_{t-1} =0, \, \bar \xi^i_t \Delta \langle M^i\rangle_t\neq 0] > 0.
	\end{equation}
	By Lemma \ref{lmm:quad:eq:mark:zs:is:loc:mart}, $(\bar Z_t S_t)_{t = 0, \ldots, T}$ is a local $P$-martingale so that
	$\one_{\{\bar Z_{t-1} =0\}} \Delta(\bar Z S^i)_t$
	is the increment of a local martingale. We decompose
	\begin{align*}
		\one_{\{\bar Z_{t-1} =0\}} \Delta(\bar Z S^i)_t &= \one_{\{\bar Z_{t-1} =0\}} (\bar Z_{t-1} \Delta S^i_t +   S^i_{t-1} \Delta \bar Z_t + \Delta \bar Z_t \Delta S^i_t) \\
			&= \one_{\{\bar Z_{t-1} =0\}}   S^i_{t-1} \Delta \bar Z_t + \one_{\{\bar Z_{t-1} =0\}}  \Delta \bar Z_t \Delta S^i_t \\
			&= \one_{\{\bar Z_{t-1} =0\}}   S^i_{t-1} \Delta \bar Z_t + \one_{\{\bar Z_{t-1} =0\}} (\Delta [\bar Z,  S^i]_t - \Delta \langle \bar Z, S^i\rangle_t) \\
			&\quad+ \one_{\{\bar Z_{t-1} =0\}} \bar \xi^i_t \Delta \langle M^i\rangle_t,
	\end{align*}
	where the last equality follows since $\Delta \langle \bar Z, S^i\rangle_t =  \bar \xi^i_t \Delta \langle M^i\rangle_t$. Like the left-hand side, the first two terms in the last expression of the right-hand side are increments of local martingales, since $\bar Z$ is a martingale and $[\bar Z,  S^i] - \langle \bar Z, S^i\rangle$ a local martingale. It follows that the last term $\one_{\{\bar Z_{t-1} =0\}} \bar \xi^i_t \Delta \langle M^i\rangle_t$ must also be the increment of a local martingale. However, this term is also $\mathcal F_{t-1}$-measurable, and hence null $\Pas$ This leads to a contradiction with \eqref{eq:quad:equil:degen:case:first:cond:failure:nonex} so that \eqref{eq:thm:equil:disc:xi zero:suff:cond} must hold.
	
	Similarly, suppose that \eqref{eq:thm:equil:disc:G cond:suff:cond} does not hold, i.e.,
	\begin{equation}\label{eq:quad:equil:degen:case:second:cond:failure:nonex}
		P\big[\bar Z_{t} =0, \, E[\bar H D^j\mid \cF_t] \neq 0\big] > 0
	\end{equation}
	for some $j \in \{d_1+1, \ldots, d_1+d_2\}$ and $t \in \{0, \ldots, T-1\}$. Since $\bar Z S^j$ is a local $P$-martingale and $1 \in \bar \Theta(S^j)$ by condition 3) of Definition \ref{def:equilibrium}, Proposition \ref{prop:LOP} yields that $\bar Z S^j$ is a true $P$-martingale. In particular, we have
	$$\bar Z_t S^j_t = E[\bar H D^j\mid \cF_t] \quad \Pas$$
	This contradicts \eqref{eq:quad:equil:degen:case:second:cond:failure:nonex}, and therefore \eqref{eq:thm:equil:disc:G cond:suff:cond} must hold.
\end{proof}

\begin{proof}[Proof of Theorem \ref{thm:equil:disc}]

We show in the steps (a)--(d) below that the process $(1, S)$ defined by \eqref{eq:thm:equil:disc:S1} and \eqref{eq:thm:equil:disc:S2} is a quadratic equilibrium. In step (a), we check the conditions required by Definition  \ref{def:equilibrium} except for 1) and 2). In steps (b) and (c), we show that $({}^s\cE(\bar N)_t S^j_t)_{t \in \{s, \ldots, T\}}$ and $(\bar Z_t S^j_t)_{t \in \{0, \ldots, T\}}$, respectively, are local $P$-martingales for each $s \in \{0, \ldots, T-1\}$ and $j \in \{1, \ldots, d_1+d_2\}$. These results are then used in step (d) to check that conditions 1) and 2) of Definition  \ref{def:equilibrium} are satisfied.

\textbf{(a)}  We start by checking \eqref{eq:def:prim:fin:assets:decomp} and  \eqref{eq:def:prim:prod:assets:divid}.  For  $j \in \{1, \ldots, d_1\}$, the process $(A^j_t)_{t \in \{0, \ldots, T\}}$ of $S^j$ given by
\begin{equation} \label{eq:thm:equil:disc:fin:part:exp:decomp}
A^j_t :=  \sum_{k=1 }^{t}\sum_{i =1}^{d_1}\biggl( -\frac{\bar\xi^i_k}{\bar Z_{k-1}} \one_{\{\bar Z_{k-1} \neq 0\}}\Delta \langle M^i, M^j \rangle_k\biggr),
\end{equation}
is predictable, so that  by the definition \eqref{eq:thm:equil:disc:S1}, $S^{(1)}$ satisfies \eqref{eq:def:prim:fin:assets:decomp}. Moreover, plugging $t = T$ into \eqref{eq:thm:equil:disc:S2} yields \eqref{eq:def:prim:prod:assets:divid} since $^T \mathcal {E}(\bar N)_T = 1$.

 We also have to check that $(1,S)$ is a local $L^2$-market. As argued in the proof of Theorem \ref{thm:equil}, $S^{(1)}$ is a local $L^2$-semimartingale as $M^{(1)}$ is locally square-integrable and by \v Cern\' y/Kallsen \cite[Lemma A.2]{cerny:kallsen:07}. On the other hand, $S^{(2)}$ is an $L^2$-semimartingale as it is square-integrable by assumption and the set $\{0, \ldots, T\}$ of times is finite. The fact that $S^{(2)}$ is an $L^2$-semimartingale also implies that condition 3) of Definition \ref{def:equilibrium} of an equilibrium market is satisfied.

\textbf{(b)}  We first show that $({}^s\cE(\bar N)_t S^j_t)_{t \in \{s, \ldots, T\}}$  is a local $P$-martingale for each $j \in \{1, \ldots, d_1\}$ and $s  \in \{0, \ldots, T-1\}$. We use a similar argument as in the proof of part 2) of Proposition \ref{prop:LOP}. Consider the setup of Choulli et al. \cite[Section 3]{choullietal:1998} with the family $\mathcal E = ({}^s\mathcal E(\bar N))_{s \in \{0, \ldots, T\}}$. Since for $j \in\{1, \ldots, d_1\}$, we have
$$\sum_{i = 1}^{d_1}  \frac{\bar \xi^i_t}{\bar Z_{t-1}}  \one_{\{\bar Z_{t-1} \neq 0\}}\Delta \langle M^i, M^j \rangle_t = \frac{ \one_{\{\bar Z_{t-1} \neq 0\}}}{\bar Z_{t-1}} \Delta \langle \bar Z, M^j \rangle_t = \Delta \langle \bar N, M^j \rangle_t,$$
we obtain that $S^{j}$ given by \eqref{eq:thm:equil:disc:S1} is a local $\cE$-martingale by \cite[Corollary 3.16]{choullietal:1998}, i.e., ${}^s \mathcal E(\bar N) S^j$ is a local $P$-martingale for each $s \in \{0, \ldots, T\}$.

For \mbox{$j \in \{d_1+1, \ldots, d_1 + d_2\}$}, we use the definition \eqref{eq:thm:equil:disc:S2} and the square-integrability of  ${}^s\cE(\bar N)$ and $D^j$ to obtain for $t \in \{s, \ldots, T\}$ that
$${}^s\cE(\bar N)_t S^j_t = {}^s\cE(\bar N)_t E[{}^t\cE(\bar N)_T D^j \mid \cF_t] = E[{}^s\cE(\bar N)_T D^j \mid \cF_t],$$
and hence ${}^s\cE(\bar N) S^j$ is a true $P$-martingale for each \mbox{$s \in \{0, \ldots, T\}$}.

\textbf{(c)} Next, we show that $\bar Z S^j$ is a local $P$-martingale for $j \in \{1, \ldots, d_1 + d_2\}$. For $j \in \{1, \ldots, d_1 \}$ and  $t \in \{1, \ldots, T\}$,  we have by \eqref{eq:thm:equil:disc:n:bar:def} that 
\begin{align}
	\Delta (\bar Z S^j)_t &= \bar Z_{t-1} \Delta S^j_t +   S^j_t  \Delta \bar Z_t  \notag\\
		&=\bar Z_{t-1} \Delta S^j_t + S^j_t  \big(\one_{ \{\bar Z_{t-1} \neq 0 \}} \bar Z_{t-1} \Delta \bar N_t +  \one_{ \{\bar Z_{t-1} = 0 \}}\Delta \bar Z_t \big) \notag\\
		&= \one_{ \{\bar Z_{t-1} \neq 0 \}}\bar Z_{t-1} (\Delta S^j_t + S^j_t\Delta \bar N_t ) + \one_{ \{\bar Z_{t-1} = 0 \}} S^j_t \Delta \bar Z_t , \label{eq:thm:equil:disc:local:p:mart:prod:fin:part:z:version:second:part}
\end{align}
where we use $\bar Z_{t-1} = \one_{ \{\bar Z_{t-1} \neq 0 \}}\bar Z_{t-1}$ for the last equality. Note that we have ${}^{t-1} \cE(\bar N)_{t-1} = 1$ and
$$\Delta \big({}^{t-1}\cE(\bar N)\big)_t = {}^{t-1}\cE(\bar N)_{t-1} \Delta \bar N_{t} = \Delta \bar N_{t}  .$$
By plugging in, this yields
\begin{align*}
	\Delta \big({}^{t-1} \cE(\bar N) S^j \big)_t&= \Delta S^j_t + S^j_{t-1}\Delta \bar N_t + \Delta S^j_t \Delta \bar N_t = \Delta S^j_t + S^j_t\Delta \bar N_t  .
\end{align*}
Since we have already shown in step (b) that ${}^{s} \cE(\bar N) S^j$ is a local $P$-martingale for each \mbox{$s \in \{0, \ldots, T\}$}, this implies that $\Delta S^j_t + S^j_t\Delta \bar N_t$ is the increment of a local $P$-martingale, and hence so is the first term in the right-hand side of \eqref{eq:thm:equil:disc:local:p:mart:prod:fin:part:z:version:second:part}. We now consider the second term. Since $ \Delta \bar Z_t = \Delta M^{\bar Z}_t$ on $\{\bar Z_{t-1} = 0 \}$ by the assumption \eqref{eq:thm:equil:disc:xi zero:suff:cond}, we get
$$\one_{ \{\bar Z_{t-1} = 0 \}}  S^j_t \Delta \bar Z_t = \one_{ \{\bar Z_{t-1} = 0 \}} (S^j_{t-1} + \Delta A^j_t + \Delta M^j_t) \Delta M^{\bar Z}_t.$$
This is the increment of a local $P$-martingale, as $ M^{\bar Z}$ and $M^j$ are strongly orthogonal local $P$-martingales, whereas $S^j_{t-1} + \Delta A^j_t$ is $\cF_{t-1}$-measurable. Returning to \eqref{eq:thm:equil:disc:local:p:mart:prod:fin:part:z:version:second:part}, we have thus shown that $\bar Z S^j$ is a local $P$-martingale for $j \in \{1, \ldots, d_1\}$.

On the other hand, for $j \in \{d_1+1, \ldots, d_1+d_2\}$, we claim that $\bar Z S^j$ is even a true $P$-martingale. We use backward induction to show this statement, starting with $t = T$. Since $\bar Z_T = \bar H$ and $S^j_T = D^j$ are square-integrable, we get that $\bar Z_T S^j_T$ is integrable so that $\bar Z S^j$ is a martingale on $\{T\}$. For the inductive step, we claim that if $\bar Z S^j$ is a martingale on $\{t+1,\ldots,T\}$ for some $t \in \{0, \ldots, T-1\}$, then $E[\bar Z_{t+1} S^j_{t+1}\mid \cF_{t}] = \bar Z_{t} S^j_{t}$ $\Pas$~so that $\bar Z S^j$ is a $P$-martingale on $\{t,\ldots,T\}$. To show this claim, note that the definitions \eqref{eq:thm:equil:disc:n:bar:def} and \eqref{eq:thm:equil:disc:stoch:exp:n:bar:def} yield
\begin{align*}
	\bar Z_{t+1} \one_{\{\bar Z_{t} \neq 0\}} = \bar Z_{t}\bigg(1+\frac{\Delta \bar Z_{t+1}}{\bar Z_{t}}\bigg) \one_{\{\bar Z_{t} \neq 0\}} &= \bar Z_{t}(1+ \Delta \bar N_{t+1}) \one_{\{\bar Z_{t} \neq 0\}} \\
	&= \bar Z_{t}\, {}^{t}\cE(\bar N)_{t+1} \one_{\{\bar Z_{t} \neq 0\}}.
\end{align*}
By plugging in and recalling that ${}^{t}\cE(\bar N) S^j$ is a true $P$-martingale and ${}^t \cE(\bar N)_t =1$, we get
\begin{align}
E[\bar Z_{t+1} S^j_{t+1} \mid \cF_{t}] &= \bar Z_tE[{}^{t}\cE(\bar N)_{t+1}  S^j_{t+1} \mid \cF_{t}] \one_{\{\bar Z_{t} \neq 0\}} + E[\bar Z_{t+1} S^j_{t+1} \mid \cF_{t}]\one_{\{\bar Z_{t} =0\}} \notag\\
&= \bar Z_t S^j_t \one_{\{\bar Z_{t} \neq 0\}} + E[\bar Z_{t+1} S^j_{t+1} \vert \cF_{t}]\one_{\{\bar Z_{t} =0\}} .  \label{eq:thm:equil:disc:true:mart:prod:asset:z:s:j:decomp}
\end{align}
By the inductive hypothesis and the assumption \eqref{eq:thm:equil:disc:G cond:suff:cond}, we obtain 
$$E[\bar Z_{t+1} S^j_{t+1}\mid  \cF_{t}]\one_{\{\bar Z_{t} =0\}} = E[\bar Z_T S^j_T \mid \cF_{t}]\one_{\{\bar Z_{t} =0\}}  = E[\bar H D^j \mid \cF_{t}]\one_{\{\bar Z_{t} =0\}} = 0.$$
Plugging into \eqref{eq:thm:equil:disc:true:mart:prod:asset:z:s:j:decomp} yields
$$E[\bar Z_{t+1} S^j_{t+1} \mid \cF_{t}] =  \bar Z_t S^j_t \one_{\{\bar Z_{t} \neq 0\}}  = \bar Z_t S^j_t.$$
 It follows by backward induction that $\bar Z S^j$ is a true $P$-martingale on $\{0, \ldots, T\}$ for each $j \in \{d_1+1, \ldots, d_1+d_2\}$, as claimed. This also concludes the proof that $\bar Z S^j$ is a local $P$-martingale for all $j \in \{1, \ldots, d_1 + d_2\}$.

\textbf{(d)} We are now ready to show that $(1,S)$ satisfies conditions 1) and 2) of Definition \ref{def:equilibrium}. We first show that $(1,S)$ satisfies uniqueness of value processes. To that end, suppose that $c_1+ \vt^1 \sint S_T = c_2 + \vt^2 \sint S_T$ for some $c_1, c_2 \in \R$ and $\vt^1, \vt^2 \in \TB(S)$. Recall from (b) that ${}^{s}\cE(\bar N) S^j$ is a local $P$-martingale on $\{s, \ldots, T\}$ for each $s \in \{0, \ldots, T-1\}$. Since ${}^{s}\cE(\bar N)$ is square-integrable by assumption, it follows by Proposition \ref{prop:LOP} that ${}^{s}\cE(\bar N)(c_1+ \vt^1 \sint S)$ and ${}^{s}\cE(\bar N)(c_2+ \vt^2 \sint S)$ are true $P$-martingales on $\{s,\ldots,T\}$. Because  ${}^{s}\cE(\bar N)_s = 1$, this yields
\begin{align*}
	c_1+ \vt^1 \sint S_s &= E[ {}^{s}\cE(\bar N)_T(c_1+ \vt^1 \sint S_T) \mid \cF_s] \\
		&= E[ {}^{s}\cE(\bar N)_T(c_2+ \vt^2 \sint S_T) \mid \cF_s] = c_2+ \vt^2 \sint S_s.
\end{align*}
In particular, taking $s = 0$ gives $c_1 = c_2$. As $s \in \{0, \ldots, T\}$ is arbitrary, $\vt^1 \sint S$ and $\vt^2 \sint S$ are indistinguishable, so that $\vt^1 \eq{S} \vt^2$ and $(1,S)$ satisfies uniqueness of value processes.

Next, we show the existence of solutions to the MVH problem \eqref{eq:MVH} for each $H \in L^2$. As in step (b), consider once again the family $\mathcal E = ({}^s\mathcal E(\bar N))_{s \in \{0, \ldots, T\}}$, which is square-integrable by assumption. Let $\tau$ be a stopping time taking values in $\{0,\ldots, T\}$. Since ${}^s\mathcal E(\bar N)$ is a martingale by assumption for any $s \in \{0,\ldots, T\}$, so is
$1 +  \one_{\{\tau = s\}}({}^s\mathcal E(\bar N) - 1)$
as ${}^s\mathcal E(\bar N) - 1 = 0$ on $\{0, \ldots, s\}$. Thus we obtain that
$${}^\tau\mathcal E(\bar N) = 1 + \sum_{s = 0}^T \one_{\{\tau = s\}}\big({}^s\mathcal E(\bar N) - 1 \big)$$
is a martingale. As this holds for any stopping time $\tau$, the family $\mathcal E$ is so-called regular; see \cite[Definitions 3.4 and 3.6]{choullietal:1998}. Thus by Czichowsky/Schweizer \cite[Theorem 2.16]{czichowsky:schweizer:13}, the set $\mathcal G_T(S)$ is closed in $L^2$. This implies the existence of a solution to  the MVH problem \eqref{eq:MVH} for any payoff $H \in L^2$, since it can be seen as a projection problem in $L^2$. The uniqueness of value processes (and thus of gains processes) together with Proposition \ref{prop:UOGP} yields that the solution to \eqref{eq:MVH} is unique for each $H \in L^2$. Since $\eta^k \in \TB(S)$ by condition 3) of Definition \ref{def:equilibrium}, which we already showed in step (a),  it follows from Proposition \ref{lem:ind:opt} that there exists a unique solution $\hat \vt^k$ to  \eqref{eq:def:gen:quad:util:indiv:prob}  for each $k \in \{1,\ldots,K\}$, and thus condition 1) is satisfied.

It remains to check that $(1,S)$ satisfies condition 2) of Definition \ref{def:equilibrium}, for which we use the same argument as in the proof of Theorem~\ref{thm:equil}. By Lemma \ref{lem:MVH:zero} and since $\bar Z S^j$ is a local $P$-martingale for each $j \in \{1, \ldots, d_1 + d_2\}$, the strategy $0$ solves the MVH problem \eqref{eq:MVH} for $\bar H$. Thus $\sum_{k=1}^K \hat \vt^k = \bar \vt = \bar \eta$ by Lemmas \ref{lem:rep:agent} and  \ref{lem:rep:agent:rel to ind}, so that the market clears. This concludes the proof that $(1, S)$ is a quadratic equilibrium.
\end{proof}

\begin{proof}[Proof of Lemma \ref{lmm:sol:ind:opt:prob:mv:eff:chap:iv}]
Suppose by way of contradiction that $\hat \vt^k$ is not mean--variance efficient for agent $k$. Then there exists some $\vt'\in \TB(S)$ satisfying
\begin{align*}
E[\vt' \sint S_T + \Xi^k] &\geq E[(\hat \vt^k -\eta^k)  \sint S_T + \Xi^k] = E[V^k_T(\hat \vt^k)], \\
\Var [\vt' \sint S_T + \Xi^k] &\leq \Var [(\hat \vt^k -\eta^k) \sint S_T + \Xi^k] = \Var [V^k_T(\hat \vt^k) ], 
\end{align*}
 where one of the inequalities is strict. Since $U^{{\rm MV}}_k(\mu,\sigma) = \mu - \frac{\sigma^2}{2\lambda_k}$ is strictly increasing in $\mu \in \R$ and strictly decreasing in $\sigma \in \R_+$, we thus have
\begin{align*}
 &U^{{\rm MV}}_k\Big(E[V^k_T(\hat \vt^k)], \sqrt{\Var [V^k_T(\hat \vt^k) ]}\Big)< U^{{\rm MV}}_k\Big(E[V^k_T(\tilde \vt) ], \sqrt{\Var [V^k_T(\tilde \vt)]}\Big),
\end{align*}
where $\tilde \vt := \vt' + \eta^k \in \TB(S)$, and this contradicts the optimality of $\hat \vt^k$ for \eqref{eq:def:linear:mv:indiv:prob}.
\end{proof}

\begin{lemma}\label{lmm:mv:efficient:pure:inv:chap:iv}
Suppose the market $(1, S)$ satisfies Assumption \ref{assp:ind:opt:ck:l2}. A strategy $\vt \in \overline \Theta(S)$ is mean--variance efficient with respect to $H\equiv 0$ if and only if $\vt =_S y\vt^{{\rm MVH}}(1)$ for some $y \geq 0$. In that case, we have
	\begin{equation}\label{eq:exp:var:pure:inv:strat:forms:chap:iv} 
		E[\vt \sint S_T] = y(1-\ell) \quad \textrm{and} \quad \Var[\vt \sint S_T] = y^2 \ell(1-\ell).
	\end{equation}
\end{lemma}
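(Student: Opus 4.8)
The plan is to work at the level of the space of attainable terminal gains $\cG_T(S) = \{\vt \sint S_T : \vt \in \TB(S)\}$, which is a closed subspace of $L^2$ under Assumption \ref{assp:ind:opt:ck:l2}. Write $g^* := \vt^{\rm MVH}(1) \sint S_T$; by definition of the MVH problem \eqref{eq:MVH} for $H = 1$, the element $g^*$ is the $L^2$-orthogonal projection of the constant $1$ onto $\cG_T(S)$, so that $1 - g^* \perp \cG_T(S)$. First I would extract from this the identity $E[g\, g^*] = E[g]$ for every $g \in \cG_T(S)$ (pairing $1 - g^*$ against $g$), and in particular, taking $g = g^*$, the relations $E[(g^*)^2] = E[g^*]$ and $\ell = E[(g^* - 1)^2] = 1 - E[g^*]$. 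Together these give $E[g^*] = E[(g^*)^2] = 1 - \ell$ and $\Var[g^*] = \ell(1-\ell)$, and show $\ell \in (0,1]$ with $\ell < 1$ precisely when $g^* \neq 0$; here $\ell > 0$ because $1 \notin \cG_T(S)$ under an ELMM (every $g \in \cG_T(S)$ has $Q$-mean $0$, while $E_Q[1] = 1$). Since $\vt \eq{S} y\vt^{\rm MVH}(1)$ means $\vt \sint S_T = y g^*$, the two displayed formulas for $E[\vt \sint S_T]$ and $\Var[\vt \sint S_T]$ follow immediately by scaling.

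For the characterisation, assuming first $g^* \neq 0$ (equivalently $\ell < 1$), I would use the orthogonal decomposition $g = a g^* + h$ of an arbitrary $g \in \cG_T(S)$, where $a := E[g\,g^*]/E[(g^*)^2]$ and $h := g - ag^* \in \cG_T(S)$ satisfies $E[h\, g^*] = 0$. Using the mean identity above one computes $E[g] = a(1-\ell)$ and $\Var[g] = a^2 \ell(1-\ell) + E[h^2]$. From this the equivalence is read off directly: if $g$ is mean--variance efficient with respect to $H \equiv 0$, then $h = 0$ (otherwise $a g^*$ has the same mean and strictly smaller variance, dominating $g$) and $a \geq 0$ (otherwise $E[g] = a(1-\ell) < 0 = E[0]$ while $\Var[0] = 0 \le \Var[g]$, so $0$ dominates $g$); hence $g = a g^*$ with $a \ge 0$. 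Conversely, if $g = y g^*$ with $y \ge 0$, any competitor $g' = a' g^* + h'$ with $E[g'] \ge E[g]$ and $\Var[g'] \le \Var[g]$ must satisfy $a' \ge y$ (from the means, since $1 - \ell > 0$); then $(a')^2 \ell(1-\ell) + E[(h')^2] \le y^2\ell(1-\ell)$ together with $a' \ge y \ge 0$ and $\ell(1-\ell) > 0$ forces $a' = y$ and $h' = 0$, so $g' = g$ and no inequality is strict, proving $y g^*$ efficient. Finally I would transfer these statements about gains back to strategies: uniqueness of value processes (Proposition \ref{prop:LOP}, applied to the $L^2$-density martingale of the ELMM) and Proposition \ref{prop:UOVP link UOGP} yield uniqueness of gains processes, so $g = y g^* = (y\vt^{\rm MVH}(1)) \sint S_T$ implies $\vt \eq{S} y\vt^{\rm MVH}(1)$, as required.

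The remaining degenerate case $g^* = 0$, i.e.\ $\ell = 1$, I would treat separately: here $E[g] = E[g\, g^*] = 0$ for every $g \in \cG_T(S)$, so efficiency with respect to $H \equiv 0$ reduces to minimising $\Var[g] = E[g^2]$ over all gains, whose unique minimiser is $g = 0 = y g^*$ (for any $y \ge 0$), while both displayed formulas collapse to $0$. I expect the main obstacle to be organisational rather than deep: getting the orthogonal decomposition and the sign bookkeeping in the two domination arguments exactly right, and cleanly isolating the degenerate case $\ell = 1$, while phrasing uniqueness at the level of gains processes so that the conclusion $\vt \eq{S} y\vt^{\rm MVH}(1)$ (and not merely equality of terminal gains) is justified.
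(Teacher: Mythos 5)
Your argument is correct, but it takes a genuinely different route from the paper. The paper's proof is essentially a two-line citation: it identifies mean--variance efficiency with respect to $H\equiv 0$ with the equivalent conditions of Eberlein/Kallsen [Rule 10.43] and reads off both the characterisation (condition (e) there) and the moment formulas \eqref{eq:exp:var:pure:inv:strat:forms:chap:iv} (Rule 10.47). You instead give a self-contained Hilbert-space proof: viewing $g^*=\vt^{\rm MVH}(1)\sint S_T$ as the orthogonal projection of $1$ onto the closed subspace $\mathcal G_T(S)$, extracting the identities $E[g g^*]=E[g]$ and $\ell=1-E[g^*]=1-E[(g^*)^2]$, and running the two domination arguments on the decomposition $g=ag^*+h$. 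All the ingredients you use are available in the paper (closedness of $\mathcal G_T(S)$ under Assumption \ref{assp:ind:opt:ck:l2}, and uniqueness of gains processes via Propositions \ref{prop:LOP} and \ref{prop:UOVP link UOGP} to upgrade equality of terminal gains to $S$-equivalence of strategies), your sign bookkeeping in both directions is right, and you correctly isolate and dispose of the degenerate case $\ell=1$, which the cited textbook rules would also have to cover. What the paper's approach buys is brevity and delegation to a standard reference; what yours buys is transparency and independence from that reference, at the cost of essentially reproving the relevant special case of the textbook result.
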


\begin{proof}[Proof of Lemma \ref{lmm:mv:efficient:pure:inv:chap:iv}]
A strategy is mean--variance efficient with respect to $H\equiv 0$ in the sense of Definition \ref{def:mv:eff:strat:eff:frontier} if and only if it satisfies the equivalent conditions (a) and (b) of  Eberlein/Kallsen \cite[Rule 10.43]{eberlein:kallsen:19}. Thus the first assertion follows directly from the equivalence with condition (e) in \cite[Rule 10.43]{eberlein:kallsen:19}, and we obtain \eqref{eq:exp:var:pure:inv:strat:forms:chap:iv} from \cite[Rule 10.47]{eberlein:kallsen:19}.
\end{proof}

\begin{lemma}\label{lmm:formulas:decomp:exp:var:wealth}
	Suppose the market $(1, S)$ satisfies Assumption \ref{assp:ind:opt:ck:l2}. For any strategy $\vt \in \TB(S)$, we have
	\begin{align}
		E[V^k_T(\vt)] &=c_k + E\big[ \big((\vt-\eta^k + \vt^{\rm ex}(\Xi^k)\big) \sint S_T\big], \label{eq:efficient:strategy:mean:wealth:chap:iv} \\
		\Var [V^k_T(\vt)] &= \Var\big[\big(\vt -\eta^k  + \vt^{\rm ex}(\Xi^k)\big)\sint S_T\big] + \varepsilon^2_k,  \label{eq:efficient:strategy:var:wealth:chap:iv}
	\end{align}
 	where $(c_k,\vt^{\rm ex}(\Xi^k))$ is the unique solution to the exMVH problem \eqref{eq:MVHex} with $H=\Xi^k$ and $\varepsilon^2_k$ is given by \eqref{eq:def:eps:k}.
\end{lemma}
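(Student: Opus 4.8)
The plan is to exploit the orthogonal projection structure of the exMVH problem. Write $R_k := c_k + \vt^{\rm ex}(\Xi^k) \sint S_T - \Xi^k$ for the optimal hedging error, so that $\varepsilon_k^2 = E[R_k^2]$ by \eqref{eq:def:eps:k}. Under Assumption \ref{assp:ind:opt:ck:l2} the space $\R + \mathcal G_T(S)$ is closed in $L^2$, and the unique minimiser $(c_k,\vt^{\rm ex}(\Xi^k))$ of \eqref{eq:MVHex} for $H=\Xi^k$ realises the orthogonal projection of $\Xi^k$ onto it. Hence $R_k$ is orthogonal to $\R + \mathcal G_T(S)$; equivalently, arguing by first-order perturbations exactly as in the proof of Lemma \ref{lem:MVH:zero}, I would record the two conditions
$$E[R_k] = 0 \qquad \text{and} \qquad E\big[R_k \,(\psi \sint S_T)\big] = 0 \ \text{ for all } \psi \in \TB(S),$$
the first coming from varying $c$ and the second from varying $\vt$.

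Next I would rewrite the terminal wealth in terms of $R_k$. From the definition of $R_k$ we have $\Xi^k = c_k + \vt^{\rm ex}(\Xi^k) \sint S_T - R_k$, and substituting this into $V^k_T(\vt) = (\vt - \eta^k)\sint S_T + \Xi^k$ and using linearity of the stochastic integral yields
$$V^k_T(\vt) = c_k + \phi \sint S_T - R_k, \qquad \phi := \vt - \eta^k + \vt^{\rm ex}(\Xi^k).$$
Here $\phi \in \TB(S)$ because $\TB(S)$ is a vector space containing $\eta^k$ (the latter being a linear combination of the buy-and-hold strategies $e^j$, which lie in $\TB(S)$ by the standing assumption of the section). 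Taking expectations and using $E[R_k]=0$ gives $E[V^k_T(\vt)] = c_k + E[\phi \sint S_T]$, which is exactly \eqref{eq:efficient:strategy:mean:wealth:chap:iv} since $\phi = \vt - \eta^k + \vt^{\rm ex}(\Xi^k)$.

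For the variance, I would note that $c_k$ is constant, so
$$\Var[V^k_T(\vt)] = \Var\big[\phi \sint S_T - R_k\big] = \Var[\phi \sint S_T] + \Var[R_k] - 2\Cov\big[\phi \sint S_T, R_k\big].$$
Because $E[R_k]=0$ the covariance equals $E[(\phi \sint S_T)\,R_k]$, which vanishes by the second orthogonality condition as $\phi \in \TB(S)$; moreover $\Var[R_k] = E[R_k^2] = \varepsilon_k^2$. This gives \eqref{eq:efficient:strategy:var:wealth:chap:iv}. All quantities are finite since $\vt, \eta^k, \vt^{\rm ex}(\Xi^k) \in \TB(S)$ force the relevant stochastic integrals into $L^2$, and $R_k \in L^2$.

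The argument is short, and the only point requiring care is establishing the two orthogonality relations for $R_k$ together with the membership $\phi \in \TB(S)$. Neither is a genuine obstacle: the former is the standard first-order (projection) characterisation of the exMVH solution, already used in the excerpt, and the latter follows from the vector-space structure of $\TB(S)$ and the standing assumption $e^j \in \TB(S)$.
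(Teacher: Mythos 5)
Your proof is correct and follows essentially the same route as the paper's: the orthogonal decomposition $\Xi^k = c_k + \vt^{\rm ex}(\Xi^k)\sint S_T + \tilde\Xi^k$ with $\tilde\Xi^k = -R_k$ orthogonal to $\R + \mathcal G_T(S)$, followed by substitution into $V^k_T(\vt)$ and the mean/variance computation. The extra remarks on $\phi \in \TB(S)$ and the first-order perturbation view of the orthogonality are harmless elaborations of the same argument.
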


\begin{proof}[Proof of Lemma \ref{lmm:formulas:decomp:exp:var:wealth}]
	The definition of $(c_k,\vt^{\rm ex}(\Xi^k))$ yields that $c_k + \vt^{\rm ex}(\Xi^k) \sint S_T$ is the orthogonal projection of $\Xi^k$ onto the set
 $\{x + \vt \sint S_T: x \in \R, \vt \in \TB(S)\},$ which is closed in $L^2$ by \v Cern\' y/Kallsen \cite[Lemma 2.9]{cerny:kallsen:07} and Assumption \ref{assp:ind:opt:ck:l2}. Hence we obtain an orthogonal decomposition of the form
 \begin{equation}\label{eq:orth:decomp:xi:k:chap:iv:mv:analysis}
 	\Xi^k = c_k + \vt^{\rm ex}(\Xi^k) \sint S_T + \tilde \Xi^k
\end{equation}
 where $\tilde \Xi^k \in L^2$ is such that $E[\tilde \Xi^k] = E[(\tilde \vt \sint S_T)\tilde \Xi^k] = 0$ for all $\tilde \vt \in \TB(S)$ by the orthogonality.  Moreover, we have by \eqref{eq:def:eps:k} that $ \Var[\tilde \Xi^k] = \varepsilon^2_k.$
Therefore, plugging \eqref{eq:orth:decomp:xi:k:chap:iv:mv:analysis} into the formula \eqref{eq:wealth agent k} for $V^k_T(\vt)$ yields
	\begin{align*}
		E[V^k_T(\vt)] &= E [ (\vt-\eta^k) \sint S_T+ c_k + \vt^{\rm ex}(\Xi^k) \sint S_T + \tilde \Xi^k  ] =c_k + E\big[ \big((\vt-\eta^k + \vt^{\rm ex}(\Xi^k)\big) \sint S_T\big], \\
		\Var [V^k_T(\vt)] &= \Var[ (\vt-\eta^k) \sint S_T+ c_k + \vt^{\rm ex}(\Xi^k) \sint S_T + \tilde \Xi^k  ]  = \Var\big[\big(\vt -\eta^k  + \vt^{\rm ex}(\Xi^k)\big)\sint S_T\big] + \varepsilon^2_k,
	\end{align*}
	which shows \eqref{eq:efficient:strategy:mean:wealth:chap:iv} and \eqref{eq:efficient:strategy:var:wealth:chap:iv}.
\end{proof}

\begin{proof}[Proof of Proposition \ref{prop:mv:eff:strats:char:agent:k}]
	(a) $\Leftrightarrow$ (b): Since $c_k$ and $\varepsilon^2_k$ do not depend on the choice of $\vt$, it follows by \eqref{eq:efficient:strategy:mean:wealth:chap:iv} and \eqref{eq:efficient:strategy:var:wealth:chap:iv} together with Definition \ref{def:mv:eff:strat:eff:frontier}  that $\vt$ is mean--variance efficient for agent $k$ if and only if $\vt -\eta^k  + \vt^{\rm ex}(\Xi^k)$ is mean--variance efficient with respect to $0$. By Lemma \ref{lmm:mv:efficient:pure:inv:chap:iv}, the latter statement is equivalent to
	$$\vt -\eta^k  + \vt^{\rm ex}(\Xi^k) \eq{S} y \vt^{{\rm MVH}}(1)$$
	for some $y \geq 0$. Thus we have (a) $\Leftrightarrow$ (b). 
	
	(b) $\Leftrightarrow$ (c): This was already shown in Proposition \ref{prop:ind:opt:decomp}, where $y = \gamma_k-c_k$.
	
	(c) $\Leftrightarrow$ (d): By Lemma \ref{lem:ind:opt}, $\vt$ is a solution to the quadratic utility problem \eqref{eq:def:gen:quad:util:indiv:prob} with risk tolerance $\gamma_k$ if and only if $\vt - \eta^k$ is a solution to the MVH problem  \eqref{eq:MVH} for $H^k(\gamma_k) = \gamma_k - \Xi^k$; in particular, the solution to \eqref{eq:def:gen:quad:util:indiv:prob} is unique by the uniqueness of the solution to \eqref{eq:MVH} for $H^k(\gamma_k)$. This shows (c) $\Leftrightarrow$ (d) and concludes the proof.
\end{proof}

\begin{proof}[Proof of Lemma \ref{lmm:expl:opt:str:linear:mv}]	Recall the definition \eqref{eq:def:lin:mv} of $\mathcal U^{\rm MV}_k$ and $U^{\rm MV}_k$. We claim that $\hat y_k := \lambda_k/\ell$ is a maximiser for the problem
	\begin{equation}\label{eq:maximisation:prob:yk:chap:iv}
	U^{\rm MV}_k \bigl( \mu_k(y),  \sigma_k(y)\bigr) \longrightarrow \max_{y \geq 0}!
	\end{equation}
	Indeed, by plugging in we obtain
	$$U^{\rm MV}_k \bigl( \mu_k(y),  \sigma_k(y)\bigr) = c_k + (1-\ell) y- \frac{\varepsilon_k^2 + \ell (1-\ell) y^2}{2\lambda_k},$$
	which is a concave quadratic function of $y$, so that we obtain the minimiser $\hat y_k = \lambda_k/\ell$ by differentiating; this minimiser is unique if and only if $\ell \neq 1$. This in turn gives
	\begin{equation}
		\cU^{{\rm MV}}_k\Big(V^k_T\big(\vt^k(\hat y_k)\big)\Big) = U^{\rm MV}_k \bigl( \mu_k(\hat y_k),  \sigma_k(\hat y_k)\bigr) \geq U^{\rm MV}_k \bigl( \mu_k(y),  \sigma_k(y)\bigr) = \cU^{{\rm MV}}_k\Big(V^k_T\big(\vt^k(y)\big)\Big) \label{eq:vt:hat:y:optimal:among:mv:eff}
	\end{equation}
	for all $y \geq 0$, with equality only in the cases $y = \hat y_k$ or $\ell = 1$. Thus by Proposition \ref{prop:mv:eff:strats:char:agent:k}, the strategy $\vt^k(\hat y_k)$  is a maximiser of \eqref{eq:def:linear:mv:indiv:prob} within the set of mean--variance efficient strategies.
	
	It remains to show that $\vt^k(\hat y_k)$ is also the unique solution to \eqref{eq:def:linear:mv:indiv:prob} among all admissible strategies. To that end, let $\vt \in \TB(S)$ be any other strategy. Suppose first that $\ell \neq 1$, so that $0 < \ell < 1$. Then by \eqref{eq:efficient:strategy:var:wealth:chap:iv}, the mean--variance efficient strategy $\vt' = \vt^k(y)$ for agent $k$ with
$$y := \sqrt{\frac{\Var[(\vt -\eta^k  + \vt^{\rm ex}(\Xi^k))\sint S_T] }{\ell(1-\ell)}}$$
satisfies $\Var [V^k_T(\vt')] = \Var [V^k_T(\vt)]$. Since $\vt'$ is mean--variance efficient, we have $E [V^k_T(\vt')] \geq E [V^k_T(\vt)]$. By \eqref{eq:vt:hat:y:optimal:among:mv:eff} and as $ U^{{\rm MV}}_k$ is strictly increasing in $\mu$, it  follows that
\begin{equation}
\cU^{{\rm MV}}_k\big(V^k_T(\vt)\big) \leq \cU^{{\rm MV}}_k\Big(V^k_T\big(\vt^k(y)\big)\Big) \leq \cU^{{\rm MV}}_k\Big(V^k_T\big(\vt(\hat y_k)\big)\Big). \label{eq:vt:hat:y:optimal:among:all:strats}
\end{equation}
Since $\vt \in \TB(S)$ is arbitrary, this shows that $\vt^k(\hat y_k)$ is indeed a solution to \eqref{eq:def:linear:mv:indiv:prob}. Moreover, the two inequalities in \eqref{eq:vt:hat:y:optimal:among:all:strats} are equalities only if $\vt$ is mean--variance efficient and $y = \hat y_k$, in which case $\vt \eq{S} \vt^k(y) \eq{S} \vt^k(\hat y_k)$. Thus $\vt^k(\hat y_k)$ is the unique solution to \eqref{eq:def:linear:mv:indiv:prob} in the case $\ell \neq 1$.

Finally, in the case $\ell = 1$, i.e., if $0 \in \TB(S)$ is a solution to the pure investment problem, we have by Lemma \ref{lem:MVH:zero} that $S$ is a local $P$-martingale. In this case, there exists a unique mean--variance efficient $\hat \vt = \vt^k(\hat y_k) = \vt^k(y)$ for all $y \geq 0$. For any strategy $\vt \in \TB(S)$, we have by Proposition \ref{prop:LOP} that $(\vt -\eta^k  + \vt^{\rm ex}(\Xi^k))\sint S$ is a $P$-martingale, so that the second term in  \eqref{eq:efficient:strategy:mean:wealth:chap:iv} is null. Hence $E[V^k_T(\vt)] = E[V^k_T(\hat \vt)  ]$ and $\Var [V^k_T(\vt)] \geq \Var [V^k_T(\hat \vt)]$, with equality if and only if $\vt$ is mean--variance efficient, i.e., only in the case $\vt \eq{S} \hat \vt$. Since $ U^{{\rm MV}}_k$ is strictly decreasing in $\sigma$, it follows that $\hat \vt = \vt(\hat y_k)$ is the unique solution to  \eqref{eq:def:linear:mv:indiv:prob} also in the case $\ell = 1$.
\end{proof}

\begin{proof}[Proof of Lemma \ref{lmm:char:equil:market:quad:equil:tilde:gamma}]
	Since $(1,S)$ is a mean--variance equilibrium in the sense of Definition \ref{def:equilibrium}, there exists for each agent $k \in \{1, \ldots, K\}$  a unique solution $\hat \vt^k$ to \eqref{eq:def:linear:mv:indiv:prob}. Lemma \ref{lmm:expl:opt:str:linear:mv} yields $\hat \vt^k = \vt^k(\lambda_k/\ell)$, where $\vt^k$ is defined by \eqref{eq:def:vt:k:expl:mv:eff:strat}. Thus by (b) $\Leftrightarrow$ (d) in Proposition \ref{prop:mv:eff:strats:char:agent:k},  $\hat \vt^k$  is also the unique solution to the quadratic utility problem  \eqref{eq:def:gen:quad:util:indiv:prob} with $\gamma_k := c_k + \lambda_k/\ell$. Therefore, since $(1,S)$ is a mean--variance equilibrium by assumption, it also satisfies all of the conditions in Definition \ref{def:equilibrium} for a quadratic equilibrium with individual risk tolerances $\gamma_1, \ldots,\gamma_K$.
\end{proof}

\begin{proof}[Proof of Lemma \ref{lmm:transl:quad:equil:eqs:to:chap:iv}]

As in Lemma \ref{lmm:quad:eq:mark:zs:is:loc:mart}, define the martingale $ (\bar Z_t(\bar \gamma))_{0 \leq t \leq T}$ by $\bar Z_t(\bar \gamma) = E[\bar H(\bar \gamma) \mid \mathcal F_t].$ Note that we have the bounds $\bar \gamma - \bar \gamma_0 \leq \bar H(\bar \gamma) \leq \bar \gamma$, and hence also
$$0 < \bar \gamma - \bar \gamma_0 \leq \bar Z_t(\bar \gamma) \leq \bar \gamma$$
for all $t \in [0,T]$. In particular,  the process $\bar Z(\bar \gamma)$ is strictly positive and never hits $0$. Moreover, condition (b) in Lemma \ref{lem:quad equil:suf cond} is satisfied, so that $S^{(2)}(\bar \gamma)$ is an $L^2$-semimartingale. We can then apply Theorem \ref{thm:equil}, which yields that $S(\bar \gamma)$ is the unique quadratic equilibrium with respect to any choice of parameters $\gamma_1, \ldots, \gamma_K$ such that $\sum_{k=1}^K \gamma_k = \bar \gamma$. Lemma \ref{lmm:quad:eq:mark:zs:is:loc:mart} also gives that $\bar Z(\bar \gamma) S(\bar \gamma)$ is a local $P$-martingale.  Thus $Q(\bar \gamma)$ is a local martingale measure for $S(\bar \gamma)$ such that $Q(\bar \gamma) \approx P$, and $dQ(\bar \gamma)/dP$ is bounded because $\bar H(\bar \gamma)$ is strictly positive and bounded. This also implies that $S(\bar \gamma)$ satisfies Assumption \ref{assp:ind:opt:ck:l2}, which concludes the proof.
\end{proof}

\begin{proof}[Proof of Lemma \ref{lmm:distinct:proc:bar:gamma}]
	By Lemma \ref{lmm:transl:quad:equil:eqs:to:chap:iv}, $(1,S(\bar \gamma))$ is the unique quadratic equilibrium with aggregate risk tolerance $\bar \gamma$ and  $\bar Z(\bar \gamma) S(\bar \gamma)$ is a local martingale. This shows the ``if'' statement. To prove the converse, suppose for a contradiction that $(1,S(\bar \gamma))$ is also a quadratic equilibrium with respect to some risk tolerances $\gamma_1, \ldots, \gamma_K$ such that $\sum_{k=1}^K \gamma_k =: \bar \gamma' \neq \bar \gamma$.  Then by Lemma \ref{lmm:quad:eq:mark:zs:is:loc:mart}, the process $\bar Z(\bar \gamma') S(\bar \gamma)$  is also a local martingale. Taking differences yields that
	 $$\big(\bar Z(\bar \gamma')-\bar Z(\bar \gamma) \big) S(\bar \gamma) = (\bar \gamma'-\bar \gamma) S(\bar \gamma)$$
	is a local martingale as well, and so is $S(\bar \gamma)$ because $\bar \gamma' \neq \bar \gamma$. Thus the implication (a) $\Rightarrow$ (d) in  \cite[Lemma 2.27]{martins:23} leads to a contradiction of Assumption \ref{assp:no:orth:nontriv:chap:iv}, so that $\bar \gamma' \neq \bar \gamma$ cannot hold. This concludes the proof of the equivalence. The second statement follows from the first, since if $S(\bar \gamma) = S(\bar \gamma')$, then $(1,S(\bar \gamma))$ is the unique quadratic equilibrium with aggregate risk tolerance $\bar \gamma'$ by Lemma \ref{lmm:transl:quad:equil:eqs:to:chap:iv} so that $\bar \gamma = \bar \gamma'$.
\end{proof}

\begin{proof}[Proof of Proposition \ref{prop:fixed:point:eq:mv:eq}]
We start by proving ``only if''. If $(1,S(\bar \gamma))$ is a mean--variance equilibrium, we have by Lemma \ref{lmm:char:equil:market:quad:equil:tilde:gamma} that it is also a quadratic equilibrium with respect to the risk tolerances given by $\bar \gamma_k(\bar \gamma) := c_k(\bar \gamma)+ \lambda_k/\ell(\bar \gamma)$. But then Lemma \ref{lmm:distinct:proc:bar:gamma} yields $\bar \gamma = \tilde \gamma(\bar \gamma)$ as claimed.

To show the ``if'' statement, suppose that $\bar \gamma = \tilde \gamma(\bar \gamma)$. Then by Lemma \ref{lmm:distinct:proc:bar:gamma}, $(1,S(\bar \gamma))$ is the unique quadratic equilibrium with risk tolerances $\gamma_k(\bar \gamma) := c_k(\bar \gamma)+ \lambda_k/\ell(\bar \gamma)$. By Definition \ref{def:equilibrium}, there exist unique solutions $\hat \vt^k$ to the individual problems \eqref{eq:def:gen:quad:util:indiv:prob} with $\gamma_k = \gamma_k(\bar \gamma)$. By the equivalence (b) $\Leftrightarrow$ (d) in Proposition \ref{prop:mv:eff:strats:char:agent:k}, we have $\hat \vt^k = \vt^k( \lambda_k/\ell(\bar \gamma))$, and hence by Lemma \ref{lmm:expl:opt:str:linear:mv}, $\hat \vt^k$ is also the unique solution to the mean--variance problem \eqref{eq:def:linear:mv:indiv:prob}. Thus $(1,S(\bar \gamma))$ satisfies condition 1) of Definition \ref{def:equilibrium} of a mean--variance equilibrium. Since $(1,S(\bar \gamma))$ is a quadratic equilibrium, the remaining conditions of Definition \ref{def:equilibrium} are also satisfied so that  $(1,S(\bar \gamma))$ is a mean--variance equilibrium.
\end{proof}

\begin{proof}[Proof of Proposition \ref{prop:special:prop:quad:eq.opp:proc:bar:z}]
	Fix  $\bar \gamma > \bar \gamma_0$ 
	and write $\vt^{(0)}(\bar \gamma)$ as a shorthand for $\vt^{(0)}(1;S(\bar \gamma))$. Recall from Lemma \ref{lmm:transl:quad:equil:eqs:to:chap:iv} that $\bar Z(\bar \gamma) S(\bar \gamma)$ is a local $P$-martingale, where $\bar Z(\bar \gamma)$ is a  strictly positive $P$-martingale. Since $\vt^{(0)}(\bar \gamma) \in \TB(S(\bar \gamma))$, Proposition \ref{prop:LOP} yields that $\bar Z(\bar \gamma) (\vt^{(0)}(\bar \gamma) \sint S(\bar \gamma))$ is a true $P$-martingale, and hence so is $\bar Z(\bar \gamma) (1-\vt^{(0)}(\bar \gamma) \sint S(\bar \gamma))$. As mentioned above \eqref{eq:def:mart:assoc:mv:proc:chap:iv}, $(\bar V_s(\bar \gamma) M^{(0)}_s(\bar \gamma))_{0 \leq s \leq T}$ is also a $P$-martingale. Moreover, by \eqref{eq:def:mart:assoc:mv:proc:chap:iv} and as $L_T(\bar \gamma) = 1$, we have
	$$\bar V_T(\bar \gamma) M^{(0)}_T(\bar \gamma) = \bar H(\bar \gamma) L_T(\bar \gamma) \Big(1- \big(\vt^{(0)}(\bar \gamma) \sint S(\bar \gamma)\big)_T\Big) = \bar Z_T(\bar \gamma)  \Big(1- \big(\vt^{(0)}(\bar \gamma) \sint S(\bar \gamma)\big)_T\Big).$$
	Thus by taking expectations under $P$, we obtain
	$$\bar V_0(\bar \gamma) M^{(0)}_0(\bar \gamma) = \bar Z_0(\bar \gamma) \Big(1- \big(\vt^{(0)}(\bar \gamma) \sint S(\bar \gamma)\big)_0\Big) =  \bar Z_0(\bar \gamma).$$
	Recall that $\bar V_0(\bar \gamma) = \bar \gamma - c(\bar \Xi;S(\bar \gamma))$ and $M^{(0)}_0(\bar \gamma) = L_0(\bar \gamma) = \ell(\bar \gamma)$ by \eqref{eq:def:mart:assoc:mv:proc:chap:iv}, whereas we have $\bar Z_0(\bar \gamma) = E_P[\bar H(\bar \gamma)] = \bar \gamma-E_P[\bar \Xi]$. Plugging in yields
	$$ \bar \gamma -E_P[\bar \Xi] =  \Big(\bar \gamma-c\big(\bar \Xi;S(\bar \gamma)\big)\Big) \ell(\bar \gamma).$$
	Since the linearity of exMVH (see Lemma \ref{lmm:mvh:prob:linearity})  yields
	\begin{equation*}
	 c\big(\bar \Xi ;S(\bar \gamma)\big) = \sum_{k=1}^K c\big( \Xi^k ;S(\bar \gamma)\big) = \sum_{k=1}^K c_k(\bar \gamma) = \bar c(\bar \gamma),
	 \end{equation*}
	 we obtain \eqref{eq:rel:ell:bar:gamma:bar:c:bar:gamma}, which concludes the proof.
\end{proof}

\begin{proof}[Proof of Theorem \ref{thm:linear:case:exp:mv:equil}]
	Let $\gamma' > \bar \gamma_0$ and define $\ell(\bar \gamma')$ and $c_k(\bar \gamma')$ with respect to $S = S(\bar \gamma')$ in the same way as below Assumption \ref{assp:ind:opt:ck:l2}. By Proposition \ref{prop:fixed:point:eq:mv:eq}, $(1,S(\bar \gamma'))$ is a mean--variance equilibrium if and only if $\bar \gamma'$ satisfies the fixed-point condition $\tilde \gamma(\bar \gamma') = \bar \gamma'$, i.e., if and only if
	\begin{equation}\label{eq:lin:case:fp:eq:explicit:proof:lin:case:rearr}
		\big(\bar \gamma' - \bar c(\bar \gamma')\big)\ell(\bar \gamma') = \sum_{k=1}^K \lambda_k,
	\end{equation}
where we recall $\bar c(\bar \gamma') := \sum_{k=1}^K c_k(\bar \gamma')$. Moreover, Proposition \ref{prop:special:prop:quad:eq.opp:proc:bar:z} gives
	\begin{equation}\label{eq:special:prop:quad:eq:time:0:app}
		\bar \gamma' - E_P[\bar \Xi] = \big(\bar \gamma' - \bar c(\bar \gamma') \big) \ell(\bar \gamma').
\end{equation}
	By plugging \eqref{eq:special:prop:quad:eq:time:0:app} into \eqref{eq:lin:case:fp:eq:explicit:proof:lin:case:rearr}, we conclude that $\bar \gamma' \in (\bar \gamma_0,\infty)$ is a fixed point if and only if
	$\bar \gamma' - E_P[\bar \Xi]  = \sum_{k=1}^K \lambda_k,$
	i.e., if and only if $\tilde \gamma = \bar \gamma$. Therefore $\bar \gamma$ defined by \eqref{eq:def:expl:sol:bar:gamma:lin:case} is the only possible solution to the condition $\tilde \gamma(\bar \gamma') = \bar \gamma'$, and it is indeed a solution if $\bar \gamma > \bar \gamma_0$. In that case, we have by Proposition \ref{prop:fixed:point:eq:mv:eq} that $(1,S(\bar \gamma))$ is a mean--variance equilibrium.
\end{proof}

%================================================================================
\newpage
\bibliographystyle{amsplain}
\bibliography{CAPM_05_08_24.bib}
\end{document}